\newcommand{\pref}{\succ}
\newcommand{\score}{{{\mathrm{score}}}}
\newcommand{\naturals}{{{\mathbb{N}}}}
\newcommand{\skippedproof}[1]{}
\newcommand{\np}{{\mathrm{NP}}}
\newtheorem{theorem}{Theorem}[section]
\newtheorem{corollary}[theorem]{Corollary}
\newtheorem{lemma}[theorem]{Lemma}
\newtheorem{remark}{Remark}
\newtheorem{observation}{Observation}
\newtheorem{definition}{Definition}[section]
\newtheorem{example}{Example}[section]
\newtheorem{proposition}[theorem]{Proposition}
\newcommand{\calR}{\mathcal{R}}
\newcommand{\calS}{\mathcal{S}}
\newcommand{\cc}{{{{\mathrm{CC}}}}}
\newcommand{\topkcc}{{{{\alpha_k\hbox{-}\mathrm{CC}}}}}
\newcommand{\bordacc}{{{{\beta\hbox{-}\mathrm{CC}}}}}
\newcommand{\topkpav}{{{{\alpha_k\hbox{-}\mathrm{PAV}}}}}
\newcommand{\toptpav}{{{{\alpha_t\hbox{-}\mathrm{PAV}}}}}
\newcommand{\qbordapav}{{{{q\hbox{-}\mathrm{HB}}}}}
\newcommand{\ellpborda}{{{{\ell_p\hbox{-}\mathrm{Borda}}}}}
\newcommand{\sntv}{{{{\mathrm{SNTV}}}}}
\newcommand{\bloc}{{{{\mathrm{Bloc}}}}}
\newcommand{\kborda}{{{{k\hbox{-}\mathrm{Borda}}}}}
\newcommand{\perf}{{{\mathrm{Perf}}}}
\newcommand{\pos}{{{{\mathrm{pos}}}}}
\def\R{\mathbb{R}}
\def\N{\mathbb{N}}
\def\row#1#2{{#1}_1,\ldots ,{#1}_{#2}}
\def\2vec#1#2{\left(\begin{array}{c}{#1}\\{#2}\end{array}\right)}
\newcommand{\reals}{\mathbb{R}}
\title{ %
Committee Scoring Rules: Axiomatic Characterization and  Hierarchy
  \thanks{A preliminary version of this paper appeared in the \emph{Proceedings of the 25th International Joint Conference on Artificial Intelligence (IJCAI-16)~\cite{preliminary}}.}}
\author{
  Piotr Faliszewski\\ 
  AGH University\\ 
  Krakow, Poland \\
  \texttt{faliszew@agh.edu.pl}
\and 
  Piotr Skowron\\
  Technische Universit\"at Berlin\\
  Berlin, Germany\\
  \texttt{p.k.skowron@gmail.com}
\and
  Arkadii Slinko\\
  University of Auckland\\
  Auckland, New Zealand\\
  \texttt{a.slinko@auckland.ac.nz}
\and 
  Nimrod Talmon\\
  Weizmann Institute of Science\\
  Rehovot, Israel\\
  \texttt{nimrodtalmon77@gmail.com}
}
\begin{document}

\maketitle

\begin{abstract}
  Committee scoring voting rules are multiwinner analogues of positional
  scoring rules which constitute an important subclass of
  single-winner voting rules.  We identify several natural subclasses
  of committee scoring rules, namely, weakly separable,
  representation-focused, top-$k$-counting, OWA-based, and
  decomposable rules. We characterize SNTV, Bloc, and $k$-Approval
  Chamberlin--Courant as the only nontrivial rules in pairwise
  intersections of these classes.  We provide some axiomatic
  characterizations for these classes, where monotonicity properties
  appear to be especially useful.  The class of decomposable rules is
  new to the literature. We show that it strictly contains the class
  of OWA-based rules and describe some of the applications of decomposable rules.
\end{abstract}

\section{Introduction}

Axiomatic studies of multiwinner voting rules
go back to Felsenthal and Maoz~\cite{fel-mao:j:norms} and
Debord~\cite{deb:j:k-borda}, but a systematic work on the topic has began only
recently and on several different fronts.  New results appear within social choice theory,
computer science, artificial intelligence, and a number of other
fields (see the work of Faliszewski et al.~\cite{fal-sko-sli-tal:b:multiwinner-trends} for more
details on the history as well as recent progress).
The reason for this explosion of interest from a number of research
communities is the wide range of applications of
multiwinner %
voting rules on the one hand, and the corresponding richness and
diversity of the spectrum of those rules on the other.  Typically,
social-choice theorists study normative properties of various
multiwinner rules, computer scientists investigate feasibility of computing the
election results, and researchers working within artificial
intelligence use multiwinner elections as a versatile tool (e.g.,
useful in genetic
algorithms~\cite{fal-saw-sch-smo:j:multiwinner-genetic}, for ranking
search results~\cite{sko-lac-bri-pet-elk:c:proportional-rankings}, or
for providing personalized
recommendations~\cite{bou-lu:c:chamberlin-courant}).  Yet, there is a
growing interplay between these areas and an increased need for a new
level of comprehension of results obtained in all of
them. %
In this paper we partially address this need by linking syntactic
features of certain families of committee scoring rules with their
normative properties.  The syntactic features of the rules are useful,
e.g., for establishing their computational
properties~\cite{sko-fal-lan:c:collective,fal-sko-sli-tal:c:top-k-counting},
or for viewing those rules as achieving certain optimization goals (which
allows one to consider these rules as tools for certain tasks from
artificial intelligence and operation research).
The normative properties, on the other hand, are useful for
understanding the `behavior' of these rules and the settings for which
they may be appropriate.

The model of multiwinner elections studied in this paper is as
follows. We are given a set of candidates, a collection of
voters---each with a preference order in which the candidates are ranked from the best
to the worst---and an integer~$k$. %
A multiwinner rule 
maps this input to a subset of $k$ candidates (i.e., a committee; we discuss tie-breaking later)
that, in some sense, best reflects the voters' preferences. 
For example, the Single Non-Transferable Vote rule (the SNTV rule)
chooses $k$ candidates that are top-ranked most frequently, whereas
the Bloc rule selects $k$ candidates that are ranked most frequently among top~$k$
positions  (equivalently, under Bloc each voter names
members of his or her favorite committee, and those that are mentioned
most often are selected).
Naturally, there are many other multiwinner rules to choose from, 
defined in various ways.

In this paper we
focus on the class of committee scoring rules, 
introduced by Elkind et
al.~\cite{elk-fal-sko-sli:j:multiwinner-properties} as multiwinner
generalizations of classic positional scoring rules. The main idea of
committee scoring rules is essentially the same as in the
single-winner case: Each voter gives each committee a score based on
the positions of members of this committee in the voter's ranking, 
scores from individual voters are aggregated into the societal scores
of the committees, and the committee(s) with the highest score wins.
Committee scoring rules appear to form a remarkably rich class that
includes both very simple rules, such as SNTV and Bloc, and rather
sophisticated ones, such as the rule of Chamberlin and
Courant~\cite{cha-cou:j:cc} or variants of the Proportional Approval
Voting rule~\cite{kil-handbook}. As these rules tend to be very different in nature,
they are suitable for different purposes,
such as selecting a diverse committee, selecting a committee that
proportionally represents the electorate, or selecting a committee
consisting of $k$ individually best candidates.
This richness is the main strength of the class of committee scoring
rules, but to choose rules for given settings wisely, it is important
to understand the internal structure of the class.
Understanding this structure  is the main goal of the current paper.

So far, researchers have identified the following subclasses of
committee scoring rules (we provide their formal definitions in
Sections~\ref{sec:prelim} and~\ref{sec:classification}; here we give
intuitions only).  (Weakly) separable rules, introduced by Elkind et
al.~\cite{elk-fal-sko-sli:j:multiwinner-properties}, are those rules
where we compute a separate score for each candidate (using a
single-winner scoring rule) and then pick $k$ candidates with the top
scores (for example, using Plurality scores leads to
SNTV).\footnote{If the underlying single-winner scoring rule does not
  depend on the size of the committee (as in the case of SNTV) then
  the rule is referred to as separable. If there is such dependence
  (as in the case of Bloc), then the rule is weakly separable.}
Representation-focused rules, also introduced by Elkind et
al.~\cite{elk-fal-sko-sli:j:multiwinner-properties}, are similar in
spirit to the Chamberlin--Courant rule, whose aim is to ensure that in the
elected committee each voter's most preferred committee member (his or
her representative) is ranked as high as possible. On the other
hand, top-$k$-counting rules, introduced by Faliszewski et
al.~\cite{fal-sko-sli-tal:c:top-k-counting}, capture rules where each
voter evaluates the quality of a committee by the number of members of that committee 
that he or she ranks among the top $k$ ones;
Bloc is a
prime example of a top-$k$-counting rule.  
Finally, the class of OWA-based rules---introduced by Skowron et
al.~\cite{sko-fal-lan:c:collective}, also studied in the approval-based
election
model~\cite{azi-bri-con-elk-fre-wal:j:justified-representation,azi-gas-gud-mac-mat-wal:c:approval-multiwinner,lac-sko:t:approval-thiele}---contains
all the previously mentioned classes. Under these rules a voter
calculates the score of a committee as
the \emph{ordered weighted average} (OWA) of the scores of
the candidates in that committee.\footnote{See the
  original work of Yager~\cite{yag:j:owa} for a general discussion of
  OWAs, and, e.g., the works of Kacprzyk et
  al.~\cite{kac-nur-zad:b:owa-social-choice} or Goldsmith et
  al.~\cite{gol-lan-mat-per:c:rank-dependent} for their other
  applications in voting.}
In this paper we also introduce the class of \emph{decomposable}
committee scoring rules that strictly contains all the OWA-based ones,
has interesting applications, and appears to be easier to work with axiomatically.

All these classes have been defined purely in terms of the syntactic
features of the functions used to calculate the scores of the
committees.\footnote{%
  A notable exception is the class of top-$k$-counting
  rules which were discovered while characterizing those committee scoring
  rules that satisfy the fixed-majority
  property~\cite{fal-sko-sli-tal:c:top-k-counting}.}  These syntactic
features are important if, for example, one wants to assess
some computational properties of the rules (e.g., it is known that weakly
separable rules are polynomial-time
computable~\cite{elk-fal-sko-sli:j:multiwinner-properties}, that
representation-focused rules tend to be $\np$-hard to
compute~\cite{pro-ros-zoh:j:proportional-representation,bou-lu:c:chamberlin-courant,sko-fal-lan:c:collective},
and that the structure of the functions used within OWA-based rules
affects the ability to compute their results
approximately~\cite{sko-fal-lan:c:collective}). Such syntactic features are also
essential when we view committee scoring rules as specifying
optimization goals for particular applications
(for example, since under the Chamberlin--Courant rule each voter's
score depends solely on his or her representative in the elected
committee, this rule is particularly suitable in the context of
deliberative democracy~\cite{cha-cou:j:cc}, for targeted
advertising~\cite{bou-lu:c:chamberlin-courant,bou-lu:c:value-directed-cc},
or for certain facility location
problems~\cite{far-hek:b:facility-location}).  Nonetheless, these
syntactic features do not tell us much about the behavior of
the rules.

Our first result reinforces the syntactic hierarchy of committee
scoring rules. We show that the class of committee scoring rules strictly contains
the class of decomposable rules, which, in turn, strictly contains the class of
OWA-based ones, and that the class of OWA-based rules strictly
contains the classes of (weakly) separable rules,
representation-focused rules, and top-$k$-counting rules. For each
pair of the latter three classes, we show that their intersection
contains exactly one, previously-known, non-trivial voting rule.  See
Figure~\ref{fig:hierarchy} for a visualization of the syntactic
hierarchy of committee scoring rules.

Our second, and the main, result establishes a link between several levels
of the syntactic hierarchy and respective normative properties. In
other words, we establish axiomatic characterizations of some of the
studied subclasses of committee scoring rules.
Until now, the only result of this form,
which is due to Faliszewski et
al.~\cite{fal-sko-sli-tal:c:top-k-counting}, was a characterization of
fixed-majority consistent committee scoring rules as those
top-$k$-counting rules whose scoring functions satisfy (a relaxed
variant of) the convexity property.
Here, our main result is that many of the syntactic properties of our
rules nicely correspond to certain types of
monotonicity. Specifically, we focus on the \emph{committee
  enlargement monotonicity}\footnote{This notion is also known as
  \emph{committee
    monotonicity}~\cite{elk-fal-sko-sli:j:multiwinner-properties} and
  \emph{enlargement
    consistency}~\cite{bar-coe:j:non-controversial-k-names}. We chose
  a name that is more informative than the former, but which is not
  tied to the realm of resolute rules, as the latter.  In the
  literature on apportionment rules, a related property is often called
  \emph{house monotonicity}~\cite{Puke14a,
    bal-you:b:polsci:representation}.} property, which requires that
if we increase the size of the committee sought in the election, then the new
winning committee should be a superset of the old winning committee,
and on variants of the \emph{non-crossing monotonicity} property,
which requires that if we shift forward some members of a
winning committee within any vote in a way that does not affect the
positions of the remaining members of this committee, then this
committee should still win.
We show that committee enlargement monotonicity characterizes exactly
the class of separable rules among committee scoring rules, and that
non-crossing monotonicity characterizes the class of weakly separable
ones.  Then we introduce top-member monotonicity (a variant of
non-crossing monotonicity restricted within each vote to shifting only
the highest-ranked member of the winning committee) and show that
together with narrow-top consistency (which requires that if there are
at most $k$ candidates that are ever ranked in the top position within
a vote, then these candidates should belong to the winning committee)
it characterizes the class of representation-focused rules.  Finally,
we show that if a committee rule is prefix-monotone (i.e., satisfies a
yet another restricted variant of non-crossing monotonicity) then it
must be decomposable.

The paper is organized as follows. In Section~\ref{sec:prelim} we
describe the model of multiwinner elections, define the class of
committee scoring rules, provide their basic properties, and show
several examples of committee scoring rules.
Section~\ref{sec:classification} is devoted to structural properties
of classes of committee scoring rules. Here we build the
hierarchy of the classes and show results regarding
containments and intersections among these classes.
In Section~\ref{sec:axiomatic_properties} we switch to axiomatic
properties of the rules in the classes of the hierarchy and give
several axiomatic characterizations of those classes. 
Finally, we discuss related work in Section~\ref{sec:related} and
conclude in Section~\ref{sec:concl}.
\section{Multiwinner Elections and Committee Scoring Rules}
\label{sec:prelim}
In this section we set the stage for the discussions provided
throughout the rest of the paper by providing preliminary definitions
as well as introducing the class of committee scoring rules.  For each
positive integer $t$, we write $[t]$ to denote $\{1, \ldots, t\}$. By
$\reals_{+}$ we mean the set of nonnegative real numbers.

\subsection{Preliminaries}

An election is a pair $E = (C,V)$, where $C = \{c_1, \ldots, c_m\}$ is
a set of candidates and $V = (v_1, \ldots, v_n)$ is a collection of
voters. Each voter $v_i$ has a \emph{preference order} $\succ_i$,
expressing his or her ranking of the candidates, from the most
desirable one to the least desirable one.  Given a voter $v$ and a
candidate $c$, by $\pos_v(c)$ we mean the position of $c$ in $v$'s
preference order (the top-ranked candidate has position $1$, the next
one has position $2$, and so on).

A \emph{multiwinner voting rule} is a function $\calR$ that given an
election $E = (C,V)$ and a committee size $k$, $1 \leq k \leq |C|$,
returns a family $\calR(E,k)$ of size-$k$ subsets of $C$, i.e., the
set of committees that tie as winners of the election (we use the
nonunique-winner model or, in other words, we assume that multiwinner
rules are irresolute).  We provide a few concrete examples of multiwinner rules in Section~\ref{sec:rules-examples}.
Most of the multiwinner rules that we study are based on single-winner
scoring functions.  A~\emph{single-winner scoring function} for $m$
candidates is a nonincreasing function $\gamma \colon [m] \rightarrow
\reals_{+}$ that assigns a score value to each position in a
preference order.
Given a preference order $\succ_i$ and a candidate $c$, by the
$\gamma$-score of $c$ (given by voter $v_i$) we mean the value
$\gamma(\pos_{v_i}(c))$.  
The two most commonly used %
scoring functions are the Borda scoring function,
\begin{align*}
\beta_m(i) = m-i,
\end{align*}
and the $t$-approval scoring function,
\begin{align*}
\alpha_t(i) = 
  \begin{cases}
    1       & \quad \text{if~} i \leq t \\
    0       & \quad \text{otherwise.}
  \end{cases}
\end{align*}
In particular, $\alpha_1$ is known as the Plurality scoring function.

Committee scoring functions generalize single-winner scoring functions
to the multiwinner setting in a natural way, by assigning scores to
the positions of the whole committees. Formally, given a vote $v$ and
a committee $S$ of size $k$, %
the \emph{committee position} of $S$ in $v$, denoted $\pos_v(S)$, is a
sequence $(i_1, \ldots, i_k)$ that results from sorting the set $\{
\pos_v(c) \mid c \in S\}$ in the increasing order.  We write $[m]_k$
to denote the set of all such length-$k$ increasing sequences of
numbers from $[m]$ (in other words, we write $[m]_k$ to denote the set
of all possible committee positions for the case of $m$ candidates and
committees of size $k$).  Given two committee positions from $[m]_k$,
$I = (i_1, \ldots, i_k)$ and $J = (j_1, \ldots, j_k)$, we say that $I$
\emph{weakly dominates} $J$, $I \succeq J$, if for each $t \in [k]$,
it holds that $i_t \leq j_t$ (we say that $I$ \emph{dominates} $J$,
denoted $I \succ J$, if at least one of these inequalities is
strict\footnote{In previous papers on committee scoring rules, the
  notions of weak dominance and dominance were conflated. We believe
  that giving them clear, separate meanings will help in providing
  more crisp arguments and discussions.}).  Below we define committee
scoring functions formally.

\begin{definition}[Elkind et al.~\cite{elk-fal-sko-sli:j:multiwinner-properties}]
  A committee scoring function for $m$ candidates and a committee size
  $k$ is a function $f_{m,k} \colon [m]_k \rightarrow \reals_{+}$ such
  that for each two sequences $I,J \in [m]_k$, if $I$ weakly dominates $J$
  then $h(I) \geq h(J)$.
\end{definition}

Let $f=(f_{m,k})_{k\le m}$ be a family of committee scoring functions,
where %
each $f_{m,k}$ is a function for $m$ candidates and committees of size
$k$. Given an election $E = (C, V)$ with $m$ candidates and a committee~$S$ of
size $k$, we define the $f_{m,k}$-score of $S$ to be:
\begin{align*}
f_{m,k}\hbox{-}\score_E(S) = \sum_{v_i \in V}f_{m,k}(\pos_{v_i}(S)) \text{.}
\end{align*}
When $f$ is clear from the context, we often speak of the score of a
committee instead of its $f$-score.
Given the above notation, we are ready to define committee scoring rules
formally.

\begin{definition}
  Let $f=(f_{m,k})_{k\le m}$ be a family of committee scoring
  functions (with one function for each $m$ and $k$, $k \leq
  m$). Committee scoring rule $\calR_f$ is a multiwinner voting rule
  that given an election $E = (C,V)$ and committee size $k$, outputs
  all size-$k$ committees with the highest $f_{|C|,k}$-score.
\end{definition}

We say that a committee scoring rule $\calR_f$ is \emph{degenerate} if
there is a number of candidates $m$ and a committee size $k$ such that
$f_{m,k}$ is a constant function. As a consequence, a degenerate rule returns all
size-$k$ committees for every election with $m$ candidates.
The \emph{trivial committee scoring} rule is a degenerate rule that
returns the set of all size-$k$ committees for all elections and all
sizes $k$ (naturally, it is defined by a family of constant
functions).

\subsection{Examples of Committee Scoring Rules}
\label{sec:rules-examples}

Many well-known multiwinner rules are, in fact, committee scoring
rules; below we provide several such examples. For each of the
rules we provide the family of committee scoring functions used in its
definition, discuss these functions intuitively, and mention some
applications.

\begin{description}
\item[SNTV, Bloc, and $\boldsymbol{k}$-Borda.] These three rules use the following committee
  scoring functions:
  \begin{align*}
  f^\sntv_{m,k}(\row ik) &= \textstyle\sum_{t=1}^{k}\alpha_1(i_t) = \alpha_1(i_1) \text{,} \\
  f^\bloc_{m,k}(\row ik) &= \textstyle\sum_{t=1}^{k}\alpha_k(i_t) \text{, and} \\
  f^\kborda_{m,k}(\row ik) &= \textstyle\sum_{t=1}^k\beta_m(i_t) \text{.}
  \end{align*}
  That is, under the SNTV rule we choose $k$ candidates with the
  highest Plurality scores, under Bloc we choose $k$ candidates with
  the highest $k$-Approval scores, and under $k$-Borda we choose~$k$
  candidates with the highest Borda scores.
  On the intuitive level, under SNTV each voter names his or her
  favorite committee member, 
  under Bloc each voter names all the $k$ members of his or her
  favorite committee, and under $k$-Borda each voter ranks all the
  candidates and assigns them scores in a way which corresponds
  linearly to their position in the ranking. SNTV and Bloc are
  sometimes used in political elections (with the former used, e.g.,
  in the parliamentary elections in Puerto Rico, and with the latter
  often used for various local elections in many countries). $k$-Borda
  and other rules based on similar scoring schemes are often used to
  determine finalists of competitions (e.g., the finalists of the
  Eurovision Song Contest are selected using a system very close to
  $k$-Borda).

\item[The Chamberlin--Courant rule.] Under the Chamberlin--Courant rule
  (the $\beta$-CC rule), the score that a voter $v$ assigns to a
  committee $S$ depends only on how $v$ ranks his or her favorite
  member of~$S$ (referred to as $v$'s \emph{representative} in~$S$). The
  Chamberlin--Courant rule seeks committees in which each voter ranks
  his or her representative as high as possible. Formally, the rule
  uses functions:
  \begin{align*}
    f^\bordacc_{m,k}(\row ik) = \beta_m(i_1) \text{.}
  \end{align*}
  This is the variant of the rule originally proposed by Chamberlin
  and Courant~\cite{cha-cou:j:cc}, but, subsequently, other authors
  (e.g., Procaccia et
  al.~\cite{pro-ros-zoh:j:proportional-representation}, Betzler at
  al.~\cite{bet-sli-uhl:j:mon-cc}, and Faliszewski et
  al.~\cite{fal-sli-sta-tal:c:cc-mon-clustering}) considered other
  ones, based on other single-winner scoring functions. In particular,
  we will be interested in the $k$-Approval Chamberlin--Courant rule,
  ($\alpha_k$-CC) which is defined through functions:
  \begin{align*}
  f^\topkcc_{m,k}(\row ik) = \alpha_k(i_1) \text{.}
  \end{align*}
  Intuitively, both variants of the Chamberlin--Courant rule seek
  committees of diverse candidates that ``cover'' as broad a spectrum
  of voters' views as
  possible. %
  Lu and Boutilier~\cite{bou-lu:c:chamberlin-courant} considered the
  rule in the context of recommendation systems.

\item[The PAV rule.] The Proportional Approval Voting rule (the PAV
  rule) was originally defined by Thiele~\cite{Thie95a} in the
  approval setting (where instead of ranking the candidates, the
  voters indicate which ones they accept as committee members; for
  recent discussions of the rule see the overview of
  Kilgour~\cite{kil-handbook} and the works of Aziz et
  al.~\cite{azi-bri-con-elk-fre-wal:j:justified-representation} and
  Lackner and Skowron~\cite{lac-sko:t:approval-thiele}). We model it
  as a committee scoring rule $\alpha_t$-PAV, where $t$ is a
  parameter, defined using scoring functions of the form:
  \begin{align*}
    f^\toptpav_{m,k}(\row ik) = \alpha_t(i_1) +
    \textstyle\frac{1}{2}\alpha_t(i_2) +
    \frac{1}{3}\alpha_t(i_3) +
    \cdots + 
    \frac{1}{k}\alpha_t(i_k)
    \text{.}
  \end{align*}
  PAV is particularly well-suited for electing parliaments. Indeed,
  Brill et al.~\cite{bri-las-sko:c:apportionment} have shown that it
  generalizes the d'Hondt apportionment method, which is used for this
  purpose in many countries (e.g., in France and Poland).
  A number of recent works~\cite{azi-bri-con-elk-fre-wal:j:justified-representation,bri-las-sko:c:apportionment,lac-sko:t:approval-thiele,azi-elk-hua-lac-san-sko:c:ejr-poly}
  explain why the harmonic sequence used within the PAV scoring
  function ensures that the elected committee represents the voters
  proportionally.
\end{description}

Naturally, there are many other committee scoring rules, and we will
discuss some of them throughout the paper. Nonetheless, the above few
suffice to illustrate our main points.  There is also a number of
other multiwinner rules that are not committee scoring rules, such as
STV (see, e.g., the work of Tideman and
Richardson~\cite{tid-ric:j:stv}), Monroe~\cite{mon:j:monroe}, Minimax
Approval Voting~\cite{bra-kil-san:j:minimax}, or rules which are
stable in the sense of Gehrlein~\cite{geh:j:condorcet-committee}. We
do not discuss them in this paper, but we provide some literature
pointers in Section~\ref{sec:related}.

\subsection{Basic Features of Committee Scoring Rules}
\label{sec:basic}

The class of committee scoring rules is very rich and there are only a
few basic properties shared by all the rules in this class. Below we
discuss several such properties that will be useful throughout this
paper.

From our point of view, the most important common feature of committee
scoring rules is that they are uniquely defined by their scoring
functions (up to linear transformations). Formally, we have the
following lemma (we provide the proof in the appendix).

\newcommand{\lemuni}{Let $\calR_f$ and $\calR_g$ be two committee
  scoring rules defined by committee scoring functions $f =
  (f_{m,k})_{k \leq m}$ and $g = (g_{m,k})_{k \leq m}$,
  respectively. If $\calR_f = \calR_g$ then for each $m$ and~$k$, $k
  \leq m$, there are two values, $a_{m,k} \in \reals_{+}$ and $b_{m,k}
  \in \reals$, such that for each $I \in [m]_k$ we have that
  $f_{m,k}(I) = a_{m,k} \cdot g_{m,k}(I) + b_{m,k}$.}

\begin{lemma}\label{lem:unique}
  Let $\calR_f$ and $\calR_g$ be two committee scoring rules defined
  by committee scoring functions $f = (f_{m,k})_{k \leq m}$ and $g =
  (g_{m,k})_{k \leq m}$, respectively. If $\calR_f = \calR_g$ then for
  each $m$ and $k$, $k \leq m$, there are two values, $a_{m,k} \in
  \reals_{+}$ and $b_{m,k} \in \reals$, such that for each $I \in
  [m]_k$ we have that $f_{m,k}(I) = a_{m,k} \cdot g_{m,k}(I) + b_{m,k}$.
\end{lemma}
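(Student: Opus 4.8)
The plan is to fix $m$ and $k$ and work entirely with the fixed universe of $m$ candidates and fixed committee size $k$, treating $f_{m,k}$ and $g_{m,k}$ as two committee scoring functions on $[m]_k$ that induce the same election outcomes on all profiles over these $m$ candidates. Write $f = f_{m,k}$, $g = g_{m,k}$, and index the committee positions in $[m]_k$ as $I_1, I_2, \ldots, I_N$ where $N = \binom{m}{k}$. The key observation is the following ``realizability'' fact: for any nonnegative integer vector $(n_1, \ldots, n_N)$ of voter multiplicities, there is a profile $V$ over the $m$ candidates and a target committee $S$ such that exactly $n_j$ voters place $S$ in committee position $I_j$. (One builds each of the $n_j$ voters to rank the members of $S$ into the slots dictated by $I_j$ and fills the remaining positions arbitrarily.) Hence, roughly speaking, the function on profiles ``$f$-score of $S$'' ranges over all nonnegative-integer combinations $\sum_j n_j f(I_j)$, and similarly for $g$.

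First I would use this realizability together with $\calR_f = \calR_g$ to show that $f$ and $g$ induce the same \emph{strict ordering} (and the same ties) on committee positions in the following weighted sense: for every two nonnegative rational vectors $(n_j)$ and $(n'_j)$,
\begin{align*}
  \textstyle\sum_j n_j f(I_j) \;>\; \sum_j n'_j f(I_j)
  \quad\Longleftrightarrow\quad
  \textstyle\sum_j n_j g(I_j) \;>\; \sum_j n'_j g(I_j),
\end{align*}
and likewise with $>$ replaced by $=$. The forward-to-elections direction is the crux: given such vectors, I would build an election on a candidate set of size $m$ containing the committee positions $I_j$ realized $n_j$ times by one block of voters (whose realized committee is some fixed $S$) and $n'_j$ times by voters whose realized committee is some fixed $S'$, arranged so that $S$ and $S'$ are the only two committees that can possibly win, with all other committees dominated in every vote. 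Then $S$ strictly beats $S'$ under $\calR_f$ iff the $f$-inequality holds, and the same for $g$; since $\calR_f = \calR_g$, the two inequalities agree. The main obstacle, and the step I expect to require the most care, is exactly this gadget construction: one must guarantee that no ``mixed'' committee (taking some candidates from $S$ and some from $S'$, or other candidates entirely) can ever outscore both $S$ and $S'$ — this typically needs enough dummy candidates ranked at the bottom and careful placement so that $S$ and $S'$ occupy disjoint, favorably-placed blocks of positions, making their committee positions dominate those of every competitor.

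Once the weighted-order equivalence is established, the conclusion is a standard linear-algebra / separation argument, which I would only sketch. Normalize by subtracting a constant so that both $f$ and $g$ vanish on the dominant position $(1,2,\ldots,k)$; after this normalization $f,g \ge 0$ and they are not identically zero unless the rule is trivial (in the trivial/degenerate case $f$ is constant, and then $a_{m,k}$ can be taken arbitrary positive with $b_{m,k}$ matching, so the statement holds). Assuming nontriviality, pick any $I$ with $f(I) > 0$; the equivalence of weighted equalities forces, for every $J$, the ratio $f(J)/f(I)$ to equal $g(J)/g(I)$ — because any rational value $r$ with $f(J) = r f(I)$ (witnessed by an integer identity $q f(J) = p f(I)$, i.e. the vectors $q e_J$ and $p e_I$ give equal $f$-scores) must also satisfy $q g(J) = p g(I)$, and vice versa, so the real numbers $f(J)/f(I)$ and $g(J)/g(I)$ have exactly the same rational lower and upper bounds and hence coincide. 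Setting $a_{m,k} = f(I)/g(I) > 0$ (note $g(I) > 0$ since $f(I) > 0$ forces it via the same argument) and $b_{m,k}$ equal to the common constant subtracted at the start gives $f_{m,k} = a_{m,k}\, g_{m,k} + b_{m,k}$ on all of $[m]_k$, as required.
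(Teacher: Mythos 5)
There is a genuine gap, and it sits exactly where you predicted it would: the gadget. Your plan needs, for arbitrary pairs of nonnegative multiplicity vectors (and hence for arbitrary, in particular disjoint, committees $S$ and $S'$ placed at arbitrary committee positions), an election in which $S$ and $S'$ are the only committees that can win. But this is impossible in general: $2$-nonimposition fails for committee scoring rules (as the paper notes for Bloc, whenever two disjoint committees $W_1,W_2$ tie as winners, every size-$k$ subset of $W_1\cup W_2$ also wins), so no arrangement of the remaining candidates can isolate two disjoint committees as the unique winners. Adding ``dummy candidates ranked at the bottom'' is also not available, since $m$ is fixed in the statement and $f_{m',k}$ for $m'>m$ is a different, unrelated function. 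There is a further unaddressed problem: in your two-block construction each voter necessarily assigns a committee position to \emph{both} $S$ and $S'$, so the block realizing $(n_j)$ for $S$ also contributes uncontrolled score to $S'$ and vice versa, and the intended equivalence between $\sum_j n_j f(I_j) > \sum_j n'_j f(I_j)$ and the outcome of the election does not follow. Finally, realizing arbitrary positions for $S$ (including low ones) is incompatible with having $S$ dominate all competing committees in every vote.

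The paper's proof is engineered precisely around this obstacle. It only ever compares committees $W_1,W_2$ sharing $k-1$ candidates, placed at committee positions $I,J$ sharing $k-1$ entries; for such pairs one can realize $I$ for $W_1$ and $J$ for $W_2$ \emph{simultaneously in the same vote}, and one can force $W_1,W_2$ to be the only possible winners by adding copies of $\zeta(W_1\cup W_2)+\zeta(W_1\cap W_2)$ (Observation~\ref{obs:committeesW1W2}), which are symmetric in $W_1$ and $W_2$ and hence do not disturb their comparison. The global linear relation is then obtained not by a separation argument over the whole orthant but by normalizing $g$ so as to agree with $f$ at $I_{\max}=(1,\ldots,k)$ and $I_{\min}=(m-k+1,\ldots,m)$, walking along a path through $[m]_k$ whose consecutive elements are adjacent in the above sense, locating a sign change of $f-h$ along that path, and converting it into a two-committee election that $f$ and $h$ decide differently. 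To salvage your approach you would have to restrict the weighted-order equivalence to exactly these adjacent comparisons and then propagate it along such a path---at which point you have essentially reconstructed the paper's argument.
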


Due to Lemma~\ref{lem:unique}, to show that two committee scoring
rules are distinct it suffices to show that their scoring functions
are not linearly related. In particular, this will be very useful when
we will be showing that certain rules cannot be represented using
scoring functions of a given form.

The second common feature of committee scoring rules is nonimposition,
which requires that for every committee there is some election where
it wins uniquely.  Formally, we have the following definition.

\begin{definition}\label{def:nonimposition}
  Let $\calR$ be a multiwinner rule. We say that $\calR$ has the
  nonimposition property if for each candidate set $C$ and each subset
  $W$ of $C$, there is an election $E = (C,V)$ such that $\calR(E,
  |W|) = \{W\}$.
\end{definition}

Nonimposition is such a basic property that it is hardly surprising
that all non-degenerate committee scoring rules satisfy it. We prove
the next lemma in the appendix.

\begin{lemma}\label{lem:nonimposition}
  Let $\calR_f$ be a committee scoring rule defined by a family of
  committee scoring functions $f=(f_{m,k})_{k\le m}$.  $\calR_f$
  satisfies the nonimposition property if and only if every committee
  scoring function in $f$ is nontrivial.
\end{lemma}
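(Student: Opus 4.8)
The plan is to prove both directions of the biconditional in Lemma~\ref{lem:nonimposition}, with the ``if'' direction being the substantive one.

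\textbf{The easy direction.} Suppose some $f_{m,k}$ is trivial, i.e.\ constant. Then for any election $E = (C,V)$ with $|C| = m$, every size-$k$ committee receives exactly the same $f_{m,k}$-score, so $\calR_f(E,k)$ is the set of \emph{all} size-$k$ subsets of $C$. As soon as $m > k$ there is more than one such subset, so for any proper subset $W \subsetneq C$ of size $k$ we can never have $\calR_f(E,|W|) = \{W\}$; this violates nonimposition. (The boundary case $k = m$ forces $W = C$ and causes no trouble, but picking any $m,k$ with $k < m$ where $f_{m,k}$ is constant suffices, and if the only trivial function has $k=m$ one notes $[m]_m$ is a singleton anyway so that function is automatically ``constant'' and imposes nothing — hence for the statement to bite we should read ``nontrivial'' as referring to the functions $f_{m,k}$ with $k<m$, which is how degeneracy was defined.)

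\textbf{The hard direction.} Assume every $f_{m,k}$ is nontrivial; we must, given a candidate set $C$ with $|C| = m$ and a target committee $W$ with $|W| = k$, construct an election $E = (C,V)$ with $\calR_f(E,k) = \{W\}$. The natural idea is to exploit the gap guaranteed by nontriviality: since $f_{m,k}$ is not constant and is monotone under weak dominance, $f_{m,k}(1,2,\ldots,k) > f_{m,k}(\text{some dominated position})$, and in fact $f_{m,k}$ takes its unique maximum on the top committee position $(1,2,\ldots,k)$ among all $I$ that are weakly dominated only by it — more carefully, I would argue there is a position $J^\star \in [m]_k$, $J^\star \ne (1,\dots,k)$, with $f_{m,k}(1,\dots,k) > f_{m,k}(J^\star)$. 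First I would place a single ``anchor'' voter (or a small block of identical voters) whose preference order ranks the members of $W$ in positions $1,\ldots,k$; this voter gives $W$ the score $f_{m,k}(1,\dots,k)$ and gives any committee $S \ne W$ a score $f_{m,k}(\pos_v(S)) \le f_{m,k}(1,\dots,k)$, with equality possible. To break the remaining ties, for each committee $S \ne W$ that still ties with $W$ I would add a voter ranking some candidate in $W \setminus S$ at the top and pushing some candidate of $S$ down, using nontriviality of $f_{m,k}$ to ensure this strictly helps $W$ over $S$; iterating over all $S \ne W$ and taking enough copies of the anchor block so that no single tie-breaking voter lets some \emph{other} committee overtake $W$ yields the desired profile.

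\textbf{Main obstacle.} The delicate point is the tie-breaking: adding a voter to separate $W$ from one competitor $S$ perturbs the scores of \emph{all} committees, so I must control the global effect. The clean way is a weighting/averaging argument: build, for each $S \ne W$, a ``gadget'' sub-election $V_S$ in which $W$ strictly beats $S$ and $W$ beats every other committee at least weakly (achievable by a voter ranking $W$'s members on top), then form $V$ as the concatenation $V = N\cdot V_0 + \sum_{S \ne W} V_S$ where $V_0$ is the anchor block and $N$ is chosen large enough. Since in $N\cdot V_0 + V_S$ alone $W$ is the strict winner, and adding the remaining non-negative-for-$W$ gadgets only widens $W$'s lead over everyone, $W$ is the unique winner overall; making this monotonicity-of-lead claim precise — that each gadget $V_{S'}$ contributes nonnegatively to $\bigl(f\text{-}\score(W) - f\text{-}\score(S)\bigr)$ for every $S$ — is exactly where the weak-dominance monotonicity of committee scoring functions, together with the strict gap from nontriviality, gets used, and is the step I would expect to write out most carefully.
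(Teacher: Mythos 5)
Your ``only if'' direction is fine (and your remark about the vacuous case $k=m$, where $[m]_m$ is a singleton, is a legitimate observation about how ``nontrivial'' must be read). The gap is in the ``if'' direction, at exactly the step you flag as delicate: the existence of the per-competitor gadgets $V_S$. In any single vote, the maximum score $f_{m,k}(1,\ldots,k)$ is attained by the committee occupying the top $k$ positions, so a gadget in which $W$ weakly beats \emph{every} committee must place $W$ at a committee position in the level set $M=\{I: f_{m,k}(I)=f_{m,k}(1,\ldots,k)\}$, while placing $S$ outside $M$. Your proposed recipe --- a voter ranking $W$'s members in positions $1,\ldots,k$ --- does not achieve this in general. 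Concretely, take the nontrivial committee scoring function with $f_{m,k}(I)=0$ for $I=(m-k+1,\ldots,m)$ and $f_{m,k}(I)=1$ otherwise: any voter ranking $W$ on top gives every $S$ with $S\cap W\neq\emptyset$ the same score $1$ as $W$, so no strict separation is possible from such a vote, and nontriviality alone does not rescue the argument. One can repair this particular example by a different vote (rank $S$ at the bottom), but what your proof actually needs is the general claim that for every upward-closed proper level set $M$ and every intersection size $r=|W\cap S|$ there exist $I\in M$ and $J\notin M$ with $|I\cap J|=r$ realizable in one vote; this is a genuine combinatorial assertion about the dominance order on $[m]_k$ that you assert but do not prove, and it is the heart of the matter.

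The paper avoids this issue entirely by using a symmetric multi-voter construction. For each $c\in C$ let $\zeta(c)$ be the election with $(m-1)!$ voters, all ranking $c$ first followed by every permutation of the remaining candidates. By symmetry all committees containing $c$ tie, all committees omitting $c$ tie, and nontriviality forces a strict drop $f_{m,k}(i+1,\ldots,i+k)>f_{m,k}(i+2,\ldots,i+k+1)$ somewhere along the diagonal chain, which makes the first score strictly larger than the second (Lemma~\ref{lemma:profile_zeta1}). Writing $x>y$ for these two scores, in $\zeta(W)=\sum_{c\in W}\zeta(c)$ a committee scores $x$ for each member inside $W$ and $y$ for each member outside, so $W$ alone scores $kx$ and every other committee at most $(k-1)x+y<kx$ (Lemma~\ref{lemma:profile_zeta2}); hence $\calR_f(\zeta(W),k)=\{W\}$. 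If you want to keep your gadget-based architecture, you should either prove the level-set claim above or replace each $V_S$ by a symmetric block of this kind; as written, the construction does not go through.
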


While at first sight nonimposition and Lemma~\ref{lem:nonimposition}
seem hardly exciting, in fact they are sufficient to illustrate
intriguing differences between single-winner voting rules and their
multiwinner counterparts. For example, one can verify that all
nontrivial single-winner scoring rules satisfy the following extended
variant of the nonimposition property: For every candidate set $C$ and
its subset $S$, there is an election $E = (C,V)$ where exactly the
candidates from $S$ tie as winners. Analogous result does not hold for
committee scoring rules, even for the case of two committees (in which
case it could be dubbed as $2$-nonimposition; the example below is due
to Lackner and Skowron~\cite{lac-sko:t:approval-thiele}).

\begin{example}
  Let us fix some committee size $k$ and a set $C$ containing at least
  $2k$ candidates.  Consider two disjoint committees $W_1$ and
  $W_2$. Let $E$ be an arbitrary election where $W_1$ and $W_2$ are
  tied as winners according to Bloc (such elections exist). We note
  that each candidate in $W_1$ has exactly the same $k$-Approval score
  as each candidate in $W_2$ (otherwise at least one of these
  committees would not be winning).  Consequently, every size-$k$
  committee $W$ such that $W \subseteq W_1 \cup W_2$ is also winning
  in $E$, so $W_1$ and $W_2$ are not the two unique winning
  committees.
\end{example}

The fact that in general $2$-nonimposition does not hold for committee
scoring rules is quite disappointing because many results would be far
easier to prove if we could assume that it is always possible to
construct an election where two arbitrary given committees are the
only winning ones. On the other hand, it is possible to construct
elections where two size-$k$ committees $W_1$ and $W_2$ are the only
winning ones, provided that they share $k-1$ candidates (and, indeed,
this fact is used in the proof of Lemma~\ref{lem:unique}).

There are a few more common properties of committee scoring rules. For
example, they all satisfy the \emph{candidate
  monotonicity} property which requires that if we shift forward a
member of a winning committee then, afterward, this candidate still
belongs to some winning committee (but possibly quite a different one;
see the work of Bredereck et
al.~\cite{bre-fal-kac-nie-sko-tal:c:multiwinner-robustness}). Also,
all committee scoring rules are \emph{consistent} in the sense that if
two elections $E_1$ and $E_2$ (over the same candidate set) have some
common winning committees, then these are exactly the winning
committees in an election obtained by merging the voter collections of
$E_1$ and $E_2$.  The former property is related to our discussions in
Section~\ref{sec:axiomatic_properties} and the latter one is often
useful as a tool when proving various results (and, indeed, it is
crucial in characterizing the class of committee scoring rules
axiomatically~\cite{sko-fal-sli:t:axiomatic-committee}).

\subsection{The T-Shirt Store Example}

In Section~\ref{sec:rules-examples} we have provided a number of
examples of committee scoring rules and we have discussed some of
their applications, focusing mostly on political elections. However,
committee scoring rules have far more varied applications (see, e.g.,
the overview of Faliszewski et
al.~\cite{fal-sko-sli-tal:b:multiwinner-trends}), most of which have
nothing to do with politics. Below we describe a simplified
business-inspired scenario where committee scoring rules may be
useful.  We use this example to guide our way through the different
types of committee scoring rules discussed in this paper.

\begin{example}\label{example:shirts1}
  Consider a T-shirt store that needs to decide which shirts to put on
  offer.
  Let $C$ be the set of T-shirts that the store can order from its
  suppliers ($|C| = m$). Since the store has limited space, it can
  only put $k$ different T-shirts on display, and it wants to pick
  them in a way that would maximize its revenue (i.e., the number of
  T-shirts sold).  We assume that every customer knows all the designs
  (say, from a website) and ranks all T-shirts from the best one to
  the worst one.
  Let us say
  that a customer considers a T-shirt to %
  be ``very good'' if it is among the top $k$ T-shirts (of course,
  this is an arbitrary choice, made for the sake of simplifying the
  example).

  How should the store decide which T-shirts to put on display? This
  depends on how the customers behave. Consider a customer that ranks
  the available T-shirts on positions $i_1 < i_2 < \cdots < i_k$. If
  this is a very picky customer that only buys a T-shirt if it is the
  very best among all possible ones (according to his or her
  opinion) then the number of T-shirts this customer buys is given by
  $f^\sntv_{m,k}(\row ik) = \alpha_1(i_1)$.  However, if
  this customer were to buy one copy of each T-shirt he or she considered as ``very
  good,'' he or she would buy $f^\bloc_{m,k}(\row ik) =
  \textstyle\sum_{t=1}^{k}\alpha_k(i_t)$ T-shirts. 
  It is also possible that a customer would buy only one $T$-shirt,
  provided he or she considered it as ``very good.'' The number of
  T-shirts bought by such a customer would be $f^\topkcc_{m,k}(\row
  ik) = \alpha_k(i_1)$.  Depending on which type of customers the
  store expects to have, it should choose its selection of T-shirts
  either using SNTV, Bloc, or $k$-Approval
  Chamberlin--Courant. (Surely, other types of customers are possible
  as well and we will discuss some of them later. It is also likely
  that the store would face a mixture of different types of customers,
  but this is beyond our study.)
\end{example}

\section{Hierarchy of Committee Scoring Rules}
\label{sec:classification}

In this section we describe the classes of committee scoring rules
that were studied to date, introduce a new class---the class of
decomposable rules---and argue how all these classes relate to each
other, forming a hierarchy. In Figure~\ref{fig:hierarchy} we present
the relations between the classes discussed in this section, with
examples of notable rules.
The classes are defined by setting restrictions on the scoring
functions so, in other words, in this section we are interested in the
syntactic hierarchy of committee scoring rules.  Later, in
Section~\ref{sec:axiomatic_properties}, we will consider semantic
properties.

\begin{figure}
\center
\scalebox{1.0}{
\begin{tikzpicture}
\tikzstyle{node} = [draw=black,fill=none]

\fill [node] (0,1.35) ellipse (7 and 4.5);
\node [above] at (0,5.05) {\textbf{committee scoring rules}};
\node [above] at (0,4.55) {\large{$\max$-threshold rules, $\ell_p$-Borda}};

\fill [node] (0,0.675) ellipse (6.5 and 3.75);
\node [above] at (0,3.75) {\large\textbf{decomposable}};
\node [above] at (0,3.25) {\large multithreshold rules};

\fill [node] (0,0) ellipse (6 and 3);
\node [above] at (0,2.25) {\large\textbf{OWA-based}};
\node [above] at (0,1.75) {\large $\alpha_t$-PAV, $q$-HarmonicBorda};

\fill [node] (0,-1.425) ellipse (3 and 1.5);
\node [above] at (0,-2.1) {\textbf{representation-focused}};
\node [above] at (0,-2.6) {\small{$\beta$-CC}};

\fill [node] (-2.0,0) ellipse (3 and 1.5);
\node [above] at (-2.25,0.5) {\textbf{weakly separable}};
\node [above] at (-2.25,0) {{$k$-Borda}};

\fill [node] (2.0,0) ellipse (3 and 1.5);
\node [above] at (2.25,0.55) {\textbf{top-$\boldsymbol{k}$-counting}};
\node [above] at (2.95,0) {{$\alpha_k$-PAV, Perfectionist}};

\node [above] at (0,0.25) {{Bloc}};

\node [above] at (1.45,-1.2) {{$\alpha_k$-CC}};

\node [above] at (-1.45,-1.2) {{SNTV}};

\node [above] at (0,-0.7) {\small{Trivial}};

\end{tikzpicture}
}
\caption{\label{fig:hierarchy}The hierarchy of committee scoring rules.}
\end{figure}
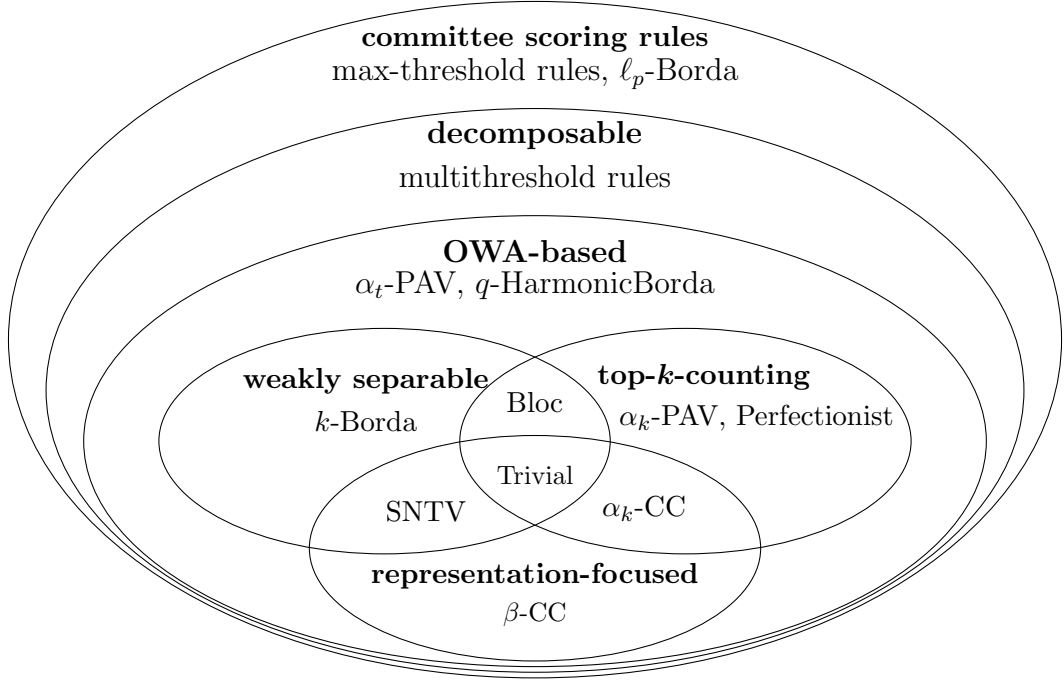

\subsection{Separable and Weakly Separable Rules}

We say that a family of committee scoring functions $f=(f_{m,k})_{k\le
  m}$ is {\em weakly separable} if there exists a family of
(single-winner) scoring functions $(\gamma_{m,k})_{k\le m}$ with
$\gamma_{m,k}\colon [m] \to \R_+$ such that for every $m\in \N$ and
every committee position $I = (i_1, \ldots, i_k) \in [m]_k$ we have:
\[
f_{m,k}(i_1, \ldots, i_k) = \textstyle\sum_{t=1}^k \gamma_{m,k}(i_t).
\]
A committee scoring rule $\calR_f$ is {\em weakly separable} if it is
defined through a family of weakly separable scoring functions $f$.
In other words, if a rule is weakly separable then we can compute the
score of each candidate independently, using the single-winner scoring
function $\gamma_{m,k}$, and pick the $k$ candidates with the highest
scores.
In consequence, it is possible to compute winning committees for all
weakly separable rules in polynomial time, provided that their
underlying single-winner scoring functions are polynomial-time
computable~\cite{elk-fal-sko-sli:j:multiwinner-properties}.\footnote{There
  is a subtlety here as there may be exponentially many winning
  committees. However, by listing the scores of all the candidates, we
  provide enough information to, e.g., enumerate all the winning
  committees in time proportional to the number of these committees,
  or to perform many other tasks related to winner determination (such
  as computing the score of a winning committee).}

If for all $m$ we have $\gamma_{m,1} = \cdots = \gamma_{m,m}$, then we
say that the family $f$ and the corresponding committee scoring rule $\calR_f$ are {\em separable}, without the ``weakly''
qualification. Thus, separable rules use the same scoring function for each value of the size of a committee to be elected.
Interestingly, separable rules have some axiomatic properties that
other weakly separable %
rules lack~\cite{elk-fal-sko-sli:j:multiwinner-properties}---we will discuss this further in Section~\ref{sec:axiomatic_properties}.

The notion of (weakly) separable rules was introduced by Elkind et
al.~\cite{elk-fal-sko-sli:j:multiwinner-properties}; they pointed
out that SNTV and $k$-Borda are separable, whereas Bloc is only weakly
separable.

\subsection{Representation-Focused Rules}

A family of committee scoring functions $f=(f_{m,k})_{k\le m}$ is {\em
  representation-focused} if there exists a family of (single-winner)
scoring functions $(\gamma_{m,k})_{k\le m}$ such that for every $m\in
\N$ and every committee position $I = (i_1, \ldots, i_k) \in [m]_k$ we
have:
\begin{align*}
   f_{m,k}(i_1, \ldots, i_k) = \gamma_{m,k}(i_1).
\end{align*}
This means that the score that a committee receives from a voter
depends only on the position of the most preferred member of this
committee in the voter's preference ranking---such a member can be
viewed as a representative of the voter in the committee.  A committee
scoring rule $\calR_f$ is {\em representation-focused} if it is
defined through a family of representation-focused scoring
functions~$f$.  The notion of representation-focused rules was
introduced by Elkind et
al.~\cite{elk-fal-sko-sli:j:multiwinner-properties}; $\beta$-CC is the
archetypal example of a representation-focused committee scoring rule
and, in consequence, all the representation-focused rules can be seen as
variants of the Chamberlin--Courant rule.

SNTV is both separable and representation-focused, and it is the only
non-degenerate committee scoring rule with this property.

\begin{proposition}
  SNTV is the only non-degenerate committee scoring rule that is
  (weakly) separable and representation-focused.
\end{proposition}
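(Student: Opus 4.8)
The plan is to use Lemma~\ref{lem:unique}: since both the weakly separable and representation-focused structures express the same committee scoring rule $\calR_f$, and we already know $\text{SNTV}$ is both, it suffices to show that any rule which is simultaneously weakly separable and representation-focused must either be degenerate or coincide with $\text{SNTV}$. So suppose $\calR_f$ is defined by a family $f = (f_{m,k})_{k\le m}$ that is representation-focused, say $f_{m,k}(i_1,\ldots,i_k) = \gamma_{m,k}(i_1)$, and is also weakly separable, say (by Lemma~\ref{lem:unique}, after a linear transformation that does not change the rule) $f_{m,k}(i_1,\ldots,i_k) = \sum_{t=1}^k \delta_{m,k}(i_t)$ for some single-winner scoring functions $\delta_{m,k}$.

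The key step is to extract, for each fixed $m$ and $k$ with $k < m$, strong constraints on $\delta_{m,k}$ by evaluating $f_{m,k}$ on carefully chosen committee positions. The point is that the representation-focused form says $f_{m,k}(I)$ depends only on $\min I = i_1$, whereas the separable form is additive over all coordinates. First I would fix the position of the representative at some value $i_1 = a$ and vary the remaining coordinates $i_2 < \cdots < i_k$ over two different choices of sets drawn from $\{a+1,\ldots,m\}$ (this requires $k \le m$, and genuine freedom requires, say, $k+1 \le m$ so that there are at least two such choices; the boundary case $k=m$ needs separate, trivial handling since then $[m]_k$ is a single point). Comparing, $\sum_{t\ge 2}\delta_{m,k}(i_t)$ must be the same for all such choices, which forces $\delta_{m,k}$ to be constant on $\{a+1,\ldots,m\}$ for every $a$ that can serve as a representative position; taking $a=1$ this says $\delta_{m,k}$ is constant on $\{2,\ldots,m\}$. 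Call that common value $c_{m,k}$, and write $d_{m,k} := \delta_{m,k}(1)$; by monotonicity of scoring functions $d_{m,k}\ge c_{m,k}$. Then for any committee position $I$ we get $f_{m,k}(I) = d_{m,k} + (k-1)c_{m,k}$ if $1 \in \{i_1,\ldots,i_k\}$ (equivalently $i_1 = 1$), and $f_{m,k}(I) = k\,c_{m,k}$ otherwise. Thus $f_{m,k}$ takes only two values, determined entirely by whether the representative sits in position $1$: up to the additive constant $k\,c_{m,k}$ and the positive multiplicative constant $d_{m,k}-c_{m,k}$, this is exactly $\alpha_1(i_1) = f^{\sntv}_{m,k}(I)$.

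Finally I would assemble these per-$(m,k)$ conclusions. If $d_{m,k} = c_{m,k}$ for some $(m,k)$ then $f_{m,k}$ is constant, so $\calR_f$ is degenerate; otherwise $d_{m,k} > c_{m,k}$ for all $m,k$, and each $f_{m,k}$ is a positive-affine image of $f^{\sntv}_{m,k}$, so $\calR_f$ and $\text{SNTV}$ assign the same ordering of committees by score in every election, i.e.\ $\calR_f = \text{SNTV}$. Conversely $\text{SNTV}$ is clearly separable (its function $f^{\sntv}_{m,k}(i_1,\ldots,i_k)=\alpha_1(i_1)$ does not depend on $m$ or $k$) and representation-focused (with $\gamma_{m,k}=\alpha_1$), and it is non-degenerate, which completes the characterization.

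The main obstacle I anticipate is the bookkeeping around small or extreme parameter values, particularly $k=m$ (where $[m]_k$ is a singleton and the argument extracting constancy of $\delta_{m,k}$ on a nontrivial set has no room to operate) and $k=m-1$ (where only a mild amount of variation is available); one must check that the constancy argument still goes through, or that these cases are absorbed harmlessly into the ``degenerate'' alternative. A secondary subtlety is making sure the two representations of $f_{m,k}$ are being compared correctly: Lemma~\ref{lem:unique} only guarantees the separable representation is a linear transform of the representation-focused one \emph{after} we assume $\calR_f$ equals a weakly separable rule, so I would phrase the hypothesis as ``$f$ is both representation-focused and weakly separable'' directly at the level of the defining families, which is the natural reading and avoids needing the lemma at all for this direction.
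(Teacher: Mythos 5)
Your proposal is correct and follows essentially the same route as the paper's proof: both use Lemma~\ref{lem:unique} to put the weakly separable and representation-focused forms on the same footing, then vary the non-representative positions at fixed $i_1$ to force the single-winner function to be constant on $\{2,\ldots,m\}$, and invoke non-degeneracy to get a strict drop from position $1$. Your handling of the boundary cases $k=m$ and $k=m-1$ is slightly more explicit than the paper's ``without loss of generality $m>k$,'' but the argument is the same.
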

\begin{proof}
  It is easy to verify that SNTV is separable and
  representation-focused.  For the other direction, let $\calR$ be a
  rule which is separable and representation focused.  It follows that
  $\calR \equiv \calR_f \equiv \calR_g$ for some families of committee
  scoring functions $f$ and $g$, such that $f_{m,k}(i_1, \ldots, i_k)
  = \phi_{m,k}(i_1) + \ldots + \phi_{m,k}(i_k)$ and $g(i_1, \ldots,
  i_k) = \gamma_{m,k}(i_1)$. Every linear transformation of $g$ has
  the same form (i.e., it only depends on $i_1$), so by
  Lemma~\ref{lem:unique} (linearly transforming $g$, if necessary)
  we can assume that $f = g$.

  Without loss of generality, we can assume that $m > k$. For each
  committee positions $(i_1, \ldots, i_k)$ with $i_1 = 1$, we have
  that
  \begin{align*}
  \phi_{m,k}(i_1) + \ldots + \phi_{m,k}(i_k) = \gamma_{m,k}(i_1) \text{,}
  \end{align*}
  and, so, we can conclude that $f_1(2) = \cdots = f_1(m)$.  Since
  $\calR$ is non-degenerate, we have that $f_1(1) > f_1(m)$, and so
  that $f_1(1) > f_1(2)$.  This is sufficient to conclude that $\calR$
  is equivalent to SNTV.
\end{proof}

Generally, representation-focused rules are $\np$-hard to compute
(SNTV is one obvious exception).  This fact was first shown by Procaccia et
al.~\cite{pro-ros-zoh:j:proportional-representation} in the
approval-based setting, and then by Lu and
Boutilier~\cite{bou-lu:c:chamberlin-courant} for $\bordacc$.  Since
then, various means of computing the results under the
Chamberlin--Courant rule and its variants were studied in quite some
detail~\cite{bet-sli-uhl:j:mon-cc,cor-gal-spa:c:sp-width,sko-fal-sli:j:multiwinner,sko-yu-fal-elk:j:mwsc,pet:t:total-unimodularity,fal-sli-sta-tal:c:cc-mon-clustering,lac-pet:c:spoc,fal-lac-pet-tal:c:csr-heuristics}.

\subsection{Top-$\boldsymbol{k}$-Counting Rules}

A committee scoring rule $\calR_f$, defined by a family
$f=(f_{m,k})_{k\le m}$, is {\em top-$k$-counting} if there exists
a %
sequence of nondecreasing functions $(g_{m,k})_{k \leq m}$, with $g_{m,k} \colon
\{0, \ldots, k\} \rightarrow \reals_{+}$, such that: %
\[
  f_{m,k}( i_1, \ldots, i_k ) = g_{m,k}\Big( \big| \{ i_t \mid i_t \leq k\} \big| \Big). 
\]
That is, the value $f_{m,k}(i_1, \ldots, i_k)$ depends only on
the number of committee members that the given voter ranks among his
or her top $k$ positions.  We refer to the functions $g_{m,k}$ as the {\em
  counting functions}.  Top-$k$-counting rules were introduced by
Faliszewski et al.~\cite{fal-sko-sli-tal:c:top-k-counting}.

\begin{remark}
  It would be quite natural to require that all counting functions for
  a given committee size were the same, that is, that for each $k \in
  \naturals$ it held that $g_{k,k} = g_{k+1,k} = g_{k+2,k} =
  \cdots$. Following Faliszewski et
  al.~\cite{fal-sko-sli-tal:c:top-k-counting}, we formally do not make
  this requirement, but we expect it to hold for all natural
  top-$k$-counting rules.
\end{remark}

Top-$k$-counting rules include, for example, the Bloc rule,
$\alpha_k$-PAV, and $\alpha_k$-CC, where Bloc uses the linear
counting functions $g_{m,k}^\bloc(i) = i$, 
$\alpha_k$-PAV uses counting functions $g_{m,k}^\topkpav(i) =
\sum_{t=1}^i\frac{1}{t}$, and $\alpha_{k}$-CC uses counting functions:
\begin{align*}
g_{m,k}^\cc(i) =
  \begin{cases}
    1       & \quad \text{if~} i \geq 1\\
    0       & \quad \text{if~} i = 0.
  \end{cases}
\end{align*}
As an extreme example of a top-$k$-counting rule, Faliszewski et
al.~\cite{fal-sko-sli-tal:c:top-k-counting} introduced the
Perfectionist rule, which uses counting functions:
\begin{align*}
g_{m,k}^\perf(i) =
  \begin{cases}
    1       & \quad \text{if~} i = k\\
    0       & \quad \text{otherwise}
  \end{cases}
\end{align*}
Perfectionist is extreme in the sense that a voter assigns a point to
a committee exactly if he or she ranks all the members of this
committee as $k$ best ones.

\begin{example}\label{example:shirts-perfectionist}
  Let us recall our T-shirt store example
  (Example~\ref{example:shirts1}).  Consider a particularly snobbish
  customer, who is willing to buy a shirt from a store only if he or
  she views all the available shirts as very good (recall that we
  defined ``very good'' to mean being ranked among top $k$
  positions). Then if $i_1, \ldots, i_k$ are the positions of the
  available shirts in the customer's ranking, the number of shirts
  that the store should expect to sell to such a customer is:
  \[
  f_{m,k}(i_1, \ldots, i_k) = g_{m,k}^\perf\Big( \big| \{ i_t \mid i_t
  \leq k\} \big| \Big) = \alpha_k(i_k).
  \]
  Thus if the store expects such customers, then it should use the
  Perfectionist rule to choose its merchandise (and, possibly,
  should also increase its prices!).
\end{example}

Bloc is the only nontrivial rule that is both weakly separable and
top-$k$-counting, and $\alpha_k$-CC is the only nontrivial rule that
is both representation-focused and top-$k$-counting.

\begin{proposition} %
  Bloc is the only nontrivial %
  rule that is weakly separable and top-$k$-counting.
\end{proposition}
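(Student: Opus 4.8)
The plan is to prove both directions, the forward one in a line and the converse as the substance. For the forward direction: Bloc is weakly separable via the single-winner scoring functions $\gamma_{m,k}=\alpha_k$, and it is top-$k$-counting via the linear counting functions $g_{m,k}(i)=i$, since $f^\bloc_{m,k}(i_1,\dots,i_k)=\sum_{t=1}^{k}\alpha_k(i_t)=|\{t:i_t\leq k\}|$. For the converse, let $\calR$ be nontrivial (non-degenerate) and both weakly separable and top-$k$-counting. Weak separability gives $\calR=\calR_f$ with $f_{m,k}(i_1,\dots,i_k)=\sum_{t=1}^{k}\gamma_{m,k}(i_t)$ for nonincreasing $\gamma_{m,k}$, and top-$k$-counting gives $\calR=\calR_g$ with $g_{m,k}(i_1,\dots,i_k)=\hat g_{m,k}\big(|\{t:i_t\leq k\}|\big)$. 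The first step is to reconcile the two representations via Lemma~\ref{lem:unique}: for each $m$ and $k\leq m$ there are $a_{m,k}\in\reals_{+}$, $b_{m,k}\in\reals$ with $f_{m,k}=a_{m,k}g_{m,k}+b_{m,k}$, and since $f_{m,k}$ is non-constant whenever $m>k$ (non-degeneracy), we must have $a_{m,k}>0$. Hence, for every $m>k$ and every $(i_1,\dots,i_k)\in[m]_k$,
\[
  \textstyle\sum_{t=1}^{k}\gamma_{m,k}(i_t)\;=\;G_{m,k}\big(|\{t:i_t\leq k\}|\big),\qquad G_{m,k}:=a_{m,k}\hat g_{m,k}+b_{m,k}.
\]

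The crux is to read off the shape of $\gamma_{m,k}$ from this identity by plugging in well-chosen committee positions; fix $m>k$ and drop subscripts. For any $p,q\in\{k+1,\dots,m\}$, the committees $(1,2,\dots,k-1,p)$ and $(1,2,\dots,k-1,q)$ each place exactly $k-1$ of their members in the top $k$ positions, so the identity forces $\gamma(p)=\gamma(q)$; thus $\gamma$ equals a single value $c$ on $\{k+1,\dots,m\}$. Likewise, for any $s\in\{1,\dots,k\}$ the committee $\big(\{1,\dots,k\}\setminus\{s\}\big)\cup\{k+1\}$ also has exactly $k-1$ members in the top $k$, so $\big(\sum_{t=1}^{k}\gamma(t)\big)-\gamma(s)+c$ is the same for all $s$; thus $\gamma$ equals a single value $d$ on $\{1,\dots,k\}$. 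Since $\gamma$ is nonincreasing, $d\geq c$; and $d=c$ would make $f_{m,k}$ constant, contradicting nontriviality, so $d>c$. Therefore $\gamma=c+(d-c)\alpha_k$ is a positive affine transform of the $k$-approval scoring function, whence $f_{m,k}(i_1,\dots,i_k)=(d-c)\,f^\bloc_{m,k}(i_1,\dots,i_k)+kc$ is a positive affine transform of the Bloc scoring function. This holds for every $m>k$, and the case $m=k$ is vacuous since then the only committee is the full candidate set; as positive affine transformations of the scoring functions do not change the set of maximum-score committees, $\calR$ coincides with Bloc.

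I expect the main obstacle to be bookkeeping rather than any real difficulty: one must invoke Lemma~\ref{lem:unique} carefully to see that the weakly separable score is itself a function of the top-$k$ count, and one must use nontriviality in exactly two places — to rule out $a_{m,k}=0$, and to get the strict inequality $d>c$. A couple of degenerate small cases want a word: for $m=k+1$ the set $\{k+1,\dots,m\}$ is a singleton so the first substitution is vacuous, though the value $c:=\gamma(k+1)$ is still well defined; and for $k=1$ the identity already reads $\gamma(i_1)=G(0)$ for $i_1\geq2$ and $\gamma(1)=G(1)$, directly exhibiting $\gamma$ as a positive affine transform of $\alpha_1$. Once the displayed identity is in hand, the combinatorial core — choosing $(1,\dots,k-1,p)$ and $\big(\{1,\dots,k\}\setminus\{s\}\big)\cup\{k+1\}$ to pin $\gamma$ down separately on the blocks $\{1,\dots,k\}$ and $\{k+1,\dots,m\}$ — is routine.
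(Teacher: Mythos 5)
Your proof is correct, and it takes a genuinely more self-contained route than the paper's. The paper's own argument is a two-line citation: it invokes Lemma~\ref{lem:unique} together with the result of Faliszewski et al.\ that the only weakly separable top-$k$-counting rules are those with linear counting functions, and identifies that rule as Bloc. You instead use Lemma~\ref{lem:unique} only to equate the two representations into the single identity $\sum_{t}\gamma(i_t)=G\bigl(|\{t: i_t\le k\}|\bigr)$, and then do the combinatorial work yourself: the substitutions $(1,\ldots,k-1,p)$ versus $(1,\ldots,k-1,q)$ and $(\{1,\ldots,k\}\setminus\{s\})\cup\{k+1\}$ pin $\gamma$ down to be constant on $\{1,\ldots,k\}$ and on $\{k+1,\ldots,m\}$ separately, forcing $\gamma$ to be a positive affine transform of $\alpha_k$ and hence $f_{m,k}$ to be a positive affine transform of $f^{\bloc}_{m,k}$. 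This buys a proof that does not outsource its core to an external reference, at the cost of a little bookkeeping; your handling of the edge cases ($a_{m,k}=0$, $d=c$, $m=k$, $k=1$) is exactly what is needed. One small caveat, which is really a quibble with the paper's own phrasing rather than with your argument: you read ``nontrivial'' as ``non-degenerate'' (no $f_{m,k}$ constant), which is the reading the conclusion actually requires, since a rule that is constant for some pairs $(m,k)$ but not others would be nontrivial yet neither Bloc nor the trivial rule; it is worth stating that interpretation explicitly, as you in effect do.
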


\begin{proof}
  By combining Lemma~\ref{lem:unique} and the results of Faliszewski
  et al.~\cite{fal-sko-sli-tal:c:top-k-counting}, we obtain that
  top-$k$-counting rule defined by a family of linear counting
  functions is the only weakly separable top-$k$-counting rule, and this
  rule is exactly Bloc.
\end{proof}

\begin{proposition}\label{thm:cc-hope}
  $\alpha_k$-CC is the only nontrivial %
  rule that is representation-focused and top-$k$-counting.
\end{proposition}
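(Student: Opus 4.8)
The plan is to follow the template of the preceding SNTV proposition: express $\calR$ through a single family of scoring functions that is simultaneously representation-focused and top-$k$-counting, then read the shape of that family off a few carefully chosen committee positions and compare it with the $\alpha_k$-CC family $f^\topkcc_{m,k}(\row ik) = \alpha_k(i_1)$. For the reduction: since $\calR$ is top-$k$-counting, write $\calR = \calR_f$ with $f_{m,k}(\row ik) = g_{m,k}\bigl(|\{t : i_t \le k\}|\bigr)$ for nondecreasing counting functions $g_{m,k}\colon\{0,\dots,k\}\to\reals_+$. Since $\calR$ is also representation-focused, $\calR = \calR_{g'}$ for a family $g'$ with $g'_{m,k}(\row ik) = \gamma'_{m,k}(i_1)$, and Lemma~\ref{lem:unique} gives $f_{m,k} = a_{m,k}\,g'_{m,k} + b_{m,k}$ with $a_{m,k}\in\reals_+$, $b_{m,k}\in\reals$. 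Hence $f_{m,k}(\row ik) = a_{m,k}\,\gamma'_{m,k}(i_1) + b_{m,k} =: \gamma_{m,k}(i_1)$ depends only on $i_1$, where $\gamma_{m,k}$ is nonincreasing and nonnegative (it is a restriction of $f_{m,k}$). Thus $f_{m,k}(\row ik) = g_{m,k}\bigl(|\{t : i_t \le k\}|\bigr) = \gamma_{m,k}(i_1)$ for all $m,k$.

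The crux is to show that for $m \ge 2k$ the counting function $g_{m,k}$ takes only two values. Only the admissible leading entries $\{1,\dots,m-k+1\}$ matter. For $j\in\{1,\dots,k\}$ the committee position $(j,\,k+1,\,k+2,\,\dots,\,2k-1)\in[m]_k$ (admissible since $2k-1\le m$) has leading entry $j$ and exactly one entry among the top $k$, so $\gamma_{m,k}(j)=g_{m,k}(1)=:B$, independent of $j$. For $j\in\{k+1,\dots,m-k+1\}$ the position $(j,\,j+1,\,\dots,\,j+k-1)\in[m]_k$ has leading entry $j$ and no entry among the top $k$, so $\gamma_{m,k}(j)=g_{m,k}(0)=:A$, independent of $j$, with $A\le B$ as $\gamma_{m,k}$ is nonincreasing. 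Since the leading entry of every committee position in $[m]_k$ lies in $\{1,\dots,m-k+1\}$,
\[
f_{m,k}(\row ik)=\gamma_{m,k}(i_1)=A+(B-A)\,\alpha_k(i_1)=A+(B-A)\,f^\topkcc_{m,k}(\row ik),
\]
so $f_{m,k}$ is a nonnegative affine transform of the $\alpha_k$-CC scoring function, with leading coefficient $B-A>0$ by nondegeneracy of $\calR$.

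For the remaining range $k\le m\le 2k-1$, every admissible leading entry is at most $m-k+1\le k$, and a short count shows that the minimum of $|\{t:i_t\le k\}|$ over committee positions with a prescribed leading entry equals $2k-m$, independently of that entry; hence $\gamma_{m,k}$, and so $f_{m,k}$, is constant on admissible positions, matching $f^\topkcc_{m,k}$, which is identically $1$ on that range. Combining the two regimes, for every $m,k$ the function $f_{m,k}$ is a positive affine transform of $f^\topkcc_{m,k}$; therefore in every election the $f$-scores of the committees differ from their $f^\topkcc$-scores by a common positive affine map, so the two rules have the same winner sets, that is $\calR = \alpha_k$-CC. The reverse containment is immediate, since $\alpha_k$-CC is visibly representation-focused and top-$k$-counting. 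I expect the crux to be the main obstacle: the witnessing committee positions have to be admissible uniformly over the whole range $m\ge 2k$ while simultaneously forcing $g_{m,k}(1)=g_{m,k}(2)=\cdots=g_{m,k}(k)$, and the region $m<2k$ --- where the rule is necessarily degenerate --- must be reconciled with the (equally degenerate) behaviour of $\alpha_k$-CC there. One should also verify whether the hypothesis is best stated as ``non-degenerate'' rather than merely ``nontrivial'' (as in the SNTV proposition), since a committee scoring rule can be constant on individual $(m,k)$-slices without being the trivial rule.
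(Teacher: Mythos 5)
Your proof is correct and follows essentially the same route as the paper's: represent the rule simultaneously in both syntactic forms, invoke Lemma~\ref{lem:unique} to identify them, and then evaluate the resulting function at a few committee positions to force it to be a positive affine transform of $\alpha_k(i_1)$ (the paper uses positions $(i, m-k+2, \ldots, m)$ where you use $(j,k+1,\ldots,2k-1)$ and $(j,j+1,\ldots,j+k-1)$). The nontriviality-versus-per-slice-degeneracy issue you flag at the end is real, but it is equally present in the paper's own terser proof, which likewise passes from ``the rule is nontrivial'' to non-constancy of the scoring function on the particular $(m,k)$-slice under consideration.
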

\begin{proof}
  It is easy to verify that $\alpha_k$-CC is top-$k$-counting and
  representation-focused.  For the other direction, let $\calR$ be a
  rule which is both top-$k$-counting and representation focused.  It
  follows that $\calR \equiv \calR_f \equiv \calR_g$ for two
  functions, $f$ and $g$, such that, $f(i_1, \ldots, i_k) = f_1(i_1)$ and
  $g(i_1, \ldots, i_k) = g_1(s)$, where $s = \big|\{i_t \mid i_t \leq k\}\big|$.
  Since any linear transformation of $f$ has the same form, by the uniqueness %
  we can assume that $f = g$.

  For each $i \in [m-k+1]$ let $L(i)$ denote the sequence $(i, m-k+2, m-k+3, \ldots, m)$. For each $i, j > k$ we have that:
  \begin{align*}
  f_1(i) = f(L(i))= g(L(i)) = g(L(j)) = f(L(j)) = f_1(j) \text{.}
  \end{align*}
  By the same reasoning, we can prove that for each $i, j \leq k$ we have $f_1(i) = f_1(j)$.
  Since the rule is nontrivial, we know that for some $i, j$ it holds that $f_1(i) \neq f_1(j)$.
  This is sufficient to claim that $\calR$ is equivalent to $\alpha_k$-CC.
\end{proof}

Faliszewski et al.~\cite{fal-sko-sli-tal:c:top-k-counting} show that
top-$k$-counting rules tend to be $\np$-hard to compute, but point out
several polynomial-time computable exceptions, including Bloc and
Perfectionist. They also observe that for rules with concave counting
functions there are polynomial-time constant-factor approximation
algorithms, whereas for rules with convex counting functions such
algorithms may be missing (under standard complexity-theoretic
assumptions).

\subsection{OWA-Based Rules}
Skowron et al.~\cite{sko-fal-lan:c:collective} introduced a class of
multiwinner rules based on ordered weighted average (OWA) operators.
Similar rules for approval-based ballots were first considered in the
19th century by Thiele~\cite{Thie95a} and more recently were studied
by Aziz et
al.~\cite{azi-bri-con-elk-fre-wal:j:justified-representation,azi-gas-gud-mac-mat-wal:c:approval-multiwinner}
and Lackner and Skowron~\cite{lac-sko:t:approval-thiele} (see also the
discussion by Kilgour~\cite{kil-handbook}).  Elkind and
Ismaili~\cite{conf/aldt/ElkindI15} use OWA operators to define a
different class of multiwinner rules, which we do not consider in this
paper.

We provide intuition for the OWA-based rules by using our T-shirts
store example.

\begin{example}\label{example:shirts}
  Let us say that a customer views a T-shirt as ``good enough'' if it
  is among the top $10\%$ of the shirts available on the market.  Suppose
  that a customer identifies the best T-shirt available in the store
  and buys it with probability 1, provided it is ``good enough''. Then he
  or she also finds the second best T-shirt and buys it with
  probability $\nicefrac{1}{2}$ (again, provided that it is ``good
  enough''), the third best shirt with probability $\nicefrac{1}{3}$,
  and so on, all the way to the $k$'th best T-shirt, which he or she buys
  with probability $\nicefrac{1}{k}$ (if it is ``good enough'').  If $i_1,
  \ldots, i_k$ are the positions (in the customer's preference order)
  of the T-shirts that the store puts on display, then the expected
  number of T-shirts he or she buys is given by the function:
   \[
     f_{m,k}(i_1, \ldots, i_k) = 1\cdot \alpha_{0.10m}(i_1) + \nicefrac{1}{2}\cdot \alpha_{0.10m}(i_2)+\cdots+\nicefrac{1}{k}\cdot \alpha_{0.10m}(i_k) .
   \]
   Thus, to maximize its revenue, the store should find a winning
   committee for the election where the T-shirts are the candidates,
   the voters are the customers, and where we use committee scoring
   rule $\calR_f$ based on $f = (f_{m,k})_{k \le m}$. This multiwinner
   voting rule is $\alpha_{0.10m}$-PAV, a variant of the Proportional
   Approval Voting rule.
\end{example}

Now let us define OWA-based rules formally. An OWA operator $\Lambda$
of dimension $k$ is a sequence $\Lambda = (\lambda_1, \ldots,
\lambda_k)$ of nonnegative real numbers.

\begin{definition}\label{def:owa-csr}
  Let $\Lambda=(\Lambda^{m,k})_{k \leq m}$ be a sequence of OWA
  operators such that $\Lambda_{m,k} = (\lambda_1^{m,k}, \ldots,
  \lambda_k^{m,k})$ has dimension $k$. Let $\gamma =
  (\gamma_{m,k})_{k \le m}$ be a family of single-winner scoring
  functions. %
  Then, $\gamma$ and $\Lambda$ define a family $f =
  (f_{m,k})_{k \le m}$ of committee scoring functions such that for
  each $(i_1, \ldots, i_k) \in [m]_k$ we have:
  \[
    f_{m,k}(i_1, \ldots, i_k) = \textstyle\sum_{t=1}^{k}\lambda_t^{m,k} \gamma_{m,k}(i_t).
  \]
  We refer to committee scoring rules $\calR_f$ defined through $f$ in
  this way as OWA-based.
\end{definition}

It is known that weakly separable, representation-focused, and
top-$k$-counting rules are OWA-based. The first class is defined using
OWA operators $(1, \ldots, 1)$, the second one uses OWA operators
$(1,0, \ldots,0)$, and the last one contains rules that use $k$-Approval single-winner
scoring functions and any OWA operator (the argument that shows
this is due to Faliszewski et
al.~\cite[Proposition 3]{fal-sko-sli-tal:c:top-k-counting} and requires a bit more
effort than for the previous two classes). As a corollary to the
preceding propositions, we get the following.

\begin{corollary}
  Each of the classes of separable, top-$k$-counting, and
  representation-focused rules is strictly contained in the class of
  OWA-based rules.
\end{corollary}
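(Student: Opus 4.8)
The corollary has two halves for each of the three classes: the containment in the class of OWA-based rules, which is already recorded in the paragraph preceding the statement, and strictness of that containment. For the containments the plan is simply to recall the three standard representations. A separable family is in particular weakly separable, hence OWA-based via the OWA operators $(1,\ldots,1)$; a representation-focused family is OWA-based via the OWA operators $(1,0,\ldots,0)$; and a top-$k$-counting family is OWA-based by \cite[Proposition~3]{fal-sko-sli-tal:c:top-k-counting}, using $k$-Approval single-winner scoring functions together with an appropriate OWA operator. No new argument is needed here; this part is just a pointer.

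For strictness, the plan is to exhibit, for each of the three classes, an OWA-based rule that does not lie in it, and to deduce the ``does not lie in it'' part from the three preceding propositions together with Lemma~\ref{lem:unique}. For the class of separable rules, take $\beta$-CC: it is representation-focused, hence OWA-based, and non-degenerate; if it were weakly separable (a fortiori if it were separable), then, being both weakly separable and representation-focused, it would equal SNTV by the proposition characterizing SNTV as the unique non-degenerate rule with that pair of properties, contradicting $\beta$-CC $\neq$ SNTV. For the class of representation-focused rules, take $k$-Borda: it is separable, hence OWA-based, and non-degenerate; if it were also representation-focused, the same SNTV characterization would force $k$-Borda to equal SNTV, a contradiction. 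For the class of top-$k$-counting rules, again take $k$-Borda: being weakly separable, it would have to equal Bloc if it were top-$k$-counting, by the proposition characterizing Bloc as the unique nontrivial weakly separable top-$k$-counting rule, contradicting $k$-Borda $\neq$ Bloc.

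The only point that genuinely needs to be checked is that the witness rules differ from the rules they would otherwise be forced to coincide with, i.e., that on the relevant position sets the Borda committee scoring functions $\sum_t \beta_m(i_t)$ and $\beta_m(i_1)$ are not affine images of $\sum_t \alpha_1(i_t)$, $\alpha_1(i_1)$, or $\sum_t \alpha_k(i_t)$; this is immediate, and Lemma~\ref{lem:unique} then turns it into non-equality of the corresponding rules. I do not expect any real obstacle: the statement is essentially bookkeeping over the three propositions. The one thing to be slightly careful about is the separable/weakly-separable distinction, which is handled by the chain separable $\subseteq$ weakly separable $\subseteq$ OWA-based (giving the containment) together with the observation that a rule shown not to be weakly separable is a fortiori not separable (giving the strictness).
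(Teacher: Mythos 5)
Your proof is correct and follows essentially the same route as the paper: both derive strictness from the intersection-uniqueness propositions together with Lemma~\ref{lem:unique}. The only cosmetic difference is the choice of witnesses---the paper uses Bloc, SNTV, and $\alpha_k$-CC themselves (each lying in one pairwise intersection and hence, by uniqueness, excluded from the third class), whereas you use $\beta$-CC and $k$-Borda---but the logical structure is identical.
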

\begin{proof}
  Containment follows from the paragraph above.
  Strictness follows as we have Bloc as the unique rule in the intersection of
  top-$k$-counting and weakly separable; SNTV as the unique rule in the
  intersection of weakly separable and representation-focused; and
  $\alpha_k$-CC as the unique rule in the intersection of
  top-$k$-counting and representation-focused:
    it follows that Bloc is not representation-focused; SNTV is not top-$k$-counting; and
  $\alpha_k$-CC is not weakly separable.  We get the claim by noticing that
  Bloc, SNTV, and $\alpha_k$-CC, are all OWA-based.
\end{proof}

Naturally, there are also OWA-based rules that do not belong to any of
the above-mentioned classes. For example, this is the case for
$\alpha_t$-PAV rules (provided that the parameter $t$ is not equal to
the committee size $k$, e.g., if it is fixed as a constant) or for the
related $q$-HarmonicBorda rules (the $q$-HB rules), defined by the
following scoring functions ($q \in \mathbb{R}_{+}$ is a parameter):
\begin{align*}
  f^\qbordapav_{m,k}(\row ik) = \beta_m(i_1) +
  \textstyle\frac{1}{2^q}\beta_m(i_2) + \frac{1}{3^q}\beta_m(i_3) +
  \cdots + \frac{1}{k^q}\beta_m(i_k) \text{.}
\end{align*}
The $q$-HarmonicBorda rules were introduced by Faliszewski et
al.~\cite{fal-sko-sli-tal:c:paths}, who were looking for various means
of achieving a compromise between the $k$-Borda rule and the
Chamberlin--Courant rule ($0$-HB is $k$-Borda, and as $q$ becomes
larger and larger, $q$-HB becomes more and more similar to
$\beta$-CC).

\begin{proposition}\label{pro:owa}
  Neither $\alpha_t$-PAV nor $q$-HB is weakly separable, nor
  representation-focused, nor top-$k$-counting, for any choice of
  constants $t \in \naturals$ and $q \in \reals_+$.
\end{proposition}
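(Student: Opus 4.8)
The plan is to rule out each of the three syntactic forms in turn, always invoking Lemma~\ref{lem:unique} so that it suffices to show that the scoring functions of $\alpha_t$-PAV (resp.\ $q$-HB) are not linearly related to any function of the forbidden form, for some fixed $m$ and $k$. Throughout I fix a committee size $k \geq 2$ and a number of candidates $m$ large enough that $t < m$, $k < m$, and in the approval cases $t \geq k$ is not forced (so that the relevant positions can be placed both inside and outside the top-$t$ block). For $q$-HB there is no threshold at all, which actually makes things a bit easier; the $\alpha_t$-PAV cases are the more delicate ones because when $t \geq k$ the rule collapses onto $\alpha_k$-PAV-like behaviour on the positions that matter, so one has to take $t$ genuinely different from $k$ and exploit the positions between $t$ and $m$.

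First, \emph{not weakly separable.} A weakly separable family has $f_{m,k}(i_1,\dots,i_k) = \sum_{t=1}^k \gamma_{m,k}(i_t)$, so it is symmetric in a strong additive sense: the marginal effect of improving one coordinate does not depend on the values of the others. For $\alpha_t$-PAV the harmonic weights $1, \tfrac12, \dots, \tfrac1k$ break this: compare the committee positions obtained by placing a candidate in a ``good'' position (inside the top-$t$ block) when it is the first-ranked committee member versus when it is, say, the second-ranked one — the score increments differ by a factor of $2$, whereas for an additive function moving a fixed candidate between two fixed positions always changes the score by the same amount regardless of context. Formally, I would pick four committee positions witnessing $f_{m,k}(I) - f_{m,k}(I') \neq f_{m,k}(J) - f_{m,k}(J')$ where $I,I'$ and $J,J'$ differ in the ``same'' single coordinate change, contradicting additivity after any linear transformation (the additive constant $b_{m,k}$ cancels in the differences, and the positive factor $a_{m,k}$ cannot fix a sign/ratio mismatch). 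For $q$-HB the same idea works with $\beta_m$ in place of $\alpha_t$; here the score differences are even easier to compute explicitly since $\beta_m$ is strictly decreasing, so no threshold bookkeeping is needed.

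Second, \emph{not representation-focused.} A representation-focused $f_{m,k}$ depends only on $i_1$. But both $\alpha_t$-PAV and $q$-HB clearly react to the positions of the later committee members: fix $i_1$ and vary $i_2$ (keeping $i_2 > i_1$ and, for $\alpha_t$-PAV, letting $i_2$ cross the threshold $t$, which is possible precisely because $t$ is a constant rather than $k$ and $m$ is large). This changes the score, so after any linear transformation the function still depends on more than $i_1$ — a contradiction with Lemma~\ref{lem:unique}. This is the shortest of the three cases.

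Third, \emph{not top-$k$-counting.} A top-$k$-counting $f_{m,k}$ depends only on $|\{i_t : i_t \le k\}|$. For $q$-HB this fails because $\beta_m$ distinguishes positions within the top $k$: the committee positions $(1,2,\dots,k)$ and $(1,2,\dots,k-1,k)$ are the trivial example, so instead take $(1, k+1, \dots)$-type positions versus $(2, k+1, \dots)$ — same count of top-$k$ members (here, one), different score, contradicting the dependence on the count alone. For $\alpha_t$-PAV the obstruction is subtler since $\alpha_t$ does not distinguish among the top $t$ positions; but with $t \neq k$ one of two things happens. If $t < k$, then among the top $k$ positions $\alpha_t$ already distinguishes those $\le t$ from those in $(t,k]$, so two committee positions with the same top-$k$ count but different numbers of members in the top-$t$ block get different scores. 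If $t > k$ (and $t < m$), then $\alpha_t$ does not separate positions in $[1,k]$ from positions in $(k,t]$, so moving a committee member from position $k$ to position $k+1$ changes the top-$k$ count without changing the $\alpha_t$-value of that member — but it \emph{does} change the score, because the harmonic weight attached to that member changes when the sorted order of the committee changes; choosing the configuration so that this candidate's rank within the committee shifts does the job. I expect this last case — $\alpha_t$-PAV with $t > k$ failing to be top-$k$-counting — to be the main obstacle, since it is the one place where neither the threshold alone nor the weights alone suffice and one must combine them, carefully tracking how the sorted committee position (and hence which harmonic weight multiplies which $\alpha_t$-value) responds to a single-coordinate move. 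Once the four witnessing committee positions are exhibited in each subcase, Lemma~\ref{lem:unique} finishes the proof, as a positive affine map preserves the relevant equalities/inequalities among score differences.
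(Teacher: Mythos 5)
Your strategy is exactly the one the paper itself indicates (and whose details it explicitly omits as "simple but somewhat tedious"): by Lemma~\ref{lem:unique} it suffices to exhibit, for some fixed $m$ and $k$, committee positions witnessing that the scoring function is not a positive affine image of any function of the forbidden form. Your witnesses for the weakly-separable case, the representation-focused case, and the $q$-HB and $t<k$ top-$k$-counting cases are all sound: score differences are invariant up to the positive factor $a_{m,k}$, so unequal increments for the "same" single-coordinate change rule out additivity, genuine dependence on $i_2$ rules out representation-focus, and equal top-$k$ counts with unequal scores rule out top-$k$-counting.

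The one step that does not work as described is the subcase you flag as the "main obstacle," namely $\alpha_t$-PAV with $t>k$. You cannot change a committee member's rank within the committee by a single-position move: committee positions are strictly increasing sequences, so if another member occupies the adjacent position the move is a swap that leaves the sorted committee position unchanged, and otherwise the sorted order (hence the assignment of harmonic weights) is preserved. Fortunately this subcase is the easiest rather than the hardest: for $t>k$ take the committee positions $(t,t+1,\ldots)$ and $(t+1,t+2,\ldots)$ (padded with bottom positions). Both have top-$k$ count equal to $0$, yet the first has $f^{\toptpav}_{m,k}$-score $1$ and the second has score $0$, so no function of the count alone, nor any affine image of one, can equal $f^{\toptpav}_{m,k}$. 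Separately, note that the literal statement requires $t\ge 2$ and $q>0$: since only $i_1$ can equal $1$ in an increasing sequence, $\alpha_1$-PAV coincides with SNTV, and $0$-HB is $k$-Borda, both of which are weakly separable; any correct write-up must implicitly exclude these degenerate parameter values, and yours (like the paper's) does not address them.
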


To prove Proposition~\ref{pro:owa} it suffices to show that the
committee scoring functions of these rules cannot be expressed as linear
transformations of weakly separable, representation-focused, and
top-$k$-counting scoring functions, and invoke Lemma~\ref{lem:unique}.
We omit the details of this simple but somewhat tedious task.

Skowron et al.~\cite{sko-fal-lan:c:collective} have shown that
OWA-based rules are typically $\np$-hard to compute (with the clear
exception of, e.g., weakly separable rules and the Perfectionist rule). They
have also linked the properties of the OWA operators with the ability
to approximate the rules (generally speaking, if the OWA operators for
a given rule are non-increasing then there are polynomial-time
constant-factor approximation algorithms for this rule, and otherwise
they are typically missing\footnote{However, there are exceptions. For
  example, viewed as an OWA-based rule, Perfectionist uses OWA
  operators $(0, \ldots, 0,1)$ but still is polynomial-time
  computable. This is because, as a top-$k$-counting rule,
  Perfectionist uses a very restrictive single-winner scoring
  function, and is not captured by the results of Skowron et
  al.~\cite{sko-fal-lan:c:collective}.}).

\subsection{Decomposable Rules}
We introduce the following class that naturally generalizes the class
of OWA-based rules and resort to our T-shirt store example to help the
reader rationalize it.

\begin{definition}\label{def:decomposable-csr}
  Let $\gamma = (\gamma^{(t)}_{m,k})_{t \le k \le
    m}$ %
  be a family of single-winner scoring functions. %
  These functions define
  a family of committee scoring functions $f = (f_{m,k})_{k \le m}$
  such that for each committee position $(i_1, \ldots, i_k) \in [m]_k$
  we have:
  \[
    f_{m,k}(i_1, \ldots, i_k) = \textstyle\sum_{t=1}^{k}\gamma^{(t)}_{m,k}(i_t).
  \]
  We refer to committee scoring rules $\calR_f$ defined through $f$ in
  this way as {decomposable}.
\end{definition}

At first glance, decomposable rules seem very similar to the weakly
separable ones. The difference is that for fixed $m$ and $k$ and two
different values $t$ and $t'$, for decomposable rules the functions
$\gamma^{(t)}_{m,k}$ and $\gamma^{(t')}_{m,k}$ can be completely
different. It is apparent that OWA-based rules are decomposable.  
We will see that this containment is strict.

\begin{example}\label{example:shirts:decomposable}
  Let us recall from Example~\ref{example:shirts} that a customer
  considers a T-shirt to be ``good enough'' if it is among the best
  $10\%$ of all shirts and let us say that a shirt is ``great'' if it
  is among the top $1\%$ of all shirts.
  A customer buys two ``great'' T-shirts, or one ``at least good
  enough'' T-shirt (if there are no two ``great'' T-shirts on
  display). Naturally, the customer picks the best T-shirt(s) he can
  find (respecting the above constraints).  If $i_1, \ldots, i_k$ are
  the positions (in the customer's preference order) of the T-shirts
  that the store puts on display, then the number of T-shirts he or
  she buys is given by function:
   \[
     f_{m,k}(i_1, \ldots, i_k) = \alpha_{0.10m}(i_1) + \alpha_{0.01m}(i_2).
   \]
   Thus, to maximize its revenue, the store should find a winning
   committee for the election where the T-shirts are the candidates,
   the voters are the customers, and where we use decomposable
   committee scoring rule $\calR_f$ based on $f = (f_{m,k})_{k \le
     m}$.
\end{example}

We refer to decomposable rules defined through committee scoring
functions of the form
\[
  f_{m,k}(i_1, \ldots, i_k) = \lambda^k_1 \alpha_{t_{m,k,1}}(i_1) + \cdots +\lambda^k_k \alpha_{t_{m,k,k}}(i_k),
\]
where $\Lambda_k = (\lambda^k_1, \ldots, \lambda^k_k)$ are OWA
operators and $t_{m,k,1}, \ldots, t_{m,k,k}$ are sequences of
integers from $[m]$, as \emph{multithreshold} rules (we put no constraints on
$t_{m,k,1}, \ldots, t_{m,k,k}$; both increasing and decreasing
sequences are natural).

\begin{proposition}\label{prop:multithreshold-not-owa}
  The committee scoring rule defined through the multithreshold functions
  $f_{m,k}(i_1, \ldots, i_k) = \alpha_{p_1}(i_1) + \alpha_{p_2}(i_2)$,
  for $p_1, p_2 \in \{2, \ldots, m-k-2\}$, $p_1 > p_2+1 \geq 3$, is not
  OWA-based.
\end{proposition}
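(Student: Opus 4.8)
The plan is to invoke Lemma~\ref{lem:unique} to reduce the statement to an impossibility for a single scoring function. Suppose, towards a contradiction, that the rule $\calR_f$ from the statement coincides with an OWA-based rule $\calR_g$; then for our fixed $m$ and $k$ we may write $g_{m,k}(i_1,\ldots,i_k)=\sum_{t=1}^{k}\lambda_t\,\gamma(i_t)$ for some nonincreasing $\gamma\colon[m]\to\reals_+$ and weights $\lambda_1,\ldots,\lambda_k\ge 0$, and by Lemma~\ref{lem:unique} there are $a\in\reals_+$ and $b\in\reals$ with $f_{m,k}=a\cdot g_{m,k}+b$ (the computations below will immediately exhibit $a\neq 0$). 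It helps to record the shape of $f_{m,k}$: since $p_2<p_1$, a committee position $(i_1,\ldots,i_k)$ gets $f_{m,k}$-value $2$ when $i_2\le p_2$, value $1$ when $i_2>p_2$ but $i_1\le p_1$, and value $0$ when $i_1>p_1$.

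The engine of the proof is the following feature of OWA-based rules: if in a committee position we replace the coordinate sitting in slot $t$ by a smaller value $j$ in place of $j+1$ and leave every other coordinate fixed, then $g_{m,k}$ changes by exactly $\lambda_t\,(\gamma(j)-\gamma(j+1))$, and the factor $\gamma(j)-\gamma(j+1)$ is the same whatever $t$ is. I would apply this three times. (i) Moving the slot-$1$ coordinate from position $p_1+1$ to $p_1$, all other coordinates frozen above $p_1$, raises $f_{m,k}$ by $1$, because $\alpha_{p_1}(i_1)$ flips from $0$ to $1$ while the $\alpha_{p_2}$ term stays $0$ (using $p_1>p_2$); hence $a\lambda_1\,(\gamma(p_1)-\gamma(p_1+1))=1$, so $\lambda_1\neq 0$ and $\gamma(p_1)-\gamma(p_1+1)>0$. (ii) Performing the analogous replacement in slot $2$, across the same pair $p_1+1\mapsto p_1$, with the slot-$1$ coordinate frozen at position $1$, leaves $f_{m,k}$ unchanged, since $\alpha_{p_1}(i_1)$ depends only on $i_1$ and $\alpha_{p_2}$ is $0$ at both positions $p_1$ and $p_1+1$; hence $\lambda_2\,(\gamma(p_1)-\gamma(p_1+1))=0$, and as the second factor is nonzero we conclude $\lambda_2=0$. (iii) Performing the replacement in slot $2$ from position $p_2+1$ to $p_2$, again with slot $1$ frozen at position $1\le p_2$, raises $f_{m,k}$ by $1$ (now it is $\alpha_{p_2}(i_2)$ that flips from $0$ to $1$); hence $a\lambda_2\,(\gamma(p_2)-\gamma(p_2+1))=1$, so $\lambda_2\neq 0$ --- contradicting step (ii).

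Concretely, the pairs of committee positions realizing (i)--(iii) are $(p_1+1,p_1+2,\ldots,p_1+k)$ versus $(p_1,p_1+2,\ldots,p_1+k)$; then $(1,p_1+1,p_1+2,\ldots,p_1+k-1)$ versus $(1,p_1,p_1+2,\ldots,p_1+k-1)$; and finally $(1,p_2+1,p_2+2,\ldots,p_2+k-1)$ versus $(1,p_2,p_2+2,\ldots,p_2+k-1)$. The hypotheses are used precisely to keep these sequences inside $[m]_k$: $p_1\le m-k-2$ makes every entry at most $m$, $p_2\ge 2$ makes $(1,p_2,\ldots)$ strictly increasing, and $p_1>p_2+1$ (in fact $p_1>p_2$ is all that is needed) ensures the $\alpha_{p_2}$ term vanishes at positions $p_1$ and $p_1+1$. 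The argument is uniform in $k\ge 2$. I do not expect a genuine obstacle here; the only thing that needs care is exactly this bookkeeping --- choosing the filler coordinates in slots $3,\ldots,k$ identically within each pair so that each $g_{m,k}$-difference collapses to the single term $\lambda_t(\gamma(j)-\gamma(j+1))$, and checking the index bounds --- after which the short chain $\lambda_1\neq 0\Rightarrow\gamma(p_1)\neq\gamma(p_1+1)\Rightarrow\lambda_2=0\Rightarrow$ contradiction closes the proof.
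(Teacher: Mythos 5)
Your proposal is correct and follows essentially the same route as the paper's proof: reduce via Lemma~\ref{lem:unique} to a single OWA-form scoring function, then compare three pairs of committee positions differing in one coordinate by one (across the threshold $p_1$ in slots $1$ and $2$, and across $p_2$ in slot $2$) to force first $\lambda_2=0$ and then $\lambda_2\neq 0$. The only cosmetic differences are your choice of filler coordinates (freezing slot $1$ at position $1$ where the paper uses $p_2$ and $p_2-1$) and your more explicit bookkeeping of the constants $a,b$ from Lemma~\ref{lem:unique}, which the paper absorbs by saying the linear transformation can be assumed to make the two functions equal.
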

\begin{proof}
  Let us fix $p_1$, $p_2$, $m$, and $k$ that satisfy the requirements
  from the statement of the theorem.  For the sake of contradiction,
  assume that our multithreshold function is OWA-based.  By
  Lemma~\ref{lem:unique} we infer that there exist a  committee scoring function
  $g_{m,k}$ of the form:
  \begin{align*}
    g_{m,k}(i_1, \ldots, i_k) = \lambda_1 \gamma(i_1) + \lambda_2 \gamma(i_2) \text{,}
  \end{align*}
  where $\lambda_1, \lambda_2 \in \reals$ are two numbers and $\gamma$
  is a single-winner scoring function, such that for each committee
  position $I = (i_1, \ldots, i_k)$ it holds that $f_{m,k}(I) =
  g_{m,k}(I)$; this follows because, by Lemma~\ref{lem:unique}, the
  OWA-based committee scoring functions for our rule have to depend on
  $i_1$ and $i_2$ only, and by applying appropriate linear
  transformations, we can assume that these functions equal $f_{m,k}$.

  Let us now consider two committee positions $I' = (p_2, p_1+1,
  \ldots)$ and $I'' = (p_2, p_1, \ldots)$.  We see that:
  \begin{align*}
  f_{m,k}(I') - f_{m,k}(I'') &= \big(\alpha_{p_1}(p_2) + \alpha_{p_2}(p_1+1)\big) - \big(\alpha_{p_1}(p_2) + \alpha_{p_2}(p_1)\big) = \alpha_{p_2}(p_1+1) - \alpha_{p_2}(p_1) = 0 \text{,}
  \end{align*}
  and, thus, it must also be the case that:
  \begin{align*}
  g_{m,k}(I') - g_{m,k}(I'') = \big(\lambda_1\gamma(p_2) + \lambda_{2}\gamma(p_1+1)\big) - \big(\lambda_{1}\gamma(p_2) + \lambda_{2}\gamma(p_1)\big) = \lambda_2\big(\gamma(p_1+1) - \gamma(p_1)\big) = 0  
  \end{align*}
  On the other hand, for committee positions $J' = (p_1+1,p_1+2,
  \ldots)$ and $J'' = (p_1, p_1+2, \ldots)$ we have:
  \begin{align*}
  f_{m,k}(J') - f_{m,k}(J'') = \big(\alpha_{p_1}(p_1+1) + \alpha_{p_2}(p_1+2)\big) - \big(\alpha_{p_1}(p_1) + \alpha_{p_2}(p_1+2)\big) < 0 
  \end{align*}
  and, consequently:
  \begin{align*}
    g_{m,k}(J') - g_{m,k}(J'') &= \big(\lambda_1\gamma(p_1+1) +
    \lambda_{2}\gamma(p_1+2)\big) - \big(\lambda_{1}\gamma(p_1) +
    \lambda_{2}\gamma(p_1+2)\big) \\ &= \lambda_1 \big(\gamma(p_1+1) -
    \gamma(p_1)\big) < 0.
  \end{align*}
  Since we have both $\lambda_2\big(\gamma(p_1+1) - \gamma(p_1)\big) =
  0$ and $\lambda_1 \big(\gamma(p_1+1) - \gamma(p_1)\big) < 0$, we
  conclude that $\lambda_2 = 0$.  However, for committee positions $L'
  = (p_2-1, p_2+1, \ldots)$ and $L'' = (p_2-1, p_2, \ldots)$ we have:
  \begin{align*}
  f_{m,k}(L') - f_{m,k}(L'') = \big(\alpha_{p_1}(p_2-1) + \alpha_{p_2}(p_2+1)\big) - \big(\alpha_{p_1}(p_2-1) + \alpha_{p_2}(p_2)\big) < 0
  \end{align*}
  and:
  \begin{align*}
  g_{m,k}(L') - g_{m,k}(L'') &= \big(\lambda_1\gamma(p_2-1) +
    0 \cdot \gamma(p_2+1)\big) - \big(\lambda_{1}(p_2-1) +
    0 \cdot (p_2)\big) = 0 \text{,}
  \end{align*}
  which is a contradiction  and completes the proof. 
\end{proof}

We generally expect decomposable rules to be $\np$-hard, but even
among these rules there are polynomial-time computable rules (that are
not OWA-based). For example, in their discussion of top-$k$-counting
rules, Faliszewski et al.~\cite{fal-sko-sli-tal:c:top-k-counting}
mention a multithreshold rule that uses scoring functions that mix
SNTV and Perfectionist:
\begin{align*}
  f_{m,k}^{\sntv + \perf}(i_1, \ldots, i_k) = f_{m,k}^{\sntv}(i_1,
  \ldots,i_k) + f_{m,k}^{\perf}(i_1, \ldots,i_k) = \alpha_1(i_1) +
  \alpha_k(i_k) \text{.}
\end{align*}
Briefly put, each winning committee under this rule is either an SNTV
winning committee or is ranked on top $k$ positions by some voter, and
it suffices to check all such possibilities (thus, e.g., it is
possible to compute some winning committee in polynomial time). One
can show that this rule is not OWA-based using the same approach as in
Proposition~\ref{prop:multithreshold-not-owa}.

\subsection{Beyond Decomposable Rules}

Naturally, there are also committee scoring rules that go beyond the
class of decomposable rules. Below we provide two examples, starting
with one inspired by our T-shirt store.

\begin{example}
  In this example, the store does not want to maximize its direct
  revenue (i.e., the number of T-shirts sold), but the number of happy
  customers (in hope of increased future revenue). Let us say that a
  customer is happy if he or she finds at least two ``good enough''
  T-shirts or at least one ``great'' T-shirt (recall that ``at least
  good enough'' shirts are among top $10\%$ of all available ones, and
  ``great'' shirts are among the top $1\%$). Then the store should use
  the committee scoring function
 $$
   f_{m,k}(i_1, \ldots, i_k) = \max( \alpha_{0.01m}(i_1),   \alpha_{0.10m}(i_2)).
 $$
\end{example}

We refer to multithreshold rules with summation replaced by the $\max$
operator as \emph{max-threshold} rules. Using an approach similar to that
from Proposition~\ref{prop:multithreshold-not-owa}, one can show that
there are max-threshold rules that are not decomposable (we omit
details).

In their search for rules between $k$-Borda and $\bordacc$,
Faliszewski et al.~\cite{fal-sko-sli-tal:c:paths} introduced the class
of $\ell_p$-Borda rules, based on the following scoring functions ($p
\geq 1$ is a parameter):
\begin{align*}
  f^\ellpborda_{m,k}(\row ik) = \sqrt[p]{\beta_m^p(i_1) + \cdots + \beta_m^p(i_k)} \text{.}
\end{align*}
While the motivation for these rules is the same as for the
$q$-HarmonicBorda rules, they behave quite differently (see the work
of Faliszewski et al.~\cite{fal-sko-sli-tal:c:paths} for a detailed
discussion). 

\begin{corollary}
  There are committee scoring rules that are not decomposable.
\end{corollary}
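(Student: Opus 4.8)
The plan is to produce one concrete witness rule and argue it is not decomposable; the cleanest choice is the $\ellpborda$ rule for any fixed $p>1$ (note that for $p=1$ it coincides with $k$-Borda, which is weakly separable and hence decomposable, so the restriction $p>1$ is essential). The structural fact I would exploit is that every decomposable committee scoring function $g_{m,k}(i_1,\dots,i_k)=\sum_{t=1}^{k}\gamma^{(t)}_{m,k}(i_t)$ has vanishing \emph{mixed second differences}: fix the first two coordinate slots, fix all remaining coordinates, and pick admissible values $a<a'$ for slot one and $b<b'$ for slot two; then $g(a',b',\dots)-g(a,b',\dots)-g(a',b,\dots)+g(a,b,\dots)=0$, because the contributions of the fixed coordinates cancel in full and the $\gamma^{(1)}$- and $\gamma^{(2)}$-terms each cancel in pairs. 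This alternating sum gets multiplied by $a_{m,k}$ (and the additive constant contributes $b_{m,k}(1-1-1+1)=0$) under the affine transformations of Lemma~\ref{lem:unique}, so it remains $0$ for every family of scoring functions defining the same rule as a decomposable one. Hence it suffices to exhibit a single pair $(m,k)$ and four committee positions on which $f^{\ellpborda}_{m,k}$ has a nonzero mixed second difference.

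For the witness I would take $k=2$ and $m$ large, with the four committee positions $(1,3),(2,3),(1,4),(2,4)$ (for general $k\ge 2$, pad each with the fixed tail $5,6,\dots,k+2$, which is legal once $m\ge k+2$). Writing $\phi(s)=s^{1/p}$ and $u(i)=\beta_m(i)^p$, we have $f^{\ellpborda}_{m,k}(i_1,i_2,\dots)=\phi\big(u(i_1)+u(i_2)+C\big)$, where $C\ge 0$ is the constant contribution of the fixed tail. Putting $u_1'=u(1)>u_1=u(2)$ and $u_2'=u(3)>u_2=u(4)$, with $u_1,u_1'>0$, the mixed second difference over the four positions equals $\psi(u_2')-\psi(u_2)$, where $\psi(d)=\phi(u_1'+d+C)-\phi(u_1+d+C)$. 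For $p>1$ the function $\phi$ is strictly concave on $(0,\infty)$, so $\phi'$ is strictly decreasing; since $u_1'>u_1$ this gives $\psi'(d)=\phi'(u_1'+d+C)-\phi'(u_1+d+C)<0$, so $\psi$ is strictly decreasing, and because $u_2'>u_2$ we conclude the mixed second difference is strictly negative, in particular nonzero.

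Combining the two steps finishes the proof: if $\calR_{f^{\ellpborda}}$ were decomposable, there would be a decomposable family $g$ with $\calR_g=\calR_{f^{\ellpborda}}$, and Lemma~\ref{lem:unique} would force each $f^{\ellpborda}_{m,k}$ to be an affine transformation of $g_{m,k}$; but then the mixed second difference of $f^{\ellpborda}_{m,k}$ over the four positions above would be $0$, contradicting the computation. So $\ell_p$-Borda with $p>1$ is not decomposable, which proves the corollary. (An alternative witness, sketched in the discussion just before the corollary, is a suitable max-threshold rule, handled in the same spirit via Lemma~\ref{lem:unique}.) I do not expect a genuine obstacle here: the only care needed is purely routine bookkeeping, namely verifying that the four chosen sequences really are sorted, distinct, and contained in $[m]$ for the $(m,k)$ at hand, and that every argument fed to $\phi$ and $\phi'$ stays in $(0,\infty)$ so the strict-concavity step applies — both are guaranteed by taking $m$ large and keeping the fixed coordinates small.
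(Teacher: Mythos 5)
Your proof is correct, and it is more than the paper itself provides: the paper states this corollary with no proof at all, merely remarking (just before the corollary) that max-threshold rules outside the decomposable class can be found ``using an approach similar to that from Proposition~\ref{prop:multithreshold-not-owa}'' and omitting the details, and it never verifies non-decomposability of $\ellpborda$ either. Your route is in the same spirit as Proposition~\ref{prop:multithreshold-not-owa} in that it reduces everything to Lemma~\ref{lem:unique} and then tests the scoring function at a handful of committee positions, but your key device is different and cleaner: you isolate the vanishing of the discrete mixed second difference in the first two coordinate slots as an affine-invariant necessary condition for decomposability, so that a single strict-concavity computation for $\phi(s)=s^{1/p}$ with $p>1$ finishes the argument. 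The paper's intended approach (for max-threshold rules) would instead compare several first differences against one another, as in the OWA case. What your approach buys is a reusable, one-line obstruction to decomposability that applies uniformly to any rule whose scoring function is a nonlinear function of a separable quantity; what the paper's sketch buys is nothing extra here, since it is left unexecuted. All the details in your argument check out: the four padded positions $(1,3,5,\ldots,k+2)$, $(2,3,\ldots)$, $(1,4,\ldots)$, $(2,4,\ldots)$ are valid elements of $[m]_k$ for $m>k+2$, the tail contributes the same constant to all four terms for both $f^{\ellpborda}_{m,k}$ and any decomposable $g_{m,k}$, all arguments of $\phi$ and $\phi'$ are positive, and the caveat that $p>1$ is needed (since $1$-Borda is $k$-Borda, hence separable) is correctly flagged.
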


Throughout the rest of the paper, we will not venture outside the
class of decomposable rules. However, the above two examples show that
there are interesting rules there that also deserve to be studied
carefully.

\section{Axiomatic Properties of Committee Scoring Rules}
\label{sec:axiomatic_properties}

After exploring the universe of committee scoring rules from a
syntactic (structural) perspective, we now consider axiomatic
properties of the observed classes.
Specifically, we will use two types of monotonicity
notions---non-crossing monotonicity (together with its relaxations)
and committee enlargement monotonicity---to characterize several of
the classes and to gain insights regarding some others.
Indeed, various monotonicity concepts have long been used in social
choice (with Maskin monotonicity~\cite{maskin1999nash} being perhaps the
most important example) and we follow this tradition.

%
%
%

%
%
%
%
%

\subsection{Non-crossing Monotonicity and Its Relaxations}

\label{sec:noncrossing_monotonicity}

Elkind et al.~\cite{elk-fal-sko-sli:j:multiwinner-properties}
introduced two monotonicity notions for multiwinner rules, namely
candidate monotonicity (recall Section~\ref{sec:basic}) and
non-crossing monotonicity. In the former, we require that if we shift
forward a candidate from a winning committee in some vote, then this
candidate still belongs to some winning committee after the shift, but
possibly to a different one.
In the latter monotonicity notion, we require that the whole committee
remains winning, but we forbid shifts were members of the winning
committee pass each other (i.e., after a shift none of the committee
members gets worse and some get better). More formally, we have the
following definition.

\begin{definition}[Elkind et al.~\cite{elk-fal-sko-sli:j:multiwinner-properties}]\label{def:noncrossing_monotonicity}
  A multiwinner rule $\calR$ is non-crossing monotone if
  for each election $E = (C,V)$ and each $k \in [|C|]$ the following
  holds: if $c \in W$ for some $W \in \calR(E,k)$, then for each $E'$
  obtained from~$E$ by shifting $c$ forward by one position in some
  vote without passing another member of $W$, we still have  $W \in
  \calR(E',k)$.
\end{definition}

Elkind et al.~\cite{elk-fal-sko-sli:j:multiwinner-properties} have
shown that weakly separable rules are non-crossing monotone, and we
will now show that the converse is also true.  However, before we
proceed to the proof, we introduce the following notation (that will
also be useful in further analysis):
\begin{enumerate}
\item[] Consider an arbitrary number of candidates $m$ and a size of
  committee $k \in [m]$.  For each $t \in [k]$ and $p \in
  [m]$, %
  let $P_{m, k}(t, p)$ be the set of committee positions from $[m]_k$
  that have their \hbox{$t$-th} element equal to $p$ and such that
  they do not include position $p-1$.  We set $P_{m, k}(p) =
  \bigcup_{t \leq k} P_{m, k}(t, p)$.
\end{enumerate}
For example, if $m=5$ and $k=3$,
then
$P_{5,1}(1, 4) = \emptyset$,
$P_{5,3}(2, 4) = \{(1, 4, 5), (2, 4, 5)\}$,
$P_{5,3}(3, 4) = \{(1, 2, 4)\}$,
and
$P_{5,3}(4) = P_{5,3}(1,4) \cup P_{5,3}(2,4) \cup P_{5,3}(3,4) = \{(1, 4, 5), (2, 4, 5), (1, 2, 4)\}$.

Intuitively, $P_{m,k}(t, p)$ is a collection of
committee positions  in which the $t$-th committee member
stands on position $p$ and where shifting him or her %
without passing another committee
member is possible.  
Similarly, $P_{m,k}(p)$ is a collection of committee positions 
in which there is \emph{some} committee member on position $p$ and it
is possible to shift him to position $p-1$ without passing another
committee member.

\begin{theorem}\label{thm:weaklysep}
  Let $\calR_f$ be a committee scoring rule.
$\calR_f$ 
  is non-crossing monotone if and only if it is weakly
  separable. %
\end{theorem}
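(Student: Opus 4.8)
The easy direction, that weakly separable rules are non-crossing monotone, is already due to Elkind et al.~\cite{elk-fal-sko-sli:j:multiwinner-properties}, so the plan is to prove the converse: if $\calR_f$ is non-crossing monotone, then it is weakly separable. Fix $m$ and $k$; the goal is to produce a nonincreasing single-winner scoring function $\gamma_{m,k}\colon[m]\to\R_{+}$ with $f_{m,k}(i_1,\ldots,i_k)=\sum_{t=1}^{k}\gamma_{m,k}(i_t)$ up to an additive constant (which, by Lemma~\ref{lem:unique}, does not affect the rule). The case $k=m$ is trivial, since $[m]_m$ is a single point, so assume $k\le m-1$.

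The heart of the argument is to show that the marginal gain of shifting one committee member forward by a single step depends only on the position it moves into. Precisely, for $p\in\{2,\ldots,m\}$ and $I\in P_{m,k}(p)$, write $I^{\uparrow}$ for $I$ with the coordinate equal to $p$ replaced by $p-1$, and set $\Delta_p(I)=f_{m,k}(I^{\uparrow})-f_{m,k}(I)$; this is nonnegative because $I^{\uparrow}$ dominates $I$ and $f_{m,k}$ is a committee scoring function. I claim $\Delta_p(I)$ is the same for all $I\in P_{m,k}(p)$. To see this, let $E$ be the election over a fixed $m$-element candidate set that contains one voter for each of the $m!$ preference orders; by symmetry every size-$k$ committee wins $E$. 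Pick the voter $v$ ranking $d_1\succ\cdots\succ d_m$ and let $E'$ be obtained from $E$ by swapping $d_{p-1}$ and $d_p$ in $v$, i.e., shifting $d_p$ forward by one position. Then a committee containing both or neither of $d_{p-1},d_p$ keeps its committee position in $v$ and so keeps its score; a committee containing $d_{p-1}$ but not $d_p$ can only lose score; and a committee $W$ containing $d_p$ but not $d_{p-1}$ has $I_W:=\pos_v(W)\in P_{m,k}(p)$ and its score increases by exactly $\Delta_p(I_W)$. Since $P_{m,k}(p)\ne\emptyset$ (as $k\le m-1$), the maximum score in $E'$ equals the old maximum plus $\max_{W}\Delta_p(I_W)$, the maximum taken over committees of this last type. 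But non-crossing monotonicity, applied with the shifted candidate $c=d_p$ (which passes only $d_{p-1}\notin W$), guarantees that every such $W$ still wins in $E'$; hence $\Delta_p(I_W)$ attains that maximum for each such $W$, so all the values coincide. As every element of $P_{m,k}(p)$ is realized as $\pos_v(W)$ for a suitable $W$, the claim follows; denote the common value by $\delta(p)$.

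With $\delta$ in hand, finishing is a telescoping argument on the ``shift graph'' with vertex set $[m]_k$ and an edge between $I$ and $I^{\uparrow}$ for every valid single shift. This graph is connected: from any committee position, repeatedly pushing the first coordinate $i_t$ with $i_t>t$ down by one step (which is always a legal shift, since $i_t-1$ lies strictly between $i_{t-1}$ and $i_t$) strictly decreases $\sum_t i_t$ and terminates at $(1,\ldots,k)$. Now put $\Gamma(p)=\sum_{j=2}^{p}\delta(j)$ and $g(I)=f_{m,k}(I)+\sum_{t=1}^{k}\Gamma(i_t)$. Along each edge, $g(I^{\uparrow})-g(I)=\Delta_p(I)+\bigl(\Gamma(p-1)-\Gamma(p)\bigr)=\delta(p)-\delta(p)=0$, so $g$ is constant on $[m]_k$, say $g\equiv C$. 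Thus $f_{m,k}(i_1,\ldots,i_k)=C-\sum_{t=1}^{k}\Gamma(i_t)$. Setting $\gamma_{m,k}(p):=\Gamma(m)-\Gamma(p)$ yields a nonnegative, nonincreasing single-winner scoring function (nonincreasing since $\delta\ge 0$) with $\sum_{t}\gamma_{m,k}(i_t)=k\Gamma(m)-\sum_{t}\Gamma(i_t)$, which equals $f_{m,k}(I)$ up to the additive constant $C-k\Gamma(m)$; since additive constants do not change which committees win, the weakly separable family $\bigl(\sum_{t}\gamma_{m,k}(i_t)\bigr)_{k\le m}$ defines $\calR_f$, so $\calR_f$ is weakly separable.

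I expect the main obstacle to be the bookkeeping in the key claim: one has to check carefully that after the swap the only committees whose scores change are those enumerated, that the committees gaining score realize exactly the set $P_{m,k}(p)$ (each gaining $\Delta_p$ of its own position), and that non-crossing monotonicity genuinely applies (the shifted candidate passes no member of the committee). A minor secondary point is justifying $P_{m,k}(p)\ne\emptyset$ and the connectivity of the shift graph, and noting that the degenerate and small cases ($k=m$, $k=1$, or $f_{m,k}$ constant) are either vacuous or immediate.
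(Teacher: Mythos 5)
Your proposal is correct and follows essentially the same route as the paper's proof: the same all-permutations election, the same use of non-crossing monotonicity to show that the marginal gain of a single forward shift depends only on the position moved into (the paper's function $h_{m,k}$ is your $\delta$), and the same telescoping construction of $\gamma_{m,k}$ from these increments. Your phrasing of the key claim via the maximum score and of the telescoping via a potential function on the shift graph is only a cosmetic repackaging of the paper's pairwise-equality and sequential-shifting arguments.
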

\begin{proof}
  Let $\calR_f$ be a committee scoring rule defined through a family
  $f=(f_{m,k})_{k\le m}$ of scoring functions $f_{m,k}\colon [m]_k\to
  \reals$.  Due to the results of Elkind et
  al.~\cite{elk-fal-sko-sli:j:multiwinner-properties}, it suffices to
  show that if $\calR_f$ is non-crossing monotone then it is weakly
  separable. So let us assume that $\calR_f$ is non-crossing monotone.

  Let us fix the number of candidates $m$ and the committee size $k
  \in [m]$. Let $E = (C,V)$ be an election with candidate set $C = \{c_1,
  \ldots, c_m\}$ and collection of voters $V = (v_1, \ldots, v_{m!})$,
  with one voter for each possible preference order. By symmetry, every
  size-$k$ subset $W$ of $C$ is a winning committee %
  under $\calR_f$.

  Consider an arbitrary integer $p \in \{2, \ldots, m\}$, two
  arbitrary (but distinct) committee positions $I = (i_1, \ldots,
  i_k)$ and $J = (j_1, \ldots, j_k)$ from $P_{m,k}(p)$, and an
  arbitrary vote $v$ from the election.  Let $C(I)$ be the set of
  candidates that $v$ ranks at positions $i_1, \ldots, i_k$, and let
  $C(J)$ be defined analogously for the case of~$J$. Let $E'$ be the
  election obtained by shifting in $v$ the candidate currently in
  position $p$ one position up. Finally, let $I'$ and $J'$ be
  committee positions obtained from $I$ and $J$ by replacing the
  number $p$ with $p - 1$ (it is possible to do so as $I$ and $J$ are
  both from $P_{m,k}(p)$). %

  Since, by assumption, $\calR_f$ is non-crossing monotone, it must be
  the case that $C(I)$ and $C(J)$ are winning committees under
  $\calR_f$ also in election $E'$. The difference of the scores of committee $C(I)$ in elections
  $E'$ and $E$ is $f_{m,k}(I') - f_{m,k}(I)$, and the difference of the scores of committee
  $C(J)$ in $E'$ and $E$ is $f_{m,k}(J') - f_{m,k}(J)$. It must be the case that:
  \begin{align*}
     f_{m,k}(I') - f_{m,k}(I) = f_{m,k}(J') - f_{m,k}(J)\ge 0\text{.}
  \end{align*}
  However, since the choice of $p$ and the choices of $I$ and $J$
  within $P_{m,k}(p)$ were completely arbitrary, it must be the case
  that there is a function $h_{m,k}$ such that for each $p \in \{2,
  \ldots, m\}$, each sequence $U \in P_{m,k}(p)$, and each committee
  position $U'$ obtained from $U$ by replacing position $p$ with
  $p-1$, we have:
  \begin{align*}
    h_{m,k}(p-1) = f_{m,k}(U') - f_{m,k}(U) 
  \end{align*}
  and the values of $ h_{m,k}$ are non-negative.

  Our goal now is to construct a single-winner scoring  function $\gamma_{m,k}$
  such that for each committee position $(\ell_1, \ldots, \ell_k) \in [m]_k$ it holds that:
  \begin{align*}
    f_{m,k}(\ell_1, \ldots, \ell_k) = \gamma_{m,k}(\ell_1) +\gamma_{m,k}(\ell_2)+ \cdots + \gamma_{m,k}(\ell_k) \text{.}
  \end{align*}
  We define  $\gamma_{m,k}$ by requiring that
  (a) for
  each $p \in \{2, \ldots, m\}$, we have $\gamma_{m,k}(p-1)-\gamma_{m,k}(p) = h_{m,k}(p-1)$
  (so $ \gamma_{m,k}$ is a non-increasing function), and 
  (b) 
  $\gamma_{m,k}(m) $ is such that $\gamma_{m,k}(m) +
  \gamma_{m,k}(m-1) + \ldots + \gamma_{m,k}(m-(k-1)) =
  f_{m,k}(m-(k-1), \ldots, m-1, m)$ (so that $\gamma_{m,k}$ indeed
  correctly describes the $f_{m,k}$-score of the committee ranked at the $k$
  bottom positions as a sum of the scores of the candidates).

  We fix some committee position $(\ell_1, \ldots, \ell_k)$ from
  $[m]_k$.
  We know that, due to the choice of $\gamma_{m,k}(m)$, for $R = (r_1, \ldots, r_k) = (m-k+1, \ldots, m)$ it does
  hold that $f_{m,k}(r_1, \ldots, r_k) = \gamma_{m,k}(r_1) + \cdots +
  \gamma_{m,k}(r_k)$. 
  Now we can see that this property also holds for $R' = (r_1-1, r_2,
  \ldots, r_k)$. The reason is that
  \begin{align*}
  \gamma_{m,k}(m-k) - \gamma_{m,k}(m-k+1) = h_{m,k}(m-k) = f_{m,k}(R') - f_{m,k}(R) \text{.}
  \end{align*}
  Thus, for $R'$, we have $f_{m,k}(R') = \gamma_{m,k}(r_1-1) + \gamma_{m,k}(r_2) + \cdots + \gamma_{m,k}(r_k)$.
  We can proceed in this way, shifting the top member of the
  committee up by sufficiently many positions, to obtain $R'' = (\ell_1,
  r_2, \ldots, r_k)$ and (by the same argument as above) have:
  \begin{align*}
  f_{m,k}(R'') = \gamma_{m,k}(\ell_1) + \gamma_{m,k}(r_2) + \cdots + \gamma_{m,k}(r_k) \text{.}
  \end{align*}
  Then we can do the same to position $r_2$, and keep decreasing it
  until we get $\ell_2$. Then the same for the third position, and so on,
  until the $k$-th position. Finally, we get:
\begin{align*}
  f_{m,k}(\ell_1, \ldots, \ell_k) = \gamma_{m,k}(\ell_1) + \cdots + \gamma_{m,k}(\ell_k) \text{.}
\end{align*}
  This 
  proves our claim and 
 completes the proof.
\end{proof}

Non-crossing monotonicity is particularly natural when we seek
committees of individually excellent candidates (for example, when we
seek finalists of a competition or where we are interested in some shortlisting
tasks~\cite{elk-fal-sko-sli:j:multiwinner-properties,fal-sko-sli-tal:b:multiwinner-trends}). Indeed,
if we have a committee $W$ where we view each member as good enough to
be selected, and one of the members of $W$ improves its performance
without hurting the performance of any of the others, then it is
perfectly natural to expect that all members of $W$ are still good
enough to be selected.
Theorem~\ref{thm:weaklysep} justifies axiomatically that if we are
looking for a committee scoring rule for selecting individually
excellent candidates then we should look within the class of weakly
separable rules.
In fact, Elkind et al.~\cite{elk-fal-sko-sli:j:multiwinner-properties}
pointed out that we should focus on separable rules only, and we will
provide axiomatic justification for this view in
Section~\ref{sec:com_mon_sep_rules}.

\subsubsection{Prefix Monotonicity and Decomposable Rules}\label{sec:other_noncrossing_monotonicity}

Based on the idea of non-crossing monotonicity, we can define other
similar notions. In this section we introduce and discuss one of them,
which we call {\em prefix monotonicity}.  Intuitively, if a rule
satisfies the prefix monotonicity condition, then shifting forward a
group of highest-ranked members of a winning committee within a given
vote never prevents this committee from winning.

\begin{definition}%
\label{def:prefix_monotonicity}
A multiwinner rule $\calR$ satisfies \emph{$t$-prefix monotonicity},
$0 \leq t \leq k$, if for each election $E = (C,V)$ and each committee
size $k$, $t \leq k \leq |C|$ the following holds: For every $W \in
\calR(E,k)$, and every $E'$ obtained from~$E$ by shifting in some vote
the top-ranked $t$ members of $W$ (according to this vote), then we have
that $W \in \calR(E',k)$. We say that $\calR$ satisfies \emph{prefix
  monotonicity} if it satisfies $t$-prefix monotonicity for every $t
\in \naturals$.\footnote{Note that $0$-prefix monotonicity is an empty concept; as such, every rule satisfies it.}
\end{definition}

Prefix monotonicity is a relaxation of non-crossing monotonicity and,
in consequence, all weakly separable rules satisfy it.  In the
remaining part of this section we will show that only decomposable
rules can be prefix-monotone (and mostly, though not only, those based
on convex functions; we will explain this in the further part of this
section).  Before we prove this statement, let us first prove one more
technical lemma, which will allow us to reuse some of the reasoning
later on. The lemma uses the same high-level idea as the first part of
the proof of Theorem~\ref{thm:weaklysep}, yet it is more involved and
differs in a number of details.  (Recall that $P_{m, k}(t, p)$ used in
the statement of the lemma was defined right before
Theorem~\ref{thm:weaklysep}.)

\begin{lemma}\label{lem:t_prefix_monotonicity}
  Let $\calR_f$ be a committee scoring rule and let $t$ be an integer
  such that for each $x \in [t]$ this rule is $x$-prefix
  monotone.
  Then, for every number of candidates $m$ and size of the committee~$k$,
  there exists a function $h_t$ such that for each $p\in [m]$,
  each $U \in P_{m, k}(t, p)$, with $p\ge t$, and committee position
  $U'$ obtained from $U$ by replacing position $p$ with $p-1$, we
  have:
\begin{equation*}
\label{functionh_t}
      h_t(p-1) = f(U') - f(U) \geq 0 \textrm{.}
\end{equation*}
\end{lemma}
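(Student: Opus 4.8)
The plan is to mirror the first part of the proof of Theorem~\ref{thm:weaklysep}: work in the ``symmetric'' election $E$ over $C$ that contains all $m!$ preference orders, so that every size-$k$ committee wins with one common score $M$. The only real extra difficulty is that a prefix shift moves several committee members at once, so I would proceed by induction on $t$, peeling off the contributions of the leading members with the help of $h_1,\dots,h_{t-1}$. Fix $m$ and $k$ and abbreviate $f=f_{m,k}$; note that $f(U')-f(U)\ge 0$ is immediate since $U'$ weakly dominates $U$, so the whole content is that $f(U')-f(U)$ depends only on $p$ (we only need to treat $p$ for which $P_{m,k}(t,p)\ne\emptyset$). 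For the base case $t=1$: in one vote $v$ of $E$ shift the candidate at position $p$ one step forward, obtaining $E'$ with common winning score $M'$; for every $U=(p,i_2,\dots,i_k)\in P_{m,k}(1,p)$ this is exactly a forward shift of the top-ranked member of $C(U)$, so $1$-prefix monotonicity keeps $C(U)$ winning, and since only $v$ changed and there $C(U)$ passed from committee position $U$ to $U'$, its new score $M-f(U)+f(U')$ equals $M'$; hence $f(U')-f(U)=M'-M$ for all tails, and we set $h_1(p-1):=M'-M$.

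For the inductive step, assume the lemma for all $x<t$; the functions $h_1,\dots,h_{t-1}$ then let us evaluate the change of $f$ caused by moving the $x$-th member of any committee from a position $q$ to $q-1$, provided $q-1$ is not already a committee position: that change equals $h_x(q-1)$, independently of the remaining members. Fix $p$ with $P_{m,k}(t,p)\ne\emptyset$ and a ``spread out'' head $a_1<\dots<a_{t-1}<p$ (consecutive entries, and also $a_{t-1}$ and $p$, differing by at least $2$) with $a_1\ge 2$. In a vote $v$, shift the candidates at positions $a_1,\dots,a_{t-1},p$ each one step forward; being spread out this is a genuine permutation of $v$, and it is a $t$-prefix shift of $C(U)$ for every $U=(a_1,\dots,a_{t-1},p,i_{t+1},\dots,i_k)\in P_{m,k}(t,p)$. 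By $t$-prefix monotonicity all these committees stay winning, so, as in the base case, $f(\tilde U)-f(U)=M'-M$ for every tail, where $\tilde U=(a_1-1,\dots,a_{t-1}-1,p-1,i_{t+1},\dots,i_k)$. Writing $U'=(a_1,\dots,a_{t-1},p-1,i_{t+1},\dots)$ and splitting the passage $U\to U'\to\tilde U$ into one-member moves, the induction hypothesis gives $f(\tilde U)-f(U')=\sum_{x=1}^{t-1}h_x(a_x-1)$, which does not involve the tail; hence $f(U')-f(U)=(M'-M)-\sum_{x=1}^{t-1}h_x(a_x-1)$ is independent of $i_{t+1},\dots,i_k$.

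It remains to remove the dependence on $a_1,\dots,a_{t-1}$. Take $U_1,U_2\in P_{m,k}(t,p)$ that coincide except that the $x$-th coordinate of $U_2$ is one less than that of $U_1$ (for some $x\le t-1$, the step being admissible for the induction hypothesis). Applying the hypothesis to the pair $(U_1,U_2)$ and to the pair $(U_1',U_2')$ — each is a one-step shift of an $x$-th member — gives $f(U_2)-f(U_1)=f(U_2')-f(U_1')$, hence $f(U_1')-f(U_1)=f(U_2')-f(U_2)$. Since every admissible head can be brought to $(1,2,\dots,t-1)$ by repeatedly lowering its smallest ``too large'' coordinate (each such step is legal for the induction hypothesis), $f(U')-f(U)$ takes a single value over all of $P_{m,k}(t,p)$; call it $h_t(p-1)$. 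Together with nonnegativity this closes the induction.

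The crux — and the place where this lemma is heavier than Theorem~\ref{thm:weaklysep} — is the inductive step: one must arrange the prefix shift so that it is simultaneously a legal $t$-prefix move for a whole family of committees, decompose its effect into elementary one-member moves, use $h_1,\dots,h_{t-1}$ to cancel the contributions of the $t-1$ leading members, and check throughout that no auxiliary shift tries to move a candidate onto a committee position that is already occupied (this is what the ``spread out'' and $a_1\ge 2$ conditions are for, and why the final connectivity remark is needed to cover the remaining configurations). It is the bookkeeping, rather than any new idea, that makes the argument involved.
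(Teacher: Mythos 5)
Your overall strategy --- the symmetric election with all $m!$ votes, induction on $t$, and using the already-constructed $h_1,\dots,h_{t-1}$ to peel off the contribution of the leading members --- is in the spirit of the paper's argument, and your base case and head-connectivity step are fine. But the inductive step has a genuine gap in exactly the place you flag as the crux: tail-independence. Your argument establishes that $f(U')-f(U)$ does not depend on the tail only for committee positions whose head $(a_1,\dots,a_{t-1})$ is ``spread out'' with $a_1\ge 2$ and $a_{t-1}\le p-2$. Such a head exists only when $p\ge 2t$. For $t+1\le p\le 2t-1$ the set $P_{m,k}(t,p)$ is typically nonempty and contains positions with many different tails, yet admits no spread-out head; e.g., for $t=2$, $p=3$, $k=3$, $m=6$, the set $P_{6,3}(2,3)=\{(1,3,j): j\in\{4,5,6\}\}$ has the unique head $(1)$, the $2$-prefix shift of $C\big((1,3,j)\big)$ is not even a legal operation (the top member sits at position $1$ and cannot be shifted forward), and your connectivity step is useless here because it only relates positions with \emph{different heads and the same tail}. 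So for these $p$ your proof says nothing about whether $f(1,2,j)-f(1,3,j)$ depends on $j$, and the lemma is needed in full strength for such $p$ in the proof of Theorem~\ref{thm:prefixMonAndDecomposable}.

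The paper avoids this by never shifting $C(U)$ itself. Given $I=(i_1,\dots,i_{t-1},p,\dots)\in P_{m,k}(t,p)$ it introduces the auxiliary position $I_1=(i_1+1,\dots,i_{t-1}+1,p,\dots)$ (valid because $p-1\notin I$ forces $i_{t-1}\le p-2$), whose head starts at $i_1+1\ge 2$; from $C(I_1)$ both the $(t-1)$-prefix shift (landing at $I$) and the $t$-prefix shift (landing at $I'$) are always legal, whatever the head of $I$. Comparing the two resulting winning scores for two positions $I,J$ with the same head and subtracting gives $f(I')-f(I)=f(J')-f(J)$ directly, with no spread-out hypothesis and no decomposition into one-member moves; a separate argument (close in spirit to your step (b)) then removes the dependence on the head. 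To repair your proof you would need to replace step (a) by some device of this kind; as written, the induction does not close.
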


\begin{proof}
  Consider an arbitrary integer $p \in [m]$ %
  and two arbitrary (but distinct) committee positions $I = (i_1,
  \ldots, i_k)$ and $J = (j_1, \ldots, j_k)$ from $P_{m,k}(t, p)$,
  such that $I$ and $J$ have the first $t$ elements equal. Let $I'$
  and $J'$ be the committee positions obtained from $I$ and $J$,
  respectively, by replacing the element $p$ with $p-1$ (by the choice
  of $I$ and $J$, it is possible to do so). Let $I_1$ and $J_1$ be the
  committee positions obtained from $I$ and $J$, respectively, by
  increasing every element with value lower than $p$ by one (in particular, when
  $t=1$ we have $I_1=I$ and $J_1=J$). The way the sequences $I'$ and
  $I_1$ are constructed from $I$ is depicted in
  Figure~\ref{fig:shifting} (in this example $t=4$) and is presented
  below, also for $J'$ and $J_1$ ($i_t = p$, $i_{t-1} \leq p-2$, $j_1
  = i_1, \ldots, j_{t} = i_{t}$):
  \begin{align*}
    &I = (i_1, i_2 \ldots, i_{t-1}, p, i_{i+1}, \ldots, i_k), &  & J = (i_1, i_2 \ldots, i_{t-1}, p, j_{i+1}, \ldots, j_k), \\
    &I' = (i_1, i_2 \ldots, i_{t-1}, p-1, i_{i+1}, \ldots, i_k), &  & J' = (i_1, i_2 \ldots, i_{t-1}, p-1, j_{i+1}, \ldots, j_k), \\
    &I_1 = (i_1+1, i_2+1 \ldots, i_{t-1}+1, p, i_{i+1}, \ldots, i_k), &  & J_1 = (i_1+1, i_2+1 \ldots, i_{t-1}+1, p, j_{i+1}, \ldots, j_k).
  \end{align*}

  \begin{figure}[tb]
    \begin{center}
	\includegraphics[scale=0.60]{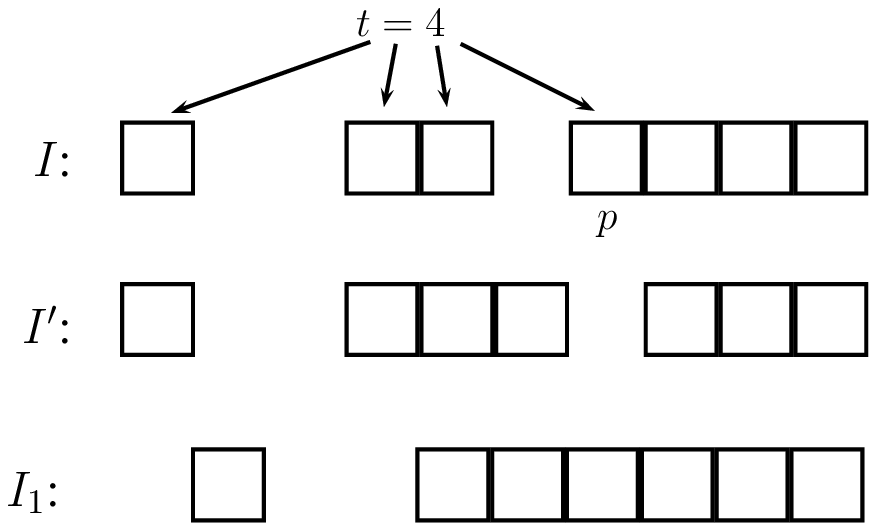}
      \end{center}
      \caption{An example showing how the sequences of positions $I$,
        $I'$ and $I_1$ from the proof of
        Lemma~\ref{lem:t_prefix_monotonicity} are related.}
      \label{fig:shifting}
    \end{figure}

    As in the proof of Theorem~\ref{thm:weaklysep}, we construct an
    election $E = (C,V)$ with candidate set $C = \{c_1, \ldots, c_m\}$
    and $m!$ voters $v_1, \ldots, v_{m!}$, one for each possible
    preference order. By symmetry, every size-$k$ subset $W$ of $C$ is
    a winning committee of $E$ under $\calR_f$. Further, consider an
    arbitrary vote $v$ from the election; let $C(I_1)$ and $C(J_1)$ be
    the committees that $v$ ranks on positions $I_1$ and $J_1$,
    respectively. As all other committees, $C(I_1)$ and $C(J_1)$ are
    winning in $E$. Let us shift in $v$ by one position forward each
    candidate from $C(I_1)$ that stands on a position with value lower
    than $p$. After such an operation, committee $C(I_1)$ will have
    position $I$ and committee $C(I_1)$ will have position $J$. Since,
    by assumption, $\calR_f$ is $(t-1)$-prefix-monotone, and exactly
    $t-1$ candidates have changed positions, it must be the case that
    $C(I_1)$ and $C(J_1)$ are still winning under $\calR_f$. 
    It must be the case that:
    \begin{align*}
      f(I) - f(I_1) = f(J) - f(J_1)\ge 0\textrm{.}
    \end{align*}
    By a similar reasoning, using the fact that $\calR_f$ is
    $t$-prefix-monotone, we also conclude that:
    \begin{align*}
      f(I') - f(I_1) = f(J') - f(J_1)\ge 0\textrm{.}
    \end{align*}
    From the two above equalities we get that (the final inequality follows because
    $J'$ dominates $J$):
    \begin{equation}\label{eq:pm:1}
      f(I') - f(I) = f(J') - f(J) \geq 0 \textrm{.}
    \end{equation}

    Recall that in the above equality $I$ and $J$ have the first $t-1$
    elements equal. We would like to obtain the same relation even if
    the prefixes of $I$ and $J$ differ. Thus, now we will show how to
    change one element in the prefix of $I$ and $I'$ to an element
    different by one, so that the equality still holds. By repeating
    this operation sufficiently many times, we can conclude that the
    equality does not depend on the prefix of $I$. For the sake of
    concreteness, we will show how to change $i_{t-1}$ to $i_{t-1}+1$
    in the prefix of $I$ (this assumes that $i_{t-1}+1 < p-1$). A
    change of any other element in the prefix can be performed
    analogously. We proceed as follows.  Let us define:
  \begin{align*}
    &I_{\mathrm{new}} = (i_1, i_2 \ldots, i_{t-1}+1, p, i_{i+1}, \ldots, i_k), &  & L_{\mathrm{new}} = (i_1, i_2 \ldots, i_{t-1}+1, p-1, i_{i+1}, \ldots, i_k), \\
    &I_{\mathrm{new}}' = (i_1, i_2 \ldots, i_{t-1}, p, i_{i+1}, \ldots, i_k), &  & L_{\mathrm{new}}' = (i_1, i_2 \ldots, i_{t-1}, p-1, i_{i+1}, \ldots, i_k).
  \end{align*}
  In particular, observe that $I_{\mathrm{new}}' = I$ and that
  $L_{\mathrm{new}}' = I'$.  Similarly as before, by using
  $(t-1)$-prefix-monotonicity and $(t-2)$-prefix-monotonicity, we
  obtain that:
    \begin{equation}\label{eq:pm:2}
      f(I_{\mathrm{new}}') - f(I_{\mathrm{new}}) = f(L_{\mathrm{new}}') - f(L_{\mathrm{new}}) \geq 0 \textrm{.}
    \end{equation}
    Adding  inequalities~\eqref{eq:pm:1} and~\eqref{eq:pm:2}, we get:
    \begin{align*}
      f(I') - f(I) + f(I_{\mathrm{new}}') - f(I_{\mathrm{new}}) =  f(J') - f(J)  + f(L_{\mathrm{new}}') - f(L_{\mathrm{new}}) \textrm{.}
    \end{align*}
    which is equivalent to:
    \begin{align*}
     f(L_{\mathrm{new}}) - f(I_{\mathrm{new}}) =  f(J') - f(J)   \textrm{.}
    \end{align*}
    However, we can see that $L_{\mathrm{new}}$ and $I_{\mathrm{new}}$
    are simply $I'$ and $I$ where one element of the prefix,
    $i_{t-1}$, is replaced with $i_{t-1}+1$. By our previous
    discussion, it follows that we can prove that $f(I') - f(I) =
    f(J') - f(J)$ even if $I$ and $J$ have different prefixes.

    Since the choice of $p$, $I$, and $J$ (within $P_{m,k}(t, p)$) is
    completely arbitrary, it must be the case that for each $t$ there
    exists a function $h_t$ such that for each $p \in \{t+1, \ldots,
    m\}$, each sequence $U \in P_{m,k}(t, p)$, and each sequence $U'$
    obtained from $U$ by replacing position $p$ with $p-1$, we have:
    \begin{equation*}
      h_t(p-1) = f(U') - f(U) \geq 0. 
    \end{equation*}
    The final inequality follows from equation~\eqref{eq:pm:1}.
\end{proof}

We are ready to show that only decomposable rules can satisfy prefix-monotonicity.

\begin{theorem}\label{thm:prefixMonAndDecomposable}
  Let $\calR_f$ be a committee scoring rule. %
  If $\calR_f$ is prefix-monotone then it must be decomposable.
\end{theorem}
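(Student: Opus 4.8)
The plan is to fix, once and for all, a number of candidates $m$ and a committee size $k$, and to build single-winner scoring functions $\gamma^{(1)}_{m,k}, \ldots, \gamma^{(k)}_{m,k}$ witnessing decomposability directly out of the ``increment'' functions produced by Lemma~\ref{lem:t_prefix_monotonicity}. Since prefix monotonicity entails $x$-prefix monotonicity for every $x$, the hypothesis of Lemma~\ref{lem:t_prefix_monotonicity} holds for each $t \in [k]$, so for every such $t$ we get a non-negative function $h_t$ on $\{t, \ldots, m-k+t-1\}$ with the property that decreasing the $t$-th entry of a committee position from $p$ to $p-1$ — in a way that does not push it past the $(t-1)$-st entry — raises the $f_{m,k}$-score by exactly $h_t(p-1)$, regardless of the other entries.

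The core step is to express an arbitrary committee position $(i_1, \ldots, i_k) \in [m]_k$ via these increments by walking from the bottom position $R = (m-k+1, m-k+2, \ldots, m)$ to $(i_1, \ldots, i_k)$ one coordinate at a time: first decrease the first coordinate from $m-k+1$ to $i_1$, then the second from $m-k+2$ to $i_2$, and so on, moving the $t$-th coordinate from $m-k+t$ to $i_t$. I would check that every intermediate tuple is strictly increasing (hence a legal committee position), that the $t$-th member never drops below $i_t > i_{t-1}$ so no move makes it cross its predecessor, and that each single step is a pair $(U, U')$ with $U \in P_{m,k}(t,p)$ for the appropriate $p$, so Lemma~\ref{lem:t_prefix_monotonicity} applies at each step (in particular each $h_t$ is invoked only inside its domain). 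Telescoping the per-step score differences then gives
\[
  f_{m,k}(i_1, \ldots, i_k) = f_{m,k}(R) + \sum_{t=1}^{k}\ \sum_{q=i_t}^{m-k+t-1} h_t(q).
\]

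It remains to repackage the right-hand side as $\sum_{t=1}^k \gamma^{(t)}_{m,k}(i_t)$. I would set $\gamma^{(t)}_{m,k}(i) = \sum_{q=\max(i,t)}^{m-k+t-1} h_t(q)$ for $i \le m-k+t$ and $\gamma^{(t)}_{m,k}(i) = 0$ for $i > m-k+t$, for $t \ge 2$, and $\gamma^{(1)}_{m,k}(i) = f_{m,k}(R) + \sum_{q=i}^{m-k} h_t(q)$ for $i \le m-k+1$ and $= f_{m,k}(R)$ for larger $i$, thereby hiding the single additive constant $f_{m,k}(R)$ inside the first summand. Because every $h_t$ is non-negative and $f_{m,k}(R) \ge 0$, each $\gamma^{(t)}_{m,k}$ is non-negative and non-increasing on all of $[m]$ — a genuine single-winner scoring function — and, since $i_1 \in \{1, \ldots, m-k+1\}$ and $i_t \in \{t, \ldots, m-k+t\}$ in any committee position, $\sum_t \gamma^{(t)}_{m,k}(i_t)$ reproduces the telescoped identity exactly. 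The degenerate case $m=k$ is subsumed: the walk is empty and $\gamma^{(1)}_{k,k}$ simply carries the value $f_{k,k}(1,\ldots,k)$.

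The real content — uniqueness of the increments $h_t$ — is already carried by Lemma~\ref{lem:t_prefix_monotonicity}, so no analogue of that argument is needed here; I expect the main obstacle to be the bookkeeping in the walk: making the passage from $R$ to $(i_1,\ldots,i_k)$ fully precise, verifying that each intermediate tuple is valid and that each atomic shift genuinely lands in some $P_{m,k}(t,p)$, and being careful about the endpoints of the ranges so that the constructed $\gamma^{(t)}_{m,k}$ are bona fide non-increasing functions on the whole of $[m]$ rather than only on the positions the $t$-th committee member can actually occupy.
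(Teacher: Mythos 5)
Your proposal is correct and follows essentially the same route as the paper's proof: both define the $\gamma^{(t)}_{m,k}$ from the increment functions $h_t$ of Lemma~\ref{lem:t_prefix_monotonicity} (the paper via the difference conditions $\gamma^{(t)}_{m,k}(p-1)-\gamma^{(t)}_{m,k}(p)=h_t(p-1)$ plus a boundary condition at the bottom committee position, you via the equivalent explicit partial sums) and both establish the decomposition identity by walking coordinate-by-coordinate from $(m-k+1,\ldots,m)$ to an arbitrary committee position and telescoping. The legality checks you flag (each intermediate tuple lying in the appropriate $P_{m,k}(t,p)$, each $h_t$ invoked only on its domain) all go through, so there is no gap.
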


\begin{proof}
  Let $f = (f_{m,k})_{k \le m}$ be a family of committee scoring
  functions such that $\calR_f$ is prefix-monotone. Let us fix the
  number of candidates $m$ and the committee size $k$.  For each $t
  \in [k]$, let $h_t$ be the function constructed in
  Lemma~\ref{lem:t_prefix_monotonicity}.

  Our goal is to provide single-winner scoring functions
  $\gamma^{(1)}_{m,k}, \ldots, \gamma^{(k)}_{m,k}$ such that for each
  committee position $(\ell_1, \ldots, \ell_k)$ we have:
  \begin{equation}\label{eq:fpm:1}
  f_{m,k}(\ell_1, \ldots, \ell_k) = \gamma^{(1)}_{m,k}(\ell_1) +\gamma^{(2)}_{m,k}(\ell_2)+
    \cdots + \gamma^{(k)}_{m,k}(\ell_k) \textrm{.}
  \end{equation}
  To this end, for each $t \in [k]$,  we define
  $\gamma^{(t)}_{m,k}: \{t, \ldots, m - k + t\}
  \rightarrow \reals$ so that:\footnote{Formally, $\gamma_t$ must
    be defined on $[m]$ but it actually never has a chance to
    calculate values $\gamma_t(s)$, where $s<t$ or $s>m - k + t$, so
    these values of $\gamma_t$ can be chosen arbitrarily.}
  \begin{enumerate}
  \item The values $\gamma_k(m), \gamma_{k-1}(m - 1), \ldots,
    \gamma_1(m-(k-1))$ are such that $ f(m-(k-1), \ldots, m-1, m) =
    \gamma_k(m) + \gamma_{k-1}(m - 1) + \ldots + \gamma_1(m-(k-1))$
    (so equation~\eqref{eq:fpm:1} holds for the committee position
    where the candidates are ranked at the $k$ bottom positions).
  \item For each $p \in \{t+1, \ldots, m - k + t\}$, we have
    $\gamma^{(t)}_{m,k}(p-1)-\gamma^{(t)}_{m,k}(p) = h_t(p-1)$.
    (By Lemma~\ref{lem:t_prefix_monotonicity}, we have $h_t(p-1) \geq 0$, so 
    $\gamma^{(t)}_{m,k}$ is nonincreasing.)

  \end{enumerate}
  There may be many different ways to define functions
  $\gamma^{(1)}_{m,k}, \ldots, \gamma^{(k)}_{m,k}$ satisfying the
  above conditions and we choose one of them arbitrarily.

  To show that equation~\eqref{eq:fpm:1} holds, we use the same
  approach as in the second part of the proof of
  Theorem~\ref{thm:weaklysep}.  Specifically, we note that if equation
  \eqref{eq:fpm:1} holds for some committee position $R = (r_1, \ldots,
  r_k)$ and $R' = (r_1, \ldots, r_t-1, \ldots, r_k)$ also is a valid
  committee position for some $t \in [k]$, then (by definition of $h_t$) we have:
  \begin{align*}
    f_{m,k}(R') & = f_{m,k}(R) + h_t(r_t-1) \\ 
     & = \gamma^{(1)}_{m,k}(r_1) + \cdots + \gamma^{(t-1)}_{m,k}(r_{t-1})+ \bigg( \gamma^{(t)}_{m,k}(r_{t})+ h_t(r_t-1) \bigg) + \gamma^{(t+1)}_{m,k}(r_{t+1}) +  \cdots + \gamma^{(k)}_{m,k}(r_k)  \\
     & = \gamma^{(1)}_{m,k}(r_1) + \cdots + \gamma^{(t-1)}_{m,k}(r_{t-1})+ \gamma^{(t)}_{m,k}(r_{t}-1) + \gamma^{(t+1)}_{m,k}(r_{t+1}) +  \cdots + \gamma^{(k)}_{m,k}(r_k) \textrm{.} 
  \end{align*}
  Since equation~\eqref{eq:fpm:1} holds for committee position
  $(m-(k-1), \ldots, m)$, applying the above argument inductively
  proves that equation~\eqref{eq:fpm:1} holds for all committee
  positions.
  \end{proof}

  Theorem~\ref{thm:prefixMonAndDecomposable} states that
  decomposability is a necessary condition for a committee scoring
  rule to be prefix-monotone. However, as the following example shows,
  it is not sufficient.

\begin{example}
  The $k$-Approval Chamberlin--Courant rule ($\alpha_k$-CC), defined by committee
  scoring functions $f^\topkcc_{m,k}(i_1, \ldots, i_k) =
  \alpha_k(i_1)$, is a decomposable
  rule %
  that is not prefix-monotone. Indeed, consider $k=2$ and an election
  with four candidates $\{a, b, c, d\}$ that includes one vote for
  each possible ranking of these four candidates. This election
  contains $4! = 24$ votes and, in particular, vote $v\colon a \succ b
  \succ c \succ d$.  By the symmetry of the rule we see that for such
  election each committee is winning, including $W = \{b,c\}$ and $W'
  = \{c,d\}$. If $\alpha_k$-CC were prefix-monotone, then shifting $b$
  and $c$ by one position forward in $v$ (to obtain $b \succ c \succ a
  \succ d$) should keep $W$ winning. Doing so, however, does not change
  the score of $W$ and increases the score of $W'$, so $W$ no longer
  wins. This shows that $\alpha_k$-CC is not prefix-monotone.
\end{example} 

On the other hand,
if we assume that the single-winner scoring functions underlying a
decomposable rule are, in a certain sense, convex, then we obtain a
sufficient condition for this rule to be prefix-monotone.

\begin{proposition}\label{thm:prefixMonAndDecomposable2}
  Let $\calR_f$ be a decomposable committee scoring rule defined
  through a family of scoring functions
  $
    f_{m,k}(i_1, \ldots, i_k) = \gamma^{(1)}_{m,k}(i_1) +\gamma^{(2)}_{m,k}(i_2)+
    \cdots + \gamma^{(k)}_{m,k}(i_k) \textrm{,}
  $
  where $\gamma = (\gamma^{(t)}_{m,k})_{t \leq k \leq m}$ is a family
  of single-winner scoring functions.  A sufficient condition for
  $\calR_f$ to be prefix-monotone is that for each $m$ and each $k \in
  [m]$ we have that:
  \begin{enumerate}\item[(i)] for each $i \in [k]$ and each $p, p' \in [m-1]$, $p <
  p'$, it holds that:
  \begin{equation}
  \label{first_condition}
    \gamma^{(i)}(p) - \gamma^{(i)}(p+1) \geq \gamma^{(i)}(p') - \gamma^{(i)}(p'+1) \text{, and} 
  \end{equation} 
  \item[(ii)] for each $i, j \in [k]$, $j>i$, and each $p \in [m]$, $j \leq p < m - (k-i)$, 
  it holds that 
  \begin{equation}
  \label{second_condition}
    \gamma^{(i)}(p) - \gamma^{(i)}(p+1) \geq \gamma^{(j)}(p) - \gamma^{(j)}(p+1) \text{.}
  \end{equation}
  \end{enumerate}
\end{proposition}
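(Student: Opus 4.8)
The plan is to verify the definition of prefix monotonicity directly, using that $\calR_f$ is a committee scoring rule. Fix an election $E$, a winning committee $W\in\calR_f(E,k)$, a vote $v$, and a $t$-prefix shift of the top-$t$ members of $W$ inside $v$. Because scores aggregate additively over votes, every committee's score changes only through its contribution from $v$; hence it suffices to show that the change in $W$'s contribution from $v$ is at least the change in the contribution of every other committee $W'$ (then, as $W$ was winning, it remains so). I would first reduce to \emph{atomic} moves: a general $t$-prefix shift moves the $r$-th member of $W$ forward by $d_r$ positions with $d_1\ge d_2\ge\cdots\ge d_t\ge 0$, and such a shift is a composition of moves of the form ``shift the top-$s$ members of $W$ forward by exactly one position'' with $s\le t$; each such move keeps the top-$s$ members of $W$ on top, so it is enough to prove the one-step inequality and iterate.

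Next I would fix an atomic $s$-prefix move; writing $\pos_v(W)=(i_1,\ldots,i_k)$, the move is defined only when $i_1\ge 2$ and sends this committee position to $(i_1-1,\ldots,i_s-1,i_{s+1},\ldots,i_k)$, so the change in $W$'s contribution is
\[
  \Delta_W=\sum_{r=1}^{s}\bigl(\gamma^{(r)}_{m,k}(i_r-1)-\gamma^{(r)}_{m,k}(i_r)\bigr),
\]
a sum of nonnegative terms since each $\gamma^{(r)}_{m,k}$ is nonincreasing. I would realize the move as the composition, in this order, of the adjacent transpositions $T_1,\ldots,T_s$, where $T_r$ swaps the candidates occupying positions $i_r-1$ and $i_r$. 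A short check shows that when $T_r$ is applied, position $i_r$ is occupied by $W$'s $r$-th member $x_r$, position $i_r-1$ is occupied by a non-member of $W$, and the set of candidates ranked above position $i_r$ is the same as it was in $v$ (the earlier transpositions only permute candidates sitting at positions $\le i_{r-1}<i_r$). Hence, for any committee $W'$, its contribution does not increase at step $T_r$ unless $x_r\in W'$, in which case, writing $\ell_r$ for the rank of $x_r$ inside $W'$ with respect to $v$, the increase is either $0$ (when the candidate directly above $x_r$ at that moment also belongs to $W'$) or exactly $\gamma^{(\ell_r)}_{m,k}(i_r-1)-\gamma^{(\ell_r)}_{m,k}(i_r)$. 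Summing over steps,
\[
  \Delta_{W'}\le\sum_{r\in R}\bigl(\gamma^{(\ell_r)}_{m,k}(i_r-1)-\gamma^{(\ell_r)}_{m,k}(i_r)\bigr),\qquad R:=\{\,r\in[s]:x_r\in W'\,\}.
\]

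Finally I would charge each term on the right to a distinct term of $\Delta_W$. For $r\in R$ set $\phi(r)=\min(r,\ell_r)$. If $\ell_r\ge r$, condition~\eqref{second_condition} at position $i_r-1$ gives $\gamma^{(r)}_{m,k}(i_r-1)-\gamma^{(r)}_{m,k}(i_r)\ge\gamma^{(\ell_r)}_{m,k}(i_r-1)-\gamma^{(\ell_r)}_{m,k}(i_r)$; if $\ell_r<r$, then $i_{\ell_r}\le i_r$ and condition~\eqref{first_condition} for $\gamma^{(\ell_r)}_{m,k}$ gives $\gamma^{(\ell_r)}_{m,k}(i_{\ell_r}-1)-\gamma^{(\ell_r)}_{m,k}(i_{\ell_r})\ge\gamma^{(\ell_r)}_{m,k}(i_r-1)-\gamma^{(\ell_r)}_{m,k}(i_r)$; in both cases the $r$-th summand above is at most $\gamma^{(\phi(r))}_{m,k}(i_{\phi(r)}-1)-\gamma^{(\phi(r))}_{m,k}(i_{\phi(r)})$. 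Since for $r<r'$ in $R$ the member $x_r$ lies above $x_{r'}$ and belongs to $W'$, we get $\ell_r<\ell_{r'}$, so $\phi$ is strictly increasing, hence injective with image in $[s]$; as all the terms $\gamma^{(j)}_{m,k}(i_j-1)-\gamma^{(j)}_{m,k}(i_j)$ are nonnegative, $\Delta_{W'}\le\sum_{r\in R}\bigl(\gamma^{(\phi(r))}_{m,k}(i_{\phi(r)}-1)-\gamma^{(\phi(r))}_{m,k}(i_{\phi(r)})\bigr)\le\Delta_W$, as needed. The main obstacle I expect is the bookkeeping in the middle paragraph: tracking positions and within-$W'$ ranks through the sequence of transpositions when the top members of $W$ occupy consecutive positions (so that a displaced non-member slides down several slots at once), and checking that the indices and positions at which conditions~\eqref{first_condition} and~\eqref{second_condition} are invoked lie in the stated ranges --- in particular that $\ell_r\le i_r-1$ precisely in the sub-case where the bound from~\eqref{second_condition} is actually used (the complementary sub-case contributes $0$), and that $i_r\le m-(k-r)$, which holds because $i_r$ is the $r$-th coordinate of a committee position.
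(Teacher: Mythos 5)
Your proof is correct and follows essentially the same route as the paper's: both compare the score increase of the winning committee $W$ with that of an arbitrary $W'$, and charge each gaining member of $W'$ to a distinct member of $W$'s shifted prefix, invoking condition~(ii) when $\ell_r\ge r$ and condition~(i) when $\ell_r<r$. Your write-up is somewhat more careful than the paper's --- the reduction to one-step shifts, the transposition-by-transposition bookkeeping, and the explicit injectivity of $\phi(r)=\min(r,\ell_r)$ make rigorous what the paper delegates to a figure --- but the underlying argument is the same.
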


Intuitively, condition (i) says that the functions in the  family $\gamma$
are convex, and condition (ii) says that, for each $m$ and $k$, if $i
< j$ then $\gamma^{(i)}_{m,k}$ decreases not faster than
$\gamma^{(j)}_{m,k}$. %

\begin{figure}[tb]
   \hspace{-3cm}\includegraphics[scale=0.60]{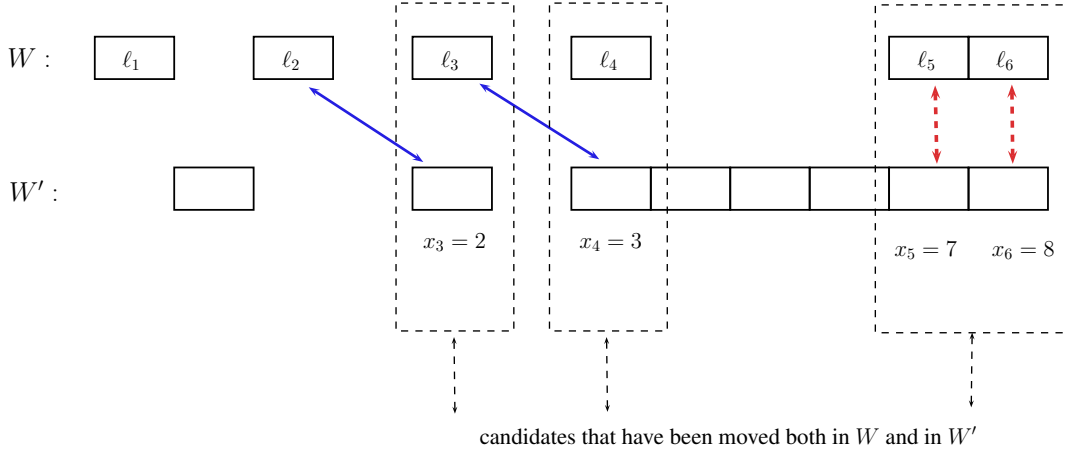}
   \caption{Illustration of the notation used in Proposition~\ref{thm:prefixMonAndDecomposable2}.}
   \label{fig:shifting2}
\end{figure}

\begin{proof}%
  Let $\calR_f$ be defined as in the statement of the proposition and
  fix the number of candidates $m$ and the committee size $k$.
  Consider an election $E$ where a committee $W$ is a winner. Let $j$
  be a number from $[k]$ and let $E'$ be an election obtained from $E$
  by shifting forward by one position each of the first $j$ members of
  $W$ in some vote $v$. We will show that $W$ is a winning committee
  in~$E'$. Let $(\ell_1, \ldots, \ell_k)$ be the committee position of
  $W$ in $v$ (in election $E$). In comparison with $E$, in~$E'$ the
  score of $W$ is increased by:
  \begin{align*}
    \sum_{t=1}^j \big(\gamma_t(\ell_t - 1) - \gamma_t(\ell_t)\big)
    \textrm{.}
  \end{align*}
  Let us now assess by how much the score of some other committee,
  $W'$, can increase. Let us fix $t\in[j]$, and let $c_t$ be the
  candidate standing at position $\ell_t$ in $v$ (in particular, $c_t\in W$). If $c_t \notin W'$,
  then shifting $c_t$ one position up has no positive effect on the
  score of $W'$. Consider the case when $c_t \in W'$. Let $x_t$ denote
  the position of $c_t$ within $W'$ according to $v$ (for instance, if
  $c_t$ is the most preferred among members of $W'$ in $v$, then $x_t =
  1$). This notation is illustrated in Figure~\ref{fig:shifting2}.
  Now, we consider two cases:
  
  \begin{description}
  \item[Case 1 ($\boldsymbol{x_t \geq t}$).] The condition~\eqref{second_condition} implies that:
    \begin{align*}
 \gamma_{x_t}(\ell_t - 1) - \gamma_{x_t}(\ell_t) \leq \gamma_t(\ell_t - 1) - \gamma_t(\ell_t) \text{.}
    \end{align*}
    Thus the increase of the score of $W'$ due to shifting $c_t$ one
    position up is not greater than the increase of the score of $W$
    due to shifting $c_t$ one position up. We assign $c_t$ in $W$ to
    $c_t$ in $W'$; this assignment is shown with a bold dashed arrow
    in Figure~\ref{fig:shifting2}) and, intuitively, it means that the
    increase of the score of $W$ due to shifting $c_t$ ``compensates
    for'' the increase of the score of $W'$ due to shifting the
    assigned candidate.

    \item[Case 2 ($\boldsymbol{x_t < t}$).] Now, we observe that due to \eqref{first_condition} we have:
    \begin{align*}
     \gamma_{x_t}(\ell_t - 1) - \gamma_{x_t}(\ell_t) \leq
      \gamma_{x_t}(\ell_{x_t} - 1) -
      \gamma_{x_t}(\ell_{x_t}).
    \end{align*}
    Thus the increase of the score of $W'$ due to shifting $c_t$ one
    position up is not greater than the increase of the score of $W$
    due to shifting  the candidate at position $\ell_{x_t} <
    \ell_t$, call such a candidate $c$, one position up. We assign
    $c_t$ in $W'$ to $c$ in $W$; this assignment is depicted with a
    solid arrow in Figure~\ref{fig:shifting2}.
  \end{description}

  From the above reasoning we see that for each $t\in[j]$ the increase
  of the score of $W'$ due to shifting $c_t$ one position up is no
  greater than the increase of the score of $W$ due to shifting some
  other candidate $c_{r}$ ($r \leq t$) one position up; in such case
  we say that $c_{r}$ is assigned to $c_t$ and that $c_{r}$
  compensates for $c_t$. Further, we note that each candidate $c_{t}
  \in W'$ is assigned to a different ``compensating'' candidate (see
  Figure~\ref{fig:shifting2} and consider how the assignment is
  defined, starting from the highest values of~$t$ and decreasing $t$
  one by one).  We conclude that the score of $W'$ increases in $E'$
  by a value that is not greater than the increase of the score of
  $W$. Since $W'$ was chosen arbitrarily, we get that $W$ is a winner
  in $E'$, which completes the proof.
\end{proof}

As far as applications of multiwinner voting goes, prefix monotonicity
does not seem to have as clear-cut interpretation as non-crossing
monotonicity. Nonetheless, in the next section we will see how its
relaxed variant is useful in characterizing representation-focused
rules (and how this characterization can be interpreted in the context
of diversity-oriented committee elections).

\subsubsection{Top-Member Monotonicity and Representation-Focused  Rules}
\label{sec:top-member_monotonicity}

Our goal in this section is to provide an axiomatic characterization
of representation-focused rules. The first tool that we employ for
this task is $1$-prefix monotonicity (recall
Definition~\ref{def:prefix_monotonicity}), which we rename as
\emph{top-member monotonicity}. Intuitively, top-member monotonicity
requires that if in some vote $v$ we shift forward the highest-ranked
member of a given winning committee, then this committee remains
as a winning one.
Since top-member monotonicity is a relaxed variant of
non-crossing monotonicity (and of prefix monotonicity), it is satisfied
by all weakly separable rules and alone is insufficient to
characterize representation-focused rules.
Thus we will also use the notion of narrow-top consistency, defined
below (which, in fact, is a relaxed form of the solid coalitions
property of Elkind et
al.~\cite{elk-fal-sko-sli:j:multiwinner-properties}, itself motivated
by a much stronger notion of Dummet~\cite{dum:b:voting}).

\begin{definition} %
  A multiwinner rule $\calR$ satisfies \emph{narrow-top consistency}
  if for each election $E = (C,V)$ and each $k \in [|C|]$ the
  following holds: If there exists a set of at most $k$ candidates
  $S$, such that each voter in $V$ ranks some candidate from $S$
  first and each member of $S$ is ranked first by some voter,
  then for each $W \in \calR(E,k)$ it holds that $S \subseteq
  W$.
\end{definition}

Together, top-member monotonicity and narrow-top consistency exactly
characterize the class of representation-focused rules (within the
class of committee scoring rules). We prove this result formally
below, and then we explain the roles of both our axioms intuitively.

\begin{theorem}\label{thm:representation_focused}
  Let $\calR_f$ be a committee scoring rule.
  $\calR_f$ is representation-focused if and only if it 
  satisfies top-member monotonicity and narrow-top consistency. 
\end{theorem}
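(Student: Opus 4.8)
The plan is to prove both directions. The easy direction is that every representation-focused rule satisfies top-member monotonicity and narrow-top consistency: if $f_{m,k}(i_1,\dots,i_k)=\gamma_{m,k}(i_1)$, then shifting forward the highest-ranked committee member (call it $c$, at position $p$ in some vote) decreases $i_1$ by one, raising the committee's score by $\gamma_{m,k}(p-1)-\gamma_{m,k}(p)\ge 0$; no other committee can gain more, since every other committee gains either $0$ (if $c$ is not its top member, or $c$ is absent) or exactly the same amount (if $c$ is also its top member and sits at position $p$). Hence the committee stays winning. For narrow-top consistency, if some set $S$ of at most $k$ candidates supplies every voter's top choice, then any committee $W$ obtains from each voter a score of $\gamma_{m,k}(1)$ if $W\cap S\ne\emptyset$ for that voter's top candidate, i.e.\ if the voter's favorite member of $S$ is in $W$, and otherwise strictly less; since each committee $W\supseteq S$ achieves the maximal per-voter score $\gamma_{m,k}(1)$ from every voter (here using that each member of $S$ is someone's unique top), and every committee not containing all of $S$ loses at least one voter's maximal contribution, such $W\supseteq S$ are exactly the winners.

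For the main direction, assume $\calR_f$ satisfies top-member monotonicity and narrow-top consistency; I want to show $f_{m,k}$ depends only on $i_1$ (up to a linear transformation, via Lemma~\ref{lem:unique}). Top-member monotonicity is exactly $1$-prefix monotonicity, so I can invoke Lemma~\ref{lem:t_prefix_monotonicity} with $t=1$: there is a nonincreasing function $h_1$ such that for each $p$, each $U\in P_{m,k}(1,p)$, and $U'$ obtained by replacing the leading position $p$ with $p-1$, we have $f(U')-f(U)=h_1(p-1)\ge 0$. In other words, decrementing the \emph{first} coordinate always changes $f$ by an amount depending only on the value of that coordinate, independently of the rest of the sequence. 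As in the second part of the proof of Theorem~\ref{thm:weaklysep}, I then define a single-winner scoring function $\gamma_{m,k}$ by $\gamma_{m,k}(p-1)-\gamma_{m,k}(p)=h_1(p-1)$ with the base value fixed so that $\gamma_{m,k}(1)=f_{m,k}(1,i_2,\dots,i_k)$ for (any, hence all) choices of $i_2<\dots<i_k$; telescoping the $h_1$-increments as we shift the top member up from an arbitrary position to position $1$ shows $f_{m,k}(i_1,\dots,i_k)=\gamma_{m,k}(i_1)+c_{m,k}$ where $c_{m,k}=f_{m,k}(i_1,\dots,i_k)$ evaluated at, say, the sequence with $i_1=1$ minus $\gamma_{m,k}(1)$ — i.e.\ a constant not depending on the sequence. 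Here I must also check that $f_{m,k}(1,i_2,\dots,i_k)$ genuinely does not depend on $(i_2,\dots,i_k)$: this is where narrow-top consistency enters, since it is not implied by top-member monotonicity alone (the $\alpha_k$-CC example in the excerpt already shows decomposable-but-not-representation-focused behavior is ruled out by something beyond prefix monotonicity, and indeed $\beta$-CC vs.\ Bloc must be separated by narrow-top).

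So the crux is the following claim: if $\calR_f$ satisfies narrow-top consistency (and top-member monotonicity, to have the $h_1$ machinery), then for every $m,k$ the value $f_{m,k}(1,i_2,\dots,i_k)$ is the same for all $(i_2,\dots,i_k)$; equivalently, $\gamma_{m,k}(1)$ is well-defined and, combined with the $h_1$-telescoping, $f_{m,k}$ is (up to an additive constant) a function of $i_1$ alone, hence by Lemma~\ref{lem:unique} the rule is representation-focused. To prove this claim I would build, for two candidate positions $(1,i_2,\dots,i_k)$ and $(1,j_2,\dots,j_k)$ differing only beyond the first coordinate, an election in which a set $S$ of size $\le k$ is top-ranked by everyone (forcing all winners to contain $S$), while the symmetry/structure of the remaining votes forces a committee $W$ realizing the first committee position in some vote and a committee $W'$ realizing the second to tie — or more directly, I would use narrow-top consistency to pin down, in a symmetric ``all rankings'' election restricted appropriately, that the per-voter contribution $f_{m,k}(1,\cdot)$ cannot vary, since otherwise a committee containing the mandated set $S$ could be beaten, contradicting narrow-top consistency. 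I expect this step — extracting, from narrow-top consistency, the statement that the top-member's contribution is insensitive to the positions of the other members — to be the main obstacle, requiring a careful election construction (likely a small symmetric profile augmented with ``clone'' voters that force $S$ on top) rather than a one-line argument; everything else is the now-standard telescoping bookkeeping borrowed from Theorems~\ref{thm:weaklysep} and~\ref{thm:prefixMonAndDecomposable}.
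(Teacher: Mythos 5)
Your skeleton matches the paper's: both directions, with the forward implication reduced via Lemma~\ref{lem:t_prefix_monotonicity} (applied at $t=1$) to showing that $f_{m,k}$ with a fixed first coordinate is insensitive to the remaining coordinates. But you stop exactly at the step that carries all the weight. You write that you ``would build'' an election in which narrow-top consistency pins down this insensitivity and that you ``expect this step to be the main obstacle, requiring a careful election construction'' --- the decisive argument is announced but not given, so the forward direction is not actually proved.

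The construction the paper uses is short once seen, so let me record it. Take $I=(i_1,i_2,\dots,i_k)$ and $J=(i_1,j_2,\dots,j_k)$ with the same first coordinate and suppose, for contradiction, that $f_{m,k}(I)>f_{m,k}(J)$. Fix a vote $v$ and let $W$, $W'$ be the committees occupying positions $I$ and $J$ in $v$; they share the candidate $d$ at position $i_1$, who is the top-ranked member of both. Build $E$ from $\eta$ copies of $v$ together with $k$ extra votes, one per member of $W'$, each ranking that member first; choose $\eta$ so that $\eta f_{m,k}(I) > \eta f_{m,k}(J) + k\, f_{m,k}(1,\dots,k)$, which guarantees that $W$ strictly outscores $W'$ regardless of what the $k$ extra votes contribute. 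If $i_1\neq 1$, shift $d$ to the top of every copy of $v$; by the conclusion of Lemma~\ref{lem:t_prefix_monotonicity} the scores of $W$ and $W'$ increase by the same telescoped amount, so the strict gap persists. Now every voter ranks a member of $W'$ first and every member of $W'$ is ranked first by some voter, so narrow-top consistency with $S=W'$ forces $W'$ to be the unique winner --- contradicting that $W$ outscores it. This is the piece your proposal is missing; the rest (the $t=1$ telescoping, the appeal to Lemma~\ref{lem:unique}, and the easy direction) tracks the paper's argument.
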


\begin{proof}
  It is apparent that each representation-focused rule satisfies both
  top-member monotonicity and narrow-top consistency.  Suppose that
  $\calR_f$ is a committee scoring rule, defined through a family
  $f=(f_{m,k})_{k\le m}$ of scoring functions $f_{m,k}\colon [m]_k\to
  \reals_{+}$, that satisfies these two properties. We will show that
  $\calR_f$ is representation-focused.

  Let us fix the number of candidates $m$ and the committee size
  $k$. Since $\calR_f$ satisfies $1$-prefix monotonicity (top-member
  monotonicity), by Lemma~\ref{lem:t_prefix_monotonicity} we have that
  there exists a function $h$ such that for each $p\in [m]$, each $U
  \in P_{m, k}(1, p)$ and the committee position $U'$, obtained from
  $U$ by replacing position $p$ with $p-1$, we have $h(p-1) = f(U') -
  f(U)\ge 0$.

  Let $I = (i_1, \ldots, i_k)$ and $J = (j_1, \ldots, j_k)$ be such
  that $i_1 = j_1$. We will show that $f_{m,k}(I) = f_{m,k}(J)$, which is
  sufficient to prove that $\calR_f$ is representation focused. For
  the sake of contradiction, let us assume that this is not the case,
  and without loss of generality, let us assume that $f_{m,k}(I) >
  f_{m,k}(J)$. There exists a positive integer $\eta$ such that $\eta f_{m,k}(I) >
  \eta f_{m,k}(J) + k f_{m,k}(1, \ldots, k)$.

  Let us fix a vote $v$ and let $W$ and $W'$ denote the committees
  that stand in $v$ on positions $I$ and $J$, respectively. Note that
  they have a common member $d$ who stands on position $i_1 = j_1$ and
  is highest ranked by $v$ in both committees. Consider an election
  $E$ with $\eta$ copies of vote $v$ and with $k$ votes such that for
  each candidate $c \in W'$ there is one vote who ranks $c$ first and
  the remaining candidates in some fixed, arbitrary way. In this
  election the score of $W$ is at least equal to $\eta f(I)$ and the
  score of $W'$ is at most equal to $\eta f(J) + k f(1, \ldots k)$.
  Thus the score of $W$ is higher than the score of $W'$. If $i_1 =
  j_1=1$ we get a contradiction immediately since by the narrow-top
  consistency $W'$ must be winning.

  If $i_1 = j_1\ne 1$, we construct election $E'$ by shifting, in each
  copy of $v$, the candidate $d\in W\cap W'$ to the top position. In
  comparison to $E$, the scores of committees $W$ and $W'$ in $E'$
  increase by the same value $\eta\big(h(1) - h(i_1)\big)$. As a
  result, $W$ has a higher score than $W'$ also in $E'$. This,
  however, contradicts narrow-top consistency, since all top positions
  in this profile are occupied by candidates from $W'$. This proves
  that $f(I) = f(J)$, and completes the reasoning.
\end{proof}

Let us now explain intuitively the interplay between top-member monotonicity and
narrow-top consistency in the characterization of representation-focused
rules. If we applied similar reasoning as we used in the
proofs of Theorems~\ref{thm:weaklysep}
and~\ref{thm:prefixMonAndDecomposable} to top-member monotonicity,
then we could show that if a committee scoring rule $\calR_f$ is
top-member monotone then its scoring functions are of the form:
\begin{equation}\label{eq:tmm}
  f_{m,k}(i_1, \ldots, i_k) = \gamma_{m,k}(i_1) + g_{m,k-1}(i_2, \ldots, i_k) \textrm{,}
\end{equation}
where $\gamma = (\gamma_{m,k})_{k \leq m}$ is a family of
single-winner scoring functions and $g = (g_{m,k-1})_{k-1 \le m}$ is a
family of committee scoring functions. Requiring that $\calR_f$ is
also narrow-top consistent ensures that the functions $g_{m,k}$ are,
in fact, constant, and in consequence gives that $\calR_f$ is
representation-focused. Since all decomposable committee scoring rules
are already of the form presented in equation~\eqref{eq:tmm} (and, in
fact, they are of a far more restricted form), we have the following
corollary (the proof follows directly from the preceding reasoning,
but straightforward calculations also show it directly; we omit these details).
\begin{corollary}
  If a decomposable committee scoring rule is narrow-top consistent
  then it is representation focused.
\end{corollary}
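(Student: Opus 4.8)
The plan is to fix a number of candidates $m$ and a committee size $k$ and show that $f_{m,k}(i_1, \ldots, i_k)$ depends only on its first coordinate $i_1$; once this is known, setting $\gamma_{m,k}(p) := f_{m,k}(p, p+1, \ldots, p+k-1)$ for $p \le m-k+1$ and extending $\gamma_{m,k}$ to a nonincreasing, nonnegative function on $[m]$ arbitrarily exhibits $\calR_f$ as representation-focused. The whole argument will be a light adaptation of the proof of Theorem~\ref{thm:representation_focused}: the key realization is that decomposability supplies, for free, the single consequence of top-member monotonicity that that proof actually uses.

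First I would record the relevant structural fact. Writing $f_{m,k}(i_1, \ldots, i_k) = \sum_{t=1}^{k}\gamma^{(t)}_{m,k}(i_t)$ as in Definition~\ref{def:decomposable-csr}, I observe that for any committee position $U = (p, i_2, \ldots, i_k) \in [m]_k$ with $p \ge 2$, shifting the top member forward by one (i.e., replacing $p$ by $p-1$ to obtain a position $U'$) changes the score by $f_{m,k}(U') - f_{m,k}(U) = \gamma^{(1)}_{m,k}(p-1) - \gamma^{(1)}_{m,k}(p) =: h(p-1)$, which is nonnegative (as $\gamma^{(1)}_{m,k}$ is nonincreasing) and depends only on $p$. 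This is precisely the conclusion of Lemma~\ref{lem:t_prefix_monotonicity} for $t=1$, and that conclusion is the only thing the proof of Theorem~\ref{thm:representation_focused} extracts from top-member monotonicity; so from this point on, that proof applies with no change.

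Then I would run the contradiction argument of Theorem~\ref{thm:representation_focused}: given $I = (i_1, \ldots, i_k)$ and $J = (j_1, \ldots, j_k)$ with $i_1 = j_1$ and, say, $f_{m,k}(I) > f_{m,k}(J)$, I fix a vote $v$ in which a committee $W$ occupies positions $I$ and a committee $W'$ occupies positions $J$ --- these automatically share the candidate $d$ at position $i_1 = j_1$, the top-ranked member of each in $v$ --- and build an election from $\eta$ copies of $v$ plus $k$ votes, the $j$-th of which ranks the $j$-th member of $W'$ first. When $d$ is already on top in $v$, narrow-top consistency applied to $W'$ forces $\calR_f(E,k) = \{W'\}$, while for $\eta$ large enough that $\eta\,(f_{m,k}(I) - f_{m,k}(J)) > k\,f_{m,k}(1,\ldots,k)$ (the value $f_{m,k}(1,\ldots,k)$ being the maximum score any committee can get from a single voter, since $(1,\ldots,k)$ weakly dominates every committee position) the score of $W$ strictly exceeds that of $W'$, a contradiction. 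When $d$ is at a position $i_1 > 1$, I first shift $d$ to the top in every copy of $v$; by the structural fact this crosses no member of $W$ or of $W'$ (because $d$ is the top member of both) and raises the scores of $W$ and $W'$ by the same amount, so the strict inequality between their scores persists, and narrow-top consistency applied to the modified election again forces $W'$ to be the unique winner --- contradiction. Hence $f_{m,k}(I) = f_{m,k}(J)$ whenever $i_1 = j_1$, which is what we wanted.

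There is no genuine obstacle --- this is a corollary --- and the only place that needs attention is the witness election together with the shift-to-top step: one must make sure the distinguished set $W'$ really satisfies the hypothesis of narrow-top consistency in the (possibly modified) election (every voter ranks some member of $W'$ first, and every member of $W'$ is ranked first by some voter, which holds because the $j$-th extra vote tops the $j$-th member of $W'$ and every copy of $v$ tops $d \in W'$), and that moving $d$ up perturbs the scores of $W$ and $W'$ equally. Both are immediate once one notes that $d$ is the common top-ranked member of $W$ and $W'$. As an alternative that avoids re-deriving Theorem~\ref{thm:representation_focused}, one could note that decomposability already puts $f_{m,k}$ in the form $\gamma_{m,k}(i_1) + g_{m,k-1}(i_2,\ldots,i_k)$ of equation~\eqref{eq:tmm} (with $\gamma_{m,k} = \gamma^{(1)}_{m,k}$ and $g_{m,k-1} = \sum_{t\ge 2}\gamma^{(t)}_{m,k}$), and then invoke only the ``narrow-top consistency forces $g_{m,k-1}$ to be constant'' half of the proof of Theorem~\ref{thm:representation_focused}.
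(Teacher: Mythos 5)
Your proposal is correct and matches the paper's intended argument: the paper proves this corollary by noting that decomposable scoring functions are already of the form in equation~\eqref{eq:tmm}, so only the ``narrow-top consistency forces the remainder to be constant'' half of the proof of Theorem~\ref{thm:representation_focused} is needed --- exactly the alternative you give at the end --- while your main argument writes out the ``straightforward calculations'' the paper omits. You also handle the one real subtlety correctly: a decomposable rule need not be top-member monotone, so one cannot cite Theorem~\ref{thm:representation_focused} as a black box, and your observation that decomposability directly supplies the function $h$ that its proof extracts from top-member monotonicity (via Lemma~\ref{lem:t_prefix_monotonicity} with $t=1$) is precisely the right fix.
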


Representation-focused rules generally, and the Chamberlin Courant
rule specifically, are often considered in the context of selecting
diverse
committees~\cite{elk-fal-sko-sli:j:multiwinner-properties,fal-sko-sli-tal:b:multiwinner-trends}. While
there is no clear definition of what a ``diverse committee'' is,
researchers often use this term intuitively, to mean that as many
voters as possible can find a committee member that they rank highly
(if a voter $v$ ranks some committee member $c$ highly, then we could
say that $c$ ``covers'' the views of $v$, so some authors speak of
``diversity/coverage''; see the works of Ratliff and
Saari~\cite{rat-sar:j:diverse-committees}, Bredereck et
al.~\cite{bre-fal-iga-lac-sko:c:diverse-committees}, Celis et
al.~\cite{cel-hua-vis:diverse-committees}, and Izsak et
al.~\cite{interclass} for a different view regarding diverse
committees).
Theorem~\ref{thm:representation_focused} justifies the use of
representation-focused rules to seek committees that are diverse in
this sense. Indeed, if there is a committee such that every voter
ranks one of its members on top, then certainly this committee
``covers'' the ``diverse'' views of all the voters; narrow-top
consistency ensures that this committee is selected. On the other
hand, if there is a committee $W$ and we agree that it ``covers'' the
views of sufficiently many voters, then if some voter ranks his or her
highest-ranked committee member even higher (i.e., this voter realizes
that the candidate represents his or her views even better), then
certainly we should still view $W$ as ``covering'' the views of
sufficiently many voters; this is ensured by top-member monotonicity.

\subsection{Committee Enlargement Monotonicity and Separable Rules}
\label{sec:com_mon_sep_rules}

In this section we consider the committee enlargement monotonicity
axiom. While it is markedly different from the notions that we used in
the previous sections, it still has a clear monotonicity flavor:
Informally speaking, it requires that if $W$ is a size-$k$ winning
committee for some election, then there also is a size-$(k+1)$ winning
committee for this election that includes all the members of $W$ (the
actual definition is more complicated due to possible ties; its exact
form is due to Elkind et
al.~\cite{elk-fal-sko-sli:j:multiwinner-properties}, but it was
already studied by Barber\`a
and~Coelho~\cite{bar-coe:j:non-controversial-k-names} for resolute
multiwinner rules, and in the literature on apportionment rules it is
well known as \emph{house monotonicity}~\cite{Puke14a,
  bal-you:b:polsci:representation}).

\begin{definition}[Elkind et al.~\cite{elk-fal-sko-sli:j:multiwinner-properties}]
  A multiwinner election rule $\calR$ satisfies \emph{committee monotonicity} if for each $m$ and $k$, $1 \leq k < m$, and for each election $E$ the following two conditions hold: 
  \begin{itemize}
  \item[(1)] for each $W \in \calR(E,k)$ there exists $W' \in \calR(E,k+1)$ such that $W \subseteq W'$; 
  \item[(2)] for each $W \in \calR(E,k+1)$ there exists $W' \in \calR(E,k)$ such that $W' \subseteq W$.
  \end{itemize}
\end{definition} 

This section is almost completely dedicated to showing that in the
class of committee scoring rules, committee enlargement monotonicity
characterizes exactly the class of separable rules.

\begin{theorem}\label{thm:committee_monot_and_sep}
  Let $\calR_f$ be a committee scoring rule.
  $\calR_f$ is committee-enlargement monotone if and only if $\calR_f$ is
  separable.

\end{theorem}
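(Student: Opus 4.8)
The plan is to prove the two implications separately, with the forward one ($\Leftarrow$) being the routine part.

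Suppose first that $\calR_f$ is separable, defined by a family $(\gamma_m)_m$ of single-winner scoring functions that do not depend on the committee size. Then for an $m$-candidate election $E=(C,V)$ the $f_{m,k}$-score of a size-$k$ committee $S$ equals $\sum_{c\in S}\sigma(c)$, where $\sigma(c)=\sum_{v\in V}\gamma_m(\pos_v(c))$. Writing $s_1\ge s_2\ge\cdots\ge s_m$ for the sorted multiset of the values $\sigma(c)$, a standard exchange argument shows that a size-$k$ committee wins in $E$ precisely when the multiset of individual scores of its members equals $\{s_1,\dots,s_k\}$; in particular the candidates outside a size-$k$ winner $W$ realize exactly the multiset $\{s_{k+1},\dots,s_m\}$. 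Both clauses of committee monotonicity then follow at once: adding to $W$ a candidate of individual score $s_{k+1}$ yields a size-$(k+1)$ winner, while deleting from any size-$(k+1)$ winner a member of minimum individual score yields a size-$k$ winner. (If some $\gamma_m$ is constant the rule is degenerate for that $m$ and both clauses are vacuous; this is consistent, since degenerate rules are separable.)

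For the converse ($\Rightarrow$), I would show that committee monotonicity forces $f_{m,k}(i_1,\dots,i_k)=\sum_{t=1}^{k}\gamma_m(i_t)$ for a single-winner scoring function $\gamma_m$ that does not depend on $k$; by Lemma~\ref{lem:unique} this identifies $\calR_f$ with a separable rule. Fix $m$, set $\gamma_m:=f_{m,1}$ (a nonincreasing function on $[m]$), and let $\calR_g$ be the separable rule it induces, which, by the implication just proved, is committee monotone. The plan is to prove $\calR_f(E,k)=\calR_g(E,k)$ for every $m$-candidate election $E$ and every $k\le m$ by induction on $k$; the base case $k=1$ holds by construction. For the inductive step the induction hypothesis gives $\calR_f(\cdot,k)=\calR_g(\cdot,k)$, so the two clauses of committee monotonicity of $\calR_f$ relate $\calR_f(E,k+1)$ to the fully understood behaviour of $\calR_g$: every committee in $\calR_f(E,k+1)$ contains a ``top-$k$-score'' committee, and every top-$k$-score committee extends to some committee in $\calR_f(E,k+1)$. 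The heart of the argument is to convert these set-containment statements -- which hold simultaneously for \emph{all} elections $E$ -- into the numerical identity $f_{m,k+1}(I)=\sum_t\gamma_m(i_t)$ for each $I\in[m]_{k+1}$. I would do this by feeding in carefully constructed elections: first, elections realizing prescribed individual candidate scores (built as in the proof of Lemma~\ref{lem:nonimposition}) to pin down which candidates a size-$(k+1)$ winner is forced to contain; and then, starting from the all-permutations election (in which every committee ties) and shifting a single committee position one step up in one vote, reading off from the forced winner-sets -- on which $\calR_f$ and $\calR_g$ must agree, using that $f_{m,k+1}$ is monotone under weak dominance -- the successive differences of $f_{m,k+1}$ along each coordinate, which must all equal $\gamma_m(p-1)-\gamma_m(p)$. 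Propagating these differences outward from the bottom committee position $(m-k,\dots,m)$, exactly as in the second half of the proof of Theorem~\ref{thm:weaklysep}, yields the additive form with $k$-independent marginal, completing the induction; since this holds on every candidate set, $\calR_f$ is separable.

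The hard part will be precisely this conversion step. Committee monotonicity is a weak hypothesis -- it constrains only \emph{containment} among winning committees, and because of ties those winning sets are typically large -- so extracting an actual value of $f_{m,k+1}$ requires engineering the auxiliary elections so that the containment constraints become tight enough to isolate a single committee position at a time. The remaining pieces (the base case, the affine-ambiguity bookkeeping via Lemma~\ref{lem:unique}, and treating degenerate $(m,k)$ pairs separately) are routine.
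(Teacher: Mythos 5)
Your forward direction is fine, and the skeleton of your converse (fix $m$, set $\gamma_m:=f_{m,1}$, induct on $k$, use both clauses of committee monotonicity to pass from size $k$ to size $k+1$) is exactly the paper's. The gap is in the step you yourself flag as the hard part, and the mechanism you sketch for it does not work. Perturbing the all-permutations election by moving a candidate $c$ from position $p$ to $p-1$ in one vote $v$ and combining the inductive hypothesis with the two clauses yields only containment constraints on $\calR_f(E',k+1)$: every size-$(k+1)$ winner must contain $c$, and every size-$k$ winner (a set containing $c$ and avoiding the displaced candidate $d$) must extend to some size-$(k+1)$ winner. These constraints do not pin down the increments $g(I):=f_{m,k+1}(I')-f_{m,k+1}(I)$ over positions $I$ with $p\in I$, $p-1\notin I$. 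For example, if $g$ were constant except strictly smaller on a single position $I_0$, both constraints would still hold whenever $m\ge k+3$ (the extensions of a fixed size-$k$ set occupy pairwise distinct positions, so at most one of them lands on $I_0$), yet the increments are not equal; so nothing can be ``read off.'' The analogous step in Theorem~\ref{thm:weaklysep} works only because non-crossing monotonicity directly asserts that every previously winning committee survives the shift; committee-enlargement monotonicity says nothing about perturbations of a single election. And even if one forced the increments to be constant for each $p$, that would only give weak separability of $f_{m,k+1}$; one must additionally identify the resulting marginal with $f_{m,1}$ up to an affine map (Bloc is weakly separable but not committee-enlargement monotone, so this identification is not free), which again requires a quantitative cross-size comparison.

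The paper closes this gap with a genuinely different device. Working at committee size $p$ in the inductive step, it first shows that flatness of $f_{m,p}$ on top committee positions forces flatness of $f_{m,1}$, which produces an adjacent pair $I^*,J^*$ (sharing $p-1$ coordinates) on which both $f_{m,p}$ and $\sum_i f_{m,1}(i)$ strictly separate. It then proves that along a path through $[m]_p$ in which consecutive positions share $p-1$ elements, the difference $f_{m,p}(I)-f_{m,p}(J)$ is \emph{proportional} to $\sum_{i\in I}f_{m,1}(i)-\sum_{i\in J}f_{m,1}(i)$ with a single constant of proportionality. The proportionality is proved by contradiction: a rational $x/y$ strictly between the two ratios is used to build an election in which the $f_{m,p}$-ordering of two committees $W_1,W_2$ with $|W_1\cap W_2|=p-1$ disagrees with the ordering of the sums of the $f_{m,1}$-scores of their members, and a chain of auxiliary elections (symmetrization over $W_1\cap W_2$, adding tuned multiples of the $\zeta$-profiles to control individual scores, and a final doubling trick swapping $c_1$ with the alleged winner's extra candidate) converts this disagreement into a violation of committee-enlargement monotonicity against the inductive hypothesis. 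Some argument of this strength is what your conversion step is missing.
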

\noindent
Before we provide the proof of
Theorem~\ref{thm:committee_monot_and_sep}, we first introduce useful
notation and tools. Given two elections $E_1 = (C,V_1)$ and $E_2 =
(C,V_2)$, by $E_1 + E_2$ we mean election $(C, V_1+V_2)$, whose voter
collection is obtained by concatenating the voter collections of $E_1$
and $E_2$. For an election $E = (C,V)$ and a positive integer
$\lambda$, by $\lambda E$ we mean election $(C, \lambda V)$, whose
voter collection consists of $\lambda$ concatenated copies of $V$.
We will heavily rely on the
properties of the following elections
(let $C$ be some set of $m$ candidates; this set will always be clear
from the context when we use the notation introduced below):
\begin{enumerate}
\item For each candidate $c \in C$, by $\zeta(c)$ we denote the
  election with $(m-1)!$ voters who all rank $c$ as their most
  preferred candidate, followed by each possible permutation of the
  remaining $m-1$ candidates.

\item For each subset $S \subseteq C$, we define election $\zeta(S)$
  to be $\sum_{c \in S} \zeta(c)$ (i.e., it is a concatenation of the
  elections $\zeta(c)$ for each $c \in S$).

\end{enumerate}
The next two lemmas describe which committees win in elections
$\zeta(c)$ and $\zeta(S)$.

\begin{lemma}\label{lemma:profile_zeta1}
  Fix $m$ and $k$, and consider a non-degenerate committee scoring rule
  $\calR$ defined through a scoring function $f_{m,k}$.  The set of
  winners for $\zeta(c)$ consists of all committees that contain $c$.
\end{lemma}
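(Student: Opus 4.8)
The plan is to evaluate the $f_{m,k}$-score of every size-$k$ committee in $\zeta(c)$ by a direct counting argument and then to show that committees containing $c$ all receive one common value, which strictly exceeds the common value received by committees avoiding $c$; this last strict inequality is precisely where non-degeneracy enters.

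First I would compute the scores. Recall $\zeta(c)$ consists of the $(m-1)!$ votes that rank $c$ first and then list the remaining candidates $D = C \setminus \{c\}$ in every possible order. Fix a committee $W$. If $c \in W$, put $W' = W \setminus \{c\}$, so $|W'| = k-1$ and $W' \subseteq D$; in every vote $c$ occupies position $1$, and as the vote ranges over all $(m-1)!$ orders, each fixed $(k-1)$-subset $R \subseteq \{2,\dots,m\}$ is the set of positions occupied by $W'$ for exactly $(k-1)!\,(m-k)!$ votes (assign $W'$ to $R$, then order $D\setminus W'$ on the remaining positions). Hence, writing $(1,R)$ for the sorted sequence $1,r_1,\dots,r_{k-1}$,
\[
  f_{m,k}\hbox{-}\score_{\zeta(c)}(W) \;=\; (k-1)!\,(m-k)!\!\!\sum_{\substack{R\subseteq\{2,\dots,m\}\\|R|=k-1}}\!\! f_{m,k}\big((1,R)\big) \;=:\; A ,
\]
which does not depend on the particular $W \ni c$. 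If instead $c \notin W$, then $W \subseteq D$ has size $k$, candidate $c$ never contributes to the score, and by the same counting each $k$-subset $R \subseteq \{2,\dots,m\}$ is the position set of $W$ for exactly $k!\,(m-k-1)!$ votes, so $f_{m,k}\hbox{-}\score_{\zeta(c)}(W) = k!\,(m-k-1)!\sum_{R\subseteq\{2,\dots,m\},\,|R|=k} f_{m,k}(R) =: B$, again independent of $W$. (If $k=m$ the only committee is $C \ni c$ and the statement is trivial, so I assume $k<m$.)

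Next I would prove $A \ge B$ by a double count. Summing over all pairs $(S,j)$ with $S \subseteq \{2,\dots,m\}$, $|S| = k-1$, and $j \in \{2,\dots,m\} \setminus S$: the sum of $f_{m,k}((1,S))$ over such pairs is $(m-k)$ times the sum defining $A$, while the sum of $f_{m,k}(S \cup \{j\})$ is $k$ times the sum defining $B$ (each $k$-set $T \subseteq \{2,\dots,m\}$ arises as $S \cup \{j\}$ once for each of its $k$ elements). Gathering the factorials, this gives
\[
  A - B \;=\; (k-1)!\,(m-k-1)!\!\!\sum_{\substack{S\subseteq\{2,\dots,m\}\\|S|=k-1}}\ \sum_{j\in\{2,\dots,m\}\setminus S}\!\!\Big( f_{m,k}\big((1,S)\big) - f_{m,k}\big(S\cup\{j\}\big)\Big) .
\]
For each admissible $S$ and $j$ the sequence $(1,S)$ is obtained from the sorted form of $S \cup \{j\}$ by replacing the entry $j$ by $1$, so $(1,S)$ dominates $S \cup \{j\}$; by the defining monotonicity of committee scoring functions every summand is nonnegative, whence $A \ge B$.

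The main obstacle is strictness. If $A = B$, the identity above forces every summand to vanish, i.e.\ $f_{m,k}((1,S)) = f_{m,k}(S \cup \{j\})$ for all admissible $S,j$. Then, for any two $k$-subsets $T,T'$ of $\{2,\dots,m\}$ differing in a single element, writing $T = S \cup \{j\}$ and $T' = S \cup \{j'\}$, we get $f_{m,k}(T) = f_{m,k}((1,S)) = f_{m,k}(T')$; since any two $k$-subsets of $\{2,\dots,m\}$ can be joined by a sequence of single-element swaps, $f_{m,k}$ is constant on all committee positions whose entries are all at least $2$. Moreover every committee position with first entry $1$ equals $(1,S)$ for some such $S$, so it shares that same constant. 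Hence $f_{m,k}$ would be constant on all of $[m]_k$, contradicting non-degeneracy. Therefore $A > B$, and the set of winners of $\zeta(c)$ is exactly the set of committees containing $c$. The only delicate point is this last step: the monotonicity axiom yields only weak inequalities even from strict domination, so strictness must be extracted globally through the connectivity argument rather than term by term.
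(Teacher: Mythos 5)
Your proof is correct. You establish the same two facts as the paper---that all committees containing $c$ tie at a common score $A$, all committees avoiding $c$ tie at a common score $B$, and $A>B$---but the mechanism for the strict inequality is genuinely different. The paper argues pointwise: it first extracts from non-degeneracy a specific index $i$ with $f_{m,k}(i+1,\ldots,i+k) > f_{m,k}(i+2,\ldots,i+k+1)$ (a pigeonhole along the chain of ``diagonal'' positions), notes that for $W\not\ni c$ and $W'=(W\setminus\{c'\})\cup\{c\}$ the position of $W'$ weakly dominates that of $W$ in \emph{every} vote, and then exhibits a single witness vote in which $W'$ strictly gains. You instead compute both scores in closed form by counting permutations, express $A-B$ via a double count as a nonnegative combination of dominance differences $f_{m,k}((1,S))-f_{m,k}(S\cup\{j\})$, and obtain strictness by contradiction: if all these differences vanished, connectivity of the single-swap graph on $k$-subsets would force $f_{m,k}$ to be constant on all of $[m]_k$, contradicting non-degeneracy. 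Your route is longer and more computational, but it is self-contained, makes the exact values of $A$ and $B$ explicit, and avoids having to locate a particular witness vote; the paper's route is shorter and needs only the qualitative symmetry of $\zeta(c)$ plus one strict comparison. Both correctly handle the degenerate boundary ($k=m$, where every committee contains $c$).
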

\begin{proof}
  Since $\calR$ is non-degenerate, there exists $i$ such that
  $f_{m,k}(i+1, \ldots, i+k) > f_{m,k}(i+2, \ldots, i+k+1)$. By the
  fact that election $\zeta(c)$ is symmetric with respect to all the
  candidates except $c$, we see that all committees that contain $c$
  have the same $f_{m,k}$-score. Similarly, all committees that do not
  contain $c$ also have the same score.  Consider a committee $W$ such
  that $c \notin W$. Let $c'$ be an arbitrary member of $W$ and let
  $W' = (W \setminus \{c'\}) \cup \{c\}$. Naturally, in each vote the
  position of committee $W'$ dominates that of $W$. Further, there
  exists a vote where $W$ has committee position $(i+2, \ldots,
  i+k+1)$, and $W'$ has position $(1, i+2, \ldots, i+k)$. From this
  vote $W$ gets score $f_{m,k}(i+2, \ldots, i+k+1)$ and $W'$ gets
  score $f_{m,k}(1, i+2, \ldots, i+k) > f_{m,k}(i+1, \ldots,
  i+k)$. Thus the score of $W'$ in $\zeta(c)$ is higher than that of
  $W$. This completes the proof.
\end{proof}

\begin{lemma}\label{lemma:profile_zeta2}
  Fix $m$, $k$, and $S \subseteq C$, and consider a non-degenerate
  committee scoring rule $\calR$ defined through a scoring function
  $f_{m,k}$. If $|S| \geq k$ then the set of winning committees of
  $\zeta(S)$ consists of all the committees $W$ such that $W \subseteq
  S$.  Otherwise, it consists of all the committees $W$ such that $S
  \subseteq W$.
\end{lemma}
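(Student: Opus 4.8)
The plan is to turn the statement into an elementary counting argument built on the additivity of committee-scoring scores and Lemma~\ref{lemma:profile_zeta1}. First I would observe that, straight from the definition $f_{m,k}\hbox{-}\score_E(W)=\sum_{v\in V}f_{m,k}(\pos_v(W))$, scores are additive under concatenation of voter collections; hence for every size-$k$ committee $W$ we have $f_{m,k}\hbox{-}\score_{\zeta(S)}(W)=\sum_{c\in S}f_{m,k}\hbox{-}\score_{\zeta(c)}(W)$.

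Second, I would determine the value of each summand. By (the proof of) Lemma~\ref{lemma:profile_zeta1}, in the election $\zeta(c)$ all committees containing $c$ share one common score and all committees not containing $c$ share another, strictly smaller, one. I would then argue that these two values are independent of $c$: the permutation of $C$ that transposes $c$ and any other candidate $c'$ (fixing all others) maps $\zeta(c)$ onto $\zeta(c')$ and sends a committee through $c$ to a committee through $c'$ with the same score, so the ``high'' scores of $\zeta(c)$ and $\zeta(c')$ coincide, and likewise the ``low'' ones. Thus there are constants $\alpha>\beta$ (depending only on $m$, $k$, and $f_{m,k}$) with $f_{m,k}\hbox{-}\score_{\zeta(c)}(W)=\alpha$ when $c\in W$ and $f_{m,k}\hbox{-}\score_{\zeta(c)}(W)=\beta$ when $c\notin W$.

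Third, I would combine the two observations: $f_{m,k}\hbox{-}\score_{\zeta(S)}(W)=|S\cap W|\,\alpha+(|S|-|S\cap W|)\,\beta=|S|\,\beta+|S\cap W|\,(\alpha-\beta)$. Since $\alpha-\beta>0$, this is strictly increasing in $|S\cap W|$, so the winners of $\zeta(S)$ are exactly the size-$k$ committees maximizing $|S\cap W|$. Because $|S\cap W|\le\min(|S|,k)$, it remains to split into cases: if $|S|\ge k$, the bound $k$ is attained exactly by the committees with $W\subseteq S$ (which exist); if $|S|<k$, the bound $|S|$ is attained exactly by the committees with $S\subseteq W$ (which exist, as $k\le m$). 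This yields precisely the claimed description of the winning committees.

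I do not anticipate a genuine difficulty; the only step needing care is the relabeling argument that makes $\alpha$ and $\beta$ the same for every $c$ --- without it the per-election scores would not collapse to the clean linear expression above --- together with noting that $\alpha>\beta$ is strict, which is exactly where non-degeneracy of $\calR$ enters (via Lemma~\ref{lemma:profile_zeta1}).
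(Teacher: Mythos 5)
Your proposal is correct and follows essentially the same route as the paper: decompose the score in $\zeta(S)$ as a sum over the constituent elections $\zeta(c)$, use Lemma~\ref{lemma:profile_zeta1} to get a strictly larger common score for committees containing $c$, and conclude that winners are exactly the committees maximizing $|S\cap W|$. The only difference is that you spell out the relabeling argument showing the two per-$\zeta(c)$ score values are independent of $c$, which the paper leaves implicit, and you write out both cases where the paper treats only $|S|\geq k$ and calls the other analogous.
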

\begin{proof}
  Consider election $\zeta(c)$ and let $x$ and $y$ denote the scores
  of committees, respectively, containing $c$ and not containing
  $c$. From Lemma~\ref{lemma:profile_zeta1} it follows that $x >
  y$. Consider the case when $|S| \geq k$ (the proof for the other
  case follows by analogous reasoning). The score of a committee $W$
  such that $W \subseteq S$ is equal to $kx + (|S| - k)y$. For each
  committee $W'$ with $W' \not\subseteq S$, its score is at most equal
  to $(k-1)x + (|S| - k+1)y < kx + (|S| - k)y$.
\end{proof}

In the following observation we analyze the scores of candidates and
committees in the elections we will be using in the proof of
Theorem~\ref{thm:committee_monot_and_sep}.

\begin{observation}\label{obs:zeta_construction2}
  Consider two committees, $W_1$ and $W_2$, with $W_1 \setminus W_2 =
  \{c_1\}$ and $W_2 \setminus W_1 = \{c_2\}$.  By symmetry of our
  construction, for each single-winner scoring function $f_{m,1}$, the
  $f_{m,1}$-scores of the candidates $c_1$ and $c_2$ 
  are the same in election $\zeta(W_1 \cup W_2)$,
  are the same in election $\zeta(W_1 \cap W_2)$,
  and are the same in election $\zeta\big(\{c_1,
  c_2\}\big)$.
  Further, in each of these three elections, the $f_{m,1}$-scores of
  any two candidates $c, c' \in W_1 \cap W_2$ are equal. If $f_{m,1}$
  is nontrivial, then in $\zeta(W_1 \cup W_2)$, $\zeta(W_1 \cap W_2)$
  and $\zeta\big(\{c_1, c_2\}\big)$ the $f_{m,1}$-scores of candidates
  $c_1$ and $c_2$ are, respectively, the same, lower, and higher than
  the $f_{m,1}$-score of any other candidate $c\in W_1\cup W_2$.
  Also, for each committee scoring function $f_{m,k}$, the
  $f_{m,k}$-scores of committees $W_1$ and $W_2$ are the same in
  $\zeta(W_1 \cup W_2)$, are the same in $\zeta(W_1 \cap W_2)$, and
  are the same in $\zeta\big(\{c_1, c_2\}\big)$.  In $\zeta\big(W_1
  \cap W_2 \setminus \{c\}\big)$, where $c\neq c_1, c_2$, the $f_{m,k}$-scores
  of $W_1$ and $W_2$ are equal, and the $f_{m,1}$-score of $c$ is lower than the
  $f_{m,1}$-score of any other candidate from $W_1 \cap W_2$.
\end{observation}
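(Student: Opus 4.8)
The plan is to derive the whole observation from two ingredients: a symmetry principle for scores under automorphisms of an election, and a single explicit computation that settles the strict inequalities.

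First I would set up the symmetry principle. Call a permutation $\pi$ of the candidate set $C$ an \emph{automorphism} of an election $E=(C,V)$ if applying $\pi$ to every vote of $E$ merely reorders $V$ (the multiset of votes is unchanged). Then for any single-winner scoring function $\gamma$ and any candidate $c$, the value $\sum_{v\in V}\gamma(\pos_v(c))$ is unchanged if we replace $c$ by $\pi(c)$, because the multiset $\{\pos_v(c):v\in V\}$ is $\pi$-invariant; likewise, for any committee scoring function $f_{m,k}$ and committee $W$, the $f_{m,k}$-score of $W$ equals that of $\pi(W)$. The key point is that for any $S\subseteq C$, every permutation of $C$ fixing $S$ setwise is an automorphism of $\zeta(S)=\sum_{c\in S}\zeta(c)$, since such a $\pi$ sends each summand $\zeta(c)$ to $\zeta(\pi(c))$ and simply permutes these summands among themselves.

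From this, all the equality (``are the same'') statements are immediate. The transposition $(c_1\ c_2)$ fixes each of $W_1\cup W_2$, $W_1\cap W_2$, $\{c_1,c_2\}$, and $W_1\cap W_2\setminus\{c\}$ setwise (using $c_1,c_2\notin W_1\cap W_2$) and sends $W_1$ to $W_2$; hence it is an automorphism of each of $\zeta(W_1\cup W_2)$, $\zeta(W_1\cap W_2)$, $\zeta(\{c_1,c_2\})$, and $\zeta(W_1\cap W_2\setminus\{c\})$, yielding equal $f_{m,1}$-scores for $c_1,c_2$ and equal $f_{m,k}$-scores for $W_1,W_2$ in all of them. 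Transposing two members of $W_1\cap W_2$ likewise fixes $W_1\cup W_2$, $W_1\cap W_2$, and $\{c_1,c_2\}$ setwise, so all members of $W_1\cap W_2$ get a common $f_{m,1}$-score in each of these three elections; and in $\zeta(W_1\cup W_2)$ every transposition of two members of $W_1\cup W_2$ is an automorphism, so there $c_1$, $c_2$, and every member of $W_1\cap W_2$ share one and the same $f_{m,1}$-score.

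It remains to handle the strict inequalities, and here I would compute directly. Let $\gamma$ be a nontrivial single-winner scoring function and $S\subseteq C$ with $|S|\ge 1$. In $\zeta(S)$, a candidate $d\in S$ is ranked first by all $(m-1)!$ voters of the summand $\zeta(d)$ and occupies each of the positions $2,\dots,m$ exactly $(m-2)!$ times in each of the other $|S|-1$ summands, whereas a candidate $e\notin S$ occupies each of $2,\dots,m$ exactly $(m-2)!$ times in every one of the $|S|$ summands. Subtracting, the $\gamma$-score of $d$ in $\zeta(S)$ minus the $\gamma$-score of $e$ in $\zeta(S)$ equals
\[
  (m-2)!\,\Big((m-1)\gamma(1) - \textstyle\sum_{j=2}^{m}\gamma(j)\Big) \;>\; 0,
\]
the positivity holding because $\gamma$ is nonincreasing and nonconstant, so $\gamma(1)\ge\gamma(j)$ for all $j$ while $\gamma(1)>\gamma(m)$. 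Taking $S=W_1\cap W_2$ (with $d\in W_1\cap W_2$, $e\in\{c_1,c_2\}$) gives that $c_1,c_2$ score strictly below every member of $W_1\cap W_2$ in $\zeta(W_1\cap W_2)$; taking $S=\{c_1,c_2\}$ (with $d\in\{c_1,c_2\}$, $e\in W_1\cap W_2$) gives the reverse strict inequality in $\zeta(\{c_1,c_2\})$; and taking $S=W_1\cap W_2\setminus\{c\}$ (with $e=c$ and $d$ any other member of $W_1\cap W_2$) gives that $c$ scores strictly below every other member of $W_1\cap W_2$ there. When one of these sets is empty (which happens only for very small committee sizes) the corresponding $\zeta(S)$ is the empty election and the claim about it is vacuous, and of course the strict-inequality clauses presuppose that $f_{m,1}$ is nontrivial. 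The only step carrying genuine content is the displayed identity, so that is the one I would write out carefully; everything else is bookkeeping with automorphisms of the profiles $\zeta(S)$.
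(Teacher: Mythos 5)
Your proof is correct and matches the paper's (implicit) justification: the paper asserts this observation only ``by symmetry of our construction,'' and your automorphism formalism --- any permutation of $C$ fixing $S$ setwise permutes the summands of $\zeta(S)$ and hence preserves all candidate and committee scores --- is exactly the right way to make that precise. The one place where symmetry alone cannot suffice, namely the strict comparisons between candidates inside and outside $S$ in $\zeta(S)$, is settled by your explicit count, whose displayed difference $(m-2)!\bigl((m-1)\gamma(1)-\sum_{j=2}^{m}\gamma(j)\bigr)=(m-2)!\sum_{j=2}^{m}\bigl(\gamma(1)-\gamma(j)\bigr)$ is indeed strictly positive for a nonincreasing nonconstant $\gamma$; this is the same kind of computation the paper carries out in the proof of Lemma~\ref{lemma:profile_zeta1}, and your handling of the degenerate cases (empty $S$, trivial $f_{m,1}$) is appropriate.
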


In the next lemma we handle the possibility that the rule $\calR_f$ in
Theorem~\ref{thm:committee_monot_and_sep} may be trivial.
(Note that
it is not the case that if a committee-enlargement monotone
multiwinner rule always outputs all size-$1$ committees then it
also always outputs all size-$k$ committees for larger values of~$k$.
The result below excludes this behavior for the subclass of committee scoring rules.)

\begin{lemma}
  \label{triviality}
  Suppose that $\calR_f$ is a committee scoring rule defined by a
  family $f=(f_{m,k})_{k\le m}$ of scoring functions, such that
  $\calR_f$ is committee-enlargement monotone and $f_{m,1}$ is
  constant. Then $f_{m,k}$ is constant for every $k\le m$.
\end{lemma}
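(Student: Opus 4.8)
The plan is to induct on the committee size $k$ with the number of candidates $m$ held fixed. The base case $k=1$ is precisely the hypothesis of the lemma. For the inductive step, assume $f_{m,k}$ is constant; then, since $f_{m,k}$ is constant, $\calR_f(E,k)$ equals the family of all size-$k$ subsets of $C$ for every election $E=(C,V)$ with $|C|=m$. I want to conclude that $f_{m,k+1}$ is also constant. If $k+1=m$ this is immediate, as there is exactly one committee of size $m$, so from now on I may assume $m\ge k+2$.

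I would argue by contradiction, assuming $f_{m,k+1}$ is non-constant, and use the elections $\zeta(S)$ introduced just before this lemma. Committee-enlargement monotonicity by itself is too weak to yield a contradiction directly — knowing that all size-$k$ committees win some election does not force all size-$(k+1)$ committees to win it — so the idea is to feed the monotonicity axiom a carefully chosen election whose winning $(k+1)$-committees can be pinned down exactly via Lemma~\ref{lemma:profile_zeta2}. The crucial observation here is that Lemma~\ref{lemma:profile_zeta2}, and Lemma~\ref{lemma:profile_zeta1} on which it rests, only require the \emph{particular} single scoring function under consideration to be non-constant — which we are now assuming for $f_{m,k+1}$ — so they are available even though the ambient rule $\calR_f$ is degenerate. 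Concretely, fix $S\subseteq C$ with $|S|=k+1$. By Lemma~\ref{lemma:profile_zeta2} the only size-$(k+1)$ winning committee of $\zeta(S)$ is $S$ itself, i.e., $\calR_f(\zeta(S),k+1)=\{S\}$, whereas by the induction hypothesis every size-$k$ subset of $C$ wins $\zeta(S)$ at committee size $k$.

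Now I would apply committee-enlargement monotonicity (condition~(1) of the definition) to the election $\zeta(S)$ at committee sizes $k$ and $k+1$. Since $m\ge k+2$, the set $C\setminus S$ is nonempty, so I can pick a size-$k$ committee $W$ that is not contained in $S$ (one candidate from $C\setminus S$ together with $k-1$ others). Then $W\in\calR_f(\zeta(S),k)$, so there must exist $W'\in\calR_f(\zeta(S),k+1)$ with $W\subseteq W'$; but $\calR_f(\zeta(S),k+1)=\{S\}$ forces $W\subseteq S$, a contradiction. Hence $f_{m,k+1}$ is constant, which closes the induction. The one point that genuinely needs care is the one flagged above — verifying that the $\zeta$-lemmas are applicable here despite $\calR_f$ being degenerate (this comes down to the fact that their proofs use only non-constancy of the single function $f_{m,k+1}$, which the contradiction hypothesis supplies); everything else, including the boundary case $m=k+1$, is routine bookkeeping.
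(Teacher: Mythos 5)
Your proof is correct and follows essentially the same strategy as the paper's: induct on the committee size and, assuming $f_{m,k+1}$ is non-constant, exhibit a symmetric election in which every size-$k$ committee wins (by the inductive hypothesis) but some size-$k$ committee extends to no winning size-$(k+1)$ committee, contradicting condition~(1) of committee-enlargement monotonicity. The only difference is the choice of witness election: the paper builds a bespoke profile (a candidate $c$ pinned at position $i+p+1$ below a fully permuted block), while you reuse $\zeta(S)$ with $|S|=k+1$ and invoke Lemma~\ref{lemma:profile_zeta2} directly, which is a slightly cleaner route; your observation that the $\zeta$-lemmas need only non-constancy of the single function $f_{m,k+1}$, supplied by the contradiction hypothesis, is exactly the point that needs checking and you handle it correctly.
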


\begin{proof}
  $\calR_f$ is trivial for $k=1$ and we will show that, in fact, it is
  trivial for all $k$. The proof follows by induction. Let us assume
  that $\calR_f$ is trivial for some $k = p - 1$, i.e., that
  $f_{m,p-1}$ is constant. For the sake of contradiction let us assume
  that $f_{m,p}$ is not trivial, hence $f_{m,p}(1, \ldots, p) > f_{m,
    p}(m-p+1, \ldots m)$. Let $i$ be the smallest positive integer
  such that $f_{m,p}(i+1, \ldots, i + p) > f_{m, p}(i+2, \ldots
  i+p+1)$. Consider an election where a certain candidate $c$ is
  always in position $i + p + 1$, the positions $i+p+2, \ldots, m$ are
  also always occupied by the same candidates, and on positions $1,
  \ldots, i + p$ there are always the same candidates, call the set of
  these candidates $S$, but in all possible permutations. We can see
  that the $f_{m,p}$-scores of committees that consists only of
  candidates from $S$ are higher than the $f_{m,p}$-scores of
  committees that contain $c$ (the reasoning is very similar to the
  one given in the proof of Lemma~\ref{lemma:profile_zeta1}). This,
  however, contradicts committee monotonicity, since by our inductive
  assumption, for $k = p-1$ all committees were winning, and so for
  $k=p$ there should be at least one winning committee containing $c$.
\end{proof}

We are nearly ready to present the proof of
Theorem~\ref{thm:committee_monot_and_sep}. The final piece of notation
that we will need is as follows. Given two committee positions $I =
(i_1, \ldots, i_k)$ and $J = (j_1, \ldots, j_k)$, we will sometimes
treat them as sets rather than sequences. For example, by $|I \cap J|$
we will mean the number of single-candidate positions that occur
within both $I$ and $J$, and we will say that $i \in I$ if there is
some $t$ such that $i = i_t$.

\newcommand{\singlescore}{f_{m,1}\hbox{-}\score}
\newcommand{\pscore}{f_{m,p}\hbox{-}\score}
\newcommand{\plonescore}{f_{m,p-1}\hbox{-}\score}

\begin{proof}[Proof of Theorem~\ref{thm:committee_monot_and_sep}] 
  Each separable committee scoring rule is committee-enlargement
  monotone and we focus on proving the converse.

  Let $\calR_f$ be the committee-enlargement monotone committee
  scoring rule defined through a family $f=(f_{m,k})_{k\le m}$ of
  scoring functions.  Let us fix the number of candidates in our
  elections to be $m$. $\calR_f$ assigns a score to each committee of
  each size and, in particular, for $k = 1$, given an election
  $E=(C,V)$ it assigns $f_{m,1}$-score to each candidate (singleton):
  \begin{align*}
    \singlescore_E(c) = \sum_{v_i \in V}f_{m,1}(\pos_{v_i}(c)) \text{,}
  \end{align*}
  We will show by induction on $k$ that for each election $E$, a
  size-$k$ committee $W$ is winning under $\calR_f$ if and only if it
  consists of candidates with the $k$ highest $\singlescore\mathrm{s}$.

  The base for the induction, for $k=1$, follows immediately from
  the definition of $\calR_f$.  Now, to prove the inductive step, let us
  assume that for each $k < p$ and for each election $E$ it holds that
  $W \in \calR_f(E, k)$ if and only if it consists of $k$ candidates
  with the highest $\singlescore\mathrm{s}$.
  We will show that this is also the case for $k = p$.  By
  Lemma~\ref{triviality} we may assume that $\calR_f$ for $k=1$ is
  nontrivial.\medskip

  Our first task is to show that  whenever:
  \begin{equation}
    \label{ind_hyp} 
    f_{m,p}(1, \ldots, p) = f_{m,p}(i+1, \ldots, i+p)
  \end{equation}
  for some $i\in [m]$, then $f_{m,1}(1) = \cdots = f_{m,1}(i+p)$. For
  the sake of contradiction let us assume that $f_{m,p}(1, \ldots, p) = f_{m,p}(i+1,
  \ldots, i+p)$ and $f_{m,1}(1) > f_{m,1}(i+p)$. We first show that it
  must hold that $f_{m,1}(1) = \cdots = f_{m,1}(i+p-1)$. To see why
  this is the case, consider an election with a single vote $c_1 \succ
  c_2 \succ \cdots \succ c_m$. In such an election, committee $\{c_1,
  \ldots, c_p\}$ always wins and, since $f_{m,p}(1, \ldots, p) =
  f_{m,p}(i+1, \ldots, i+p)$, we have that committee $W_i = \{c_{i+1}, \ldots, c_{i+p}\}$ also wins. By
  committee-enlargement monotonicity we know that some
  size-$(p-1)$ subcommittee of $W_i$ wins for committee size $p-1$
  and, in particular, by weak dominance we get that
  certainly $W'_i = \{c_{i+1}, \ldots, c_{i+p-1}\}$ wins. Thus,
  by the inductive hypothesis it must be the case that:
  \begin{align*}
    f_{m,1}(1) + \ldots + f_{m,1}(p-1) = f_{m,1}(i+1) + \ldots + f_{m,1}(i+p-1)
  \end{align*} 
  which implies that $f_{m,1}(1) = f_{m,1}(i+p-1)$ and, thus, that
  $f_{m,1}(1) = f_{m,1}(2) = \cdots = f_{m,1}(i + p - 1)$. Since we assumed that $f_{m,1}(1) > f_{m,1}(i+p)$,
  it must be the case that $f_{m,1}(i+p-1) > f_{m,1}(i+p)$.

  Now we show that the assumption that $f_{m,1}(i+p-1) > f_{m,1}(i+p)$ also leads
  to a contradiction.
  Consider an election $E$ with two votes:
  \begin{align*}
    v_1\colon& c_1 \succ c_2 \succ \ldots \succ c_{i+p-1} \succ c_{i+p} \succ \ldots \succ c_m \text{,} \\
    v_2\colon& c_1 \succ c_2 \succ \ldots \succ c_{i+p} \succ c_{i+p-1} \succ \ldots \succ c_m \text{,}
  \end{align*}
  which differ only in the order of $c_{i+p-1}$ and $c_{i+p}$.  For
  each $j < i+p-1$, we have $\singlescore_E(c_j)=2f_{m,1}(j)$ and this
  value is higher than the $f_{m,1}$-scores of $c_{i+p-1}$ and
  $c_{i+p}$. By the inductive hypothesis, this means that for $k =
  p-1$ there is no winning committee that contains either $c_{i+p-1}$
  or $c_{i+p}$. Thus, by committee-enlargement monotonicity, we infer
  that no winning committee for $k = p$ contains both $c_{i+p-1}$ and
  $c_{i+p}$. On the other hand, for $k = p$ due to \eqref{ind_hyp} the
  $f_{m,p}$-score of committee $\{c_{i+1}, \ldots, c_{i+p-1}, c_{i+p}\}$ is
  the highest among committees of size $p$, which gives a
  contradiction.\medskip

  Next, let $i$ be the smallest value such that $f_{m,p}(i + 1,
  \ldots, i + p) > f_{m,p}(i+2, \ldots, i+p+1)$. We will show that
  $f_{m,1}(i+p) > f_{m,1}(i+p+1)$. Again, for the sake of
  contradiction, let us assume that this is not the case and $f_{m,1}(i+p) = f_{m,1}(i+p+1)$. By our
  previous reasoning we have that $f_{m,1}(1) = \cdots =
  f_{m,1}(i+p)$. Consider an election where a fixed candidate $c$
  stands on position $i+p+1$ and some set of $i+p$ candidates stands
  on the first $i+p$ positions in all possible permutations. In such
  an election there is no winning committee of size $p$ that contains
  $c$. However, by the inductive hypothesis, a winning committee of
  size $p-1$ containing $c$ does exist. This contradicts
  committee-enlargement monotonicity.\medskip

  By the above reasoning, we can find two committee positions $I^*$
  and $J^*$, for committees of size $p$, such that $|I^* \cap J^*| =
  p-1$, $f_{m,p}(I^*) > f_{m,p}(J^*)$, and $\sum_{i \in I^*}f_{m,1}(i)
  > \sum_{i \in J^*}f_{m,1}(i)$.  Let us arrange all committee
  positions from $[m]_k$ in a sequence $\calS$ so that for each two
  consecutive elements $I$ and $J$ in $\calS$ it holds that $|I \cap
  J| = p-1$. This is possible (see the construction based on Johnson
  graphs in Lemma 8 of the work of
  Skowron~et~al.~\cite{sko-fal-sli:t:axiomatic-committee}).  We claim
  that for each two consecutive elements of sequence $\calS$, call
  them $I$ and $J$, it holds that:
  \begin{align}\label{eq:equalDerivatives}
    \frac{f_{m,p}(I) - f_{m,p}(J)}{f_{m,p}(I^*) - f_{m,p}(J^*)} =
    \frac{\sum_{i \in I}f_{m,1}(i) - \sum_{i \in J}f_{m,1}(i)}{\sum_{i
        \in I^*}f_{m,1}(i) - \sum_{i \in J^*}f_{m,1}(i)} \text{.}
  \end{align}
  (Note that the above expression is well defined. There is no
  division by zero because we selected $I^*$ and $J^*$ so that
  $f_{m,p}(I^*) \neq f_{m,p}(J^*)$ and $\sum_{i \in I^*}f_{m,1}(i)
  \neq \sum_{i \in J^*}f_{m,1}(i)$.)  For the sake of contradiction,
  let us assume that equality~\eqref{eq:equalDerivatives} does not
  hold for some $I$ and $J$, and let us assume that
  there exist $x, y \in \naturals$ such that:
  \begin{align}\label{eq:nonequalDerivatives}
    \frac{f_{m,p}(I) - f_{m,p}(J)}{f_{m,p}(I^*) - f_{m,p}(J^*)} >
    \frac{x}{y} > \frac{\sum_{i \in I}f_{m,1}(i) - \sum_{i \in
        J}f_{m,1}(i)}{\sum_{i \in I^*}f_{m,1}(i) - \sum_{i \in
        J^*}f_{m,1}(i)}
  \end{align}
  Let $W_1$ and $W_2$ be two fixed committees with $|W_1 \cap W_2| =
  p-1$. Let $W_1 \setminus W_2 = \{c_1\}$ and $W_2 \setminus W_1 =
  \{c_2\}$. We construct election $Q$ in which there are $x$ votes
  where $W_1$ stands on position $I^*$ and $W_2$ on position $J^*$,
  and $y$ votes where $W_1$ stands on position $J$ and $W_2$ on
  position $I$. In $Q$ the score of $W_2$ is equal to $xf_{m,p}(J^*) +
  yf_{m,p}(I)$, and the score of $W_1$ is equal to $xf_{m,p}(I^*) +
  yf_{m,p}(J)$. By inequality~\eqref{eq:nonequalDerivatives}, we see
  that the $f_{m,p}$-score of $W_2$ in $Q$ is greater than the
  $f_{m,p}$-score of $W_1$, yet the sum of the $f_{m,1}$-scores of
  members of $W_2$ is lower than that of the members of $W_1$, which
  means that the $f_{m,1}$-score of $c_1$ in $Q$ is greater than the
  $f_{m,1}$-score of~$c_2$ (these are the only candidates in which the
  two committees differ).  If
  inequality~\eqref{eq:nonequalDerivatives} were reversed (i.e., if we
  replaced both occurrences of ``$>$'' with ``$<$'') then the same
  construction would still work but we would have to reverse the roles
  of $W_1$ and $W_2$ and of $c_1$ and $c_2$.

  We construct election $Q_s$ by taking each possible permutation
  $\sigma$ of the candidates from $W_1 \cap W_2$ and by concatenating
  all elections of the form $\sigma(Q)$ (where $\sigma(Q)$ is an
  election that results from applying $\sigma$ to the candidates in
  all the preference orders within $Q$).  Thus, intuitively, $Q_s$ can
  be viewed as a symmetric version of $Q$, where symmetry is with
  respect to the candidates in $W_1 \cap W_2$.  In particular in $Q_s$
  it holds that:
  \begin{enumerate}[(a)]
  \item the $f_{m,p}$-score of $W_2$ is higher than the $f_{m,p}$-score of $W_1$,
  \item the $f_{m,1}$-score of $c_1$ is higher than that of $c_2$, and
  \item the $f_{m,1}$-scores of all candidates from $W_1 \cap W_2$ are equal.
  \end{enumerate}

\begin{figure}[t!]
  \begin{center}
    \includegraphics[scale=0.55]{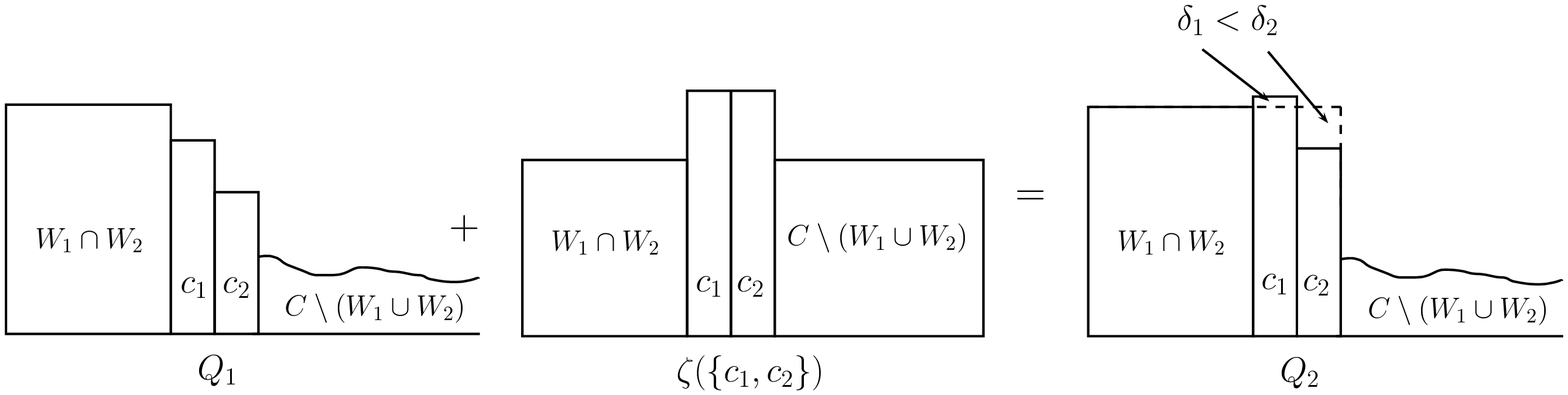}
  \end{center}
  \caption{The construction of election $Q_2$ from $Q_1$ in case
    (i). The ``$x$-axis'' corresponds to candidates and the
    ``$y$-axis'' corresponds to their $f_{m,1}$-scores.  Here,
    $\delta_1$ and $\delta_2$ denote, respectively, the differences
    between the scores of $c$ and $c_1$ and the difference between the
    scores of $c$ and $c_2$. The shape of the election
    $\zeta\big(\{c_1, c_2\}\big)$  is
    justified in Observation~\ref{obs:zeta_construction2}.}
  \label{fig:proof_transformation1}
  \begin{center}
    \includegraphics[scale=0.55]{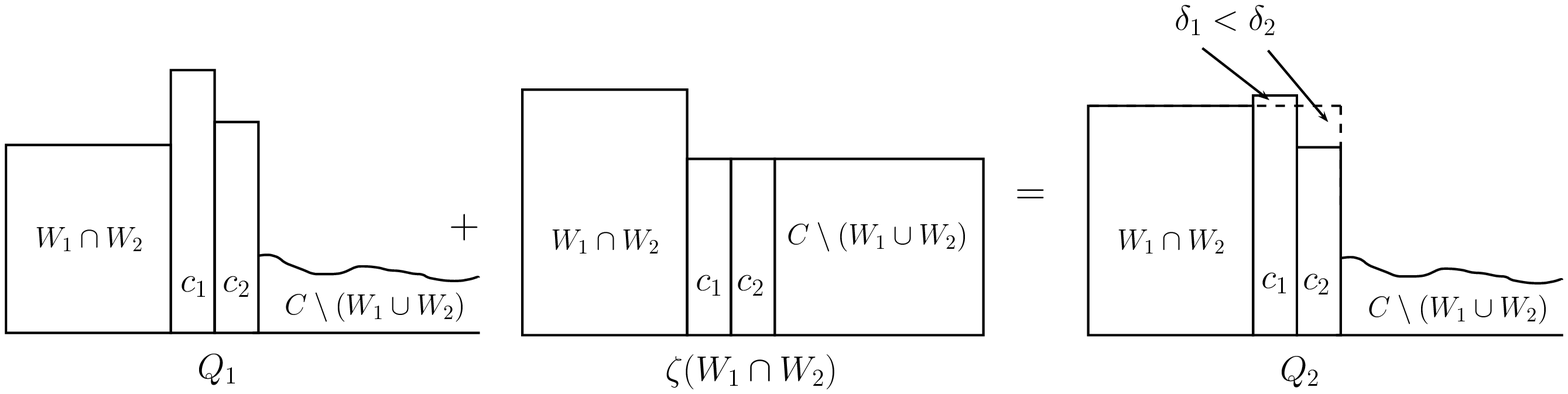}
  \end{center}
  \caption{The construction of election $Q_2$ from $Q_1$ in case (ii);
    the interpretation of the figure is the same as for
    Figure~\ref{fig:proof_transformation1}. The shape of the election
    $\zeta(W_1 \cap W_2)$ is justified in
    Observation~\ref{fig:proof_transformation1}.}
  \label{fig:proof_transformation2}
  \end{figure}

  There exists $\lambda \in \naturals$ such that in election $Q_1 =
  \lambda \zeta(W_1 \cup W_2) + Q_s$ each candidate from $W_1 \cup
  W_2$ has higher $f_{m,1}$-score than each candidate outside of $W_1
  \cup W_2$. By Observation~\ref{obs:zeta_construction2}, it is clear
  that the $f_{m,1}$-score of candidate $c_1$ in $Q_1$ is higher than
  that of candidate $c_2$. Intuitively, this transformation allows us
  to focus only on the candidates from $W_1 \cup W_2$.

  Now, let $c$ be a fixed arbitrary candidate from $W_1 \cap W_2$.  We
  construct election $Q_2$ using $Q_1$ in the following way.
  \begin{enumerate}[(i)]
  \item If in $Q_1$ the $f_{m,1}$-score of $c$ is higher than the
    $f_{m,1}$-score of $c_1$, then we define $Q_2$ as a linear
    combination $Q_2 = \lambda_1 Q_1 + \lambda_2 \zeta\big(\{c_1,
    c_2\}\big)$ (this is depicted in
    Figure~\ref{fig:proof_transformation1}).
  \item Otherwise, i.e., if in $Q_1$ the $f_{m,1}$-score of $c_1$ is at least as
    high as the $f_{m,1}$-score of $c$, then we define $Q_2$ as a linear
    combination $Q_2 = \lambda_1 Q_1 + \lambda_2 \zeta(W_1 \cap W_2)$
    (this is depicted in Figure~\ref{fig:proof_transformation2}).
  \end{enumerate}
  In each of these two cases we choose the coefficients $\lambda_1$
  and $\lambda_2$ so that in $Q_2$ it holds that the $f_{m,1}$-score of $c_1$
  is higher than that of $c$, which is higher than the $f_{m,1}$-score of $c_2$.
  Further, we choose $\lambda_1$ and $\lambda_2$ so that the
  difference between the $f_{m,1}$-scores of $c$ and $c_1$ is smaller than the
  difference between the $f_{m,1}$-scores of $c$ and $c_2$. Formally:
  \begin{align}\label{eq:proper_differences}
    \singlescore_{Q_2}(c_1) - \singlescore_{Q_2}(c) < \singlescore_{Q_2}(c) - \singlescore_{Q_2}(c_2) \text{.}
  \end{align}
  Why is it possible to choose such $\lambda_1$ and $\lambda_2$? We
  will give a formal argument for Case (i) and it will be clear that
  this reasoning can be repeated for Case (ii). Let:
  \begin{align*}
    \Delta_1 = \singlescore_{Q_1}(c) - \singlescore_{Q_1}(c_1) \quad \text{and}
    \quad \Delta_2 = \singlescore_{Q_1}(c_1) - \singlescore_{Q_1}(c_2).
  \end{align*}
  Further, let $\Delta_3$ denote the difference between the
  $f_{m,1}$-scores of the candidates from $\{c_1, c_2\}$ and the
  $f_{m,1}$-scores of the candidates outside of $\{c_1, c_2\}$ in
  $\zeta\big(\{c_1, c_2\}\big)$. Naturally, there exist natural
  numbers $p, q \in \naturals$ such that:
  \begin{align*}
    \Delta_1 < \frac{p}{q} \Delta_3 < \Delta_1 + \frac{1}{2} \Delta_2
    \text{.}
  \end{align*}
  We set $\lambda_1 = q$ and $\lambda_2 = p$, and from the above
  inequality we get that:
  \begin{align}\label{eq:lambda1lambda2set}
    \lambda_1 \Delta_1 < \lambda_2 \Delta_3 < \lambda_1\left(\Delta_1 +
    \frac{1}{2} \Delta_2\right) \text{.}
  \end{align}
  Observe that:
  \begin{align*}
    \singlescore_{Q_2}(c_1) - \singlescore_{Q_2}(c) &= - \lambda_1 \Delta_1 + \lambda_2 \Delta_3 > 0 \text{,} \\
    \singlescore_{Q_2}(c) - \singlescore_{Q_2}(c_2) &= \lambda_1 (\Delta_1 + \Delta_2) - \lambda_2 \Delta_3 > - \lambda_1 \Delta_1 + \lambda_2 \Delta_3 \\
    &= \singlescore_{Q_2}(c_1) - \singlescore_{Q_2}(c) \text{.}
  \end{align*}
  (The second inequality above is equivalent to $2\lambda_1\Delta_1 +
  \lambda_1\Delta_2 - 2\lambda_2 \Delta_3 > 0$, and thus follows from
  inequality~\eqref{eq:lambda1lambda2set}.)  
  \begin{figure}[t!]
    \begin{center}
      \includegraphics[scale=0.55]{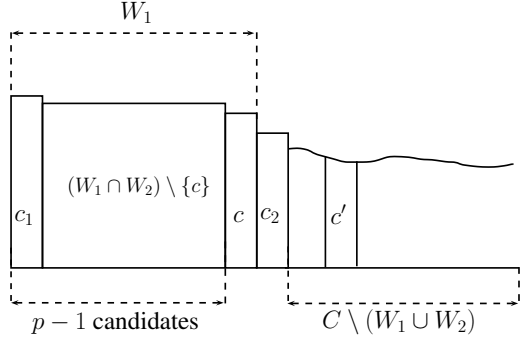}
    \end{center}
    \caption{Illustration of election $Q_3$. The interpretation of the
      figure is the same as for
      Figure~\ref{fig:proof_transformation1}.}
    \label{fig:proof_transformation3}
  \end{figure}
  Next, we construct $Q_3$ as $Q_3 = \lambda_4 Q_2 + \zeta\big(W_1
  \cap W_2 \setminus \{c\}\big)$, where $\lambda_4$ is a very large
  number so that in $Q_3$ we still have that $\singlescore_{Q_3}(c_1)
  > \singlescore_{Q_3}(c) > \singlescore_{Q_3}(c_2)$ and that in $Q_3$
  inequality~\eqref{eq:proper_differences} still holds, yet the
  $f_{m,1}$-score of $c$ is slightly lower than the $f_{m,1}$-scores
  of the other candidates from $W_1 \cap W_2$. Election $Q_3$ is
  depicted in Figure~\ref{fig:proof_transformation3}.

  Given the $f_{m,1}$-scores of the candidates in $Q_3$, by our
  inductive assumption the unique size-$(p-1)$ winning committee
  consists of $c_1$ and all the candidates from $W_1 \cap W_2
  \setminus \{c\}$. By committee-enlargement monotonicity, we conclude
  that all size-$p$ winning committees for $Q_3$ are of the form
  $\{c_1\} \cup (W_1 \cap W_2 \setminus \{c\}) \cup \{c'\}$, where
  $c'$ is some other candidate. Let $W'$ be one such winning
  committee. We know that it cannot be the case that $c' = c$
  (i.e., $W_1$ cannot be winning in $Q_3$). This is so, because in
  $Q_3$ the $f_{m,p}$-score of $W_2$ is higher than the
  $f_{m,p}$-score of $W_1$ (since it was higher already in $Q_s$, and
  we added only elections which are symmetric with respect to $W_1$
  and $W_2$---this symmetry follows from
  Observation~\ref{obs:zeta_construction2}).  Thus, by the properties
  of the $f_{m,1}$-scores of the candidates (see
  Figure~\ref{fig:proof_transformation3}), $c'$ must be some candidate
  such that:
  \begin{align*}
    \singlescore_{Q_3}(c') \leq \singlescore_{Q_3}(c_2) \text{,}
  \end{align*}
  and, in particular $c'$ may simply be $c_2$ (but it also may be some
  other candidate).  From the above inequality and from
  inequality~\eqref{eq:proper_differences} (which holds for $Q_3$ as
  well) we get that:
  \begin{align*}
    \singlescore_{Q_3}(c_1) - \singlescore_{Q_3}(c) < \singlescore_{Q_3}(c) -    \singlescore_{Q_3}(c') \text{.}
  \end{align*}
  We note that in $Q_3$ committee $W'$ has higher score than any
  committee containing $c$ (otherwise, since $W'$ is winning in $Q_3$
  and by the above analysis, it would mean that $W_1$ is winning in
  $Q_3$, which is not the case).  

  Next, we construct election $Q_3'$ by swapping candidates $c_1$ and
  $c'$ in each vote in $Q_3$. Committee $W'$ is also winning in $Q_3'$
  and thus it has higher score in $Q_3'$ than any committee containing
  $c$. Similarly, by symmetry, we infer that:
  \begin{align*}
    \singlescore_{Q_3'}(c') - \singlescore_{Q_3'}(c) < \singlescore_{Q_3'}(c) - \singlescore_{Q_3'}(c_1) \text{.}
  \end{align*}
  Finally, we construct election $Q_4$ by taking one copy of $Q_3$ and
  one copy of $Q_3'$. Observe that:
  \begin{align*}
    \singlescore_{Q_4}(c) &= \singlescore_{Q_3}(c) + \singlescore_{Q_3'}(c) \\
    &> \singlescore_{Q_3}(c_1) - \singlescore_{Q_3}(c) + \singlescore_{Q_3}(c') \\
    &\qquad + \singlescore_{Q_3'}(c') - \singlescore_{Q_3'}(c) + \singlescore_{Q_3'}(c_1) \\
    &= \singlescore_{Q_4}(c_1) + \singlescore_{Q_4}(c') - \singlescore_{Q_4}(c).
  \end{align*}
  We can rewrite the above inequality as:
  \begin{align*}
    \singlescore_{Q_4}(c) > \frac{1}{2}\Big( \singlescore_{Q_4}(c_1) +
    \singlescore_{Q_4}(c') \Big) \text{.}
  \end{align*}
  Since $\singlescore_{Q_4}(c_1) = \singlescore_{Q_4}(c')$ (election
  $Q_4$ is symmetric with respect to $c_1$ and $c'$) we get that
  $\singlescore_{Q_4}(c) > \singlescore_{Q_4}(c')$ and
  $\singlescore_{Q_4}(c) > \singlescore_{Q_4}(c_1)$. Thus, the
  $f_{m,1}$-score of $c$ in $Q_4$ is among the $f_{m,1}$-scores of the
  $(p-1)$ top-scored candidates. From our inductive assumption and
  from committee-enlargement monotonicity we infer that each winning
  committee in $Q_4$ must contain $c$. This contradicts the fact that
  $W'$ is winning in $Q_4$, and proves
  equation~\eqref{eq:equalDerivatives}. 
  Thus setting: $$\alpha =
  \frac{f_{m,p}(I^*) - f_{m,p}(J^*)}{\sum_{i \in I^*}f_{m,1}(i) -
    \sum_{i \in J^*}f_{m,1}(i)},$$ we get that for any two consecutive
  elements, $I$ and $J$, on path $\calS$ it holds that:
  \begin{align}
    f_{m,p}(I) - f_{m,p}(J) = \alpha\Big(\sum_{i \in I}f_{m,1}(i) -
    \sum_{i \in J}f_{m,1}(i)\Big) \text{.}
  \end{align}
  By a simple induction over the path $\calS$ we can show that the
  above equality holds for any $I$ and $J$ ($I$ and $J$ do not have to
  be consecutive elements in $\calS$). Consequently, we get that
  $f_{m,p}$ is a linear transformation of the function $g_{m,p} (I) =
  \sum_{i \in I}f_{m,1}(i)$, thus they yield the same committee
  scoring rule. This proves our inductive step, and completes the
  proof.
\end{proof}

Elkind et al.~\cite{elk-fal-sko-sli:j:multiwinner-properties} and
Barber\`a and~Coelho~\cite{bar-coe:j:non-controversial-k-names} point
out that committee-enlargement monotonicity is an extremely natural
requirement for multiwinner rules whose role is to select committees
of individually excellent candidates. For example, if some $k$
candidates are good enough to be shortlisted for receiving some award,
then increasing $k$ should not lead to any of them losing their
nominations.
Thus, intuitively, Theorem~\ref{thm:committee_monot_and_sep} says that
if one is interested in a committee scoring rule for choosing
individually excellent candidates, then one should look within the
class of separable rules. This refines and reinforces the
recommendation provided by Theorem~\ref{thm:weaklysep}, which
suggested looking among weakly separable rules.

Yet, one could challenge this recommendation. For example, SNTV is
separable, but it is also representation-focused and there is some
evidence that its behavior is closer to that of the
Chamberlin--Courant rule (which is seen as selecting committees
representing a diverse spectrum of opinions; recall the discussion
after Theorem~\ref{thm:representation_focused}), than to that of, say,
$k$-Borda (which is seen as selecting individually excellent
candidates). Such evidence is provided, for example, by Elkind et
al.~\cite{elk-fal-las-sko-sli-tal:c:2d-multiwinner}, who evaluated a
number of committee scoring rules experimentally, by computing their
results on elections obtained from several two-dimensional Euclidean
models and presenting them graphically.
Nonetheless, in real-life settings even SNTV is sometimes used for
choosing individually excellent candidates. As a piece of anecdotal
evidence, let us mention that while preparing this paper, we have ran
into a news article that listed $10$ best ski-jumpers of all time. The
criterion for inclusion on that list was the number of times a given
sportsman had won an individual competition of the ski-jumping World
Cup. In other words, all the individual World Cup competitions that
ever took place were seen as ``voters,'' ranking all the sportsmen
from the winner to the loser, and then SNTV was used to select the
``committee'' of $10$ best ski-jumpers of all
time.\footnote{Naturally, the sets of ski-jumpers participating in the
  contests were often different. Formally, we would say that the
  participating sportsmen were ranked according to their result in the
  competition and all the non-participating ones were ranked below, in
  some arbitrary order. Since we are using SNTV, the order in which
  the non-participants are ranked is irrelevant.}

Theorems~\ref{thm:weaklysep}~and~\ref{thm:committee_monot_and_sep}
have yet another interesting consequence. They imply that
committee-enlargement monotonicity of a committee scoring rule implies
its non-crossing monotonicity. This is somehow surprising, since the
two variants of monotonicity seem almost unrelated as one describes
how the result of an election changes if we increase the size of the
committee and the other one---what happens when we shift a member of a
winning committee in a preference relation of a voter.
\begin{corollary}
  If a committee scoring rule is committee-enlargement monotone, then it
  is also non-crossing monotone.
\end{corollary}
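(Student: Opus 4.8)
The plan is simply to chain the two characterization theorems already established in this section, via the elementary observation that separability is a strengthening of weak separability. First I would invoke Theorem~\ref{thm:committee_monot_and_sep}: if $\calR_f$ is committee-enlargement monotone, then it is separable, so there is a single family of single-winner scoring functions $(\gamma_{m,k})_{k\le m}$ with $\gamma_{m,1} = \cdots = \gamma_{m,m}$ for every $m$ such that $f_{m,k}(i_1,\ldots,i_k) = \sum_{t=1}^{k}\gamma_{m,k}(i_t)$.

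Next I would observe that every separable rule is, in particular, weakly separable: the very same decomposition witnesses weak separability once we drop the additional requirement that the functions $\gamma_{m,1},\ldots,\gamma_{m,m}$ coincide. Hence $\calR_f$ is weakly separable. Finally, I would apply Theorem~\ref{thm:weaklysep}, by which every weakly separable committee scoring rule is non-crossing monotone, and conclude that $\calR_f$ is non-crossing monotone.

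There is essentially no obstacle here; the argument is a two-line syllogism, and the only point worth a remark is that it goes through even for degenerate (trivial) rules, since that case is already absorbed into the statement of Theorem~\ref{thm:committee_monot_and_sep} (via Lemma~\ref{triviality}), and in any event a degenerate rule outputs every committee and so is vacuously non-crossing monotone. If desired, one may also note that the converse implication fails: Bloc is weakly separable, hence non-crossing monotone, but it is not separable, hence not committee-enlargement monotone, which shows that the containment of classes is proper.
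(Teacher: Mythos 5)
Your proposal is correct and is exactly the argument the paper intends: the corollary follows by chaining Theorem~\ref{thm:committee_monot_and_sep} (committee-enlargement monotone $\Rightarrow$ separable) with the trivial inclusion of separable rules into weakly separable ones and then Theorem~\ref{thm:weaklysep} (weakly separable $\Rightarrow$ non-crossing monotone). The extra remarks about degenerate rules and the failure of the converse (via Bloc) are accurate but not needed.
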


\skippedproof{

{\large \bf The original theorem statement comes below.}

\begin{theorem}\label{thm:prefixMonAndDecomposable2}
  Let $\calR_f$ be a decomposable committee scoring rule, i.e., let
  $\{\gamma_i\}_{i\in[k]}$ be a family of scoring functions such that
  for each sequence of $k$ positions $(\ell_1, \ldots, \ell_k)$ it
  holds that:
  \begin{align*}
    f(\ell_1, \ldots, \ell_k) = \gamma_1(\ell_1) +\gamma_2(\ell_2)+
    \cdots + \gamma_k(\ell_k) \textrm{.}
  \end{align*}
  \begin{enumerate}
  \item A necessary condition for $\calR_f$ to be prefix-monotone is
    that for each $i, j \in [k]$, $j>i$, and each $p \in [m]$, $j+1
    \leq p < m - (k-i)$ it must hold that:
    \begin{align*}
      \gamma_i(p) - \gamma_i(p+1) \geq \gamma_j(p) - \gamma_j(p+1)
      \textrm{.}
\end{align*}
\item A sufficient condition for $\calR_f$ to be prefix-monotone is that for each $i, j \in [k]$, $j>i$, and each $p \in [m]$, $j \leq p < m - (k-i)$, it must hold that:
\begin{align*}
\gamma_i(p) - \gamma_i(p+1) \geq \gamma_j(p) - \gamma_j(p+1) \textrm{.}
\end{align*}
and that for each $i \in [k]$ and each $p, p' \in [m-1]$, $p < p'$, it must hold that:
\begin{align*}
\gamma_i(p) - \gamma_i(p+1) \geq \gamma_i(p') - \gamma_i(p'+1) \textrm{.}
\end{align*}
\end{enumerate}
\end{theorem}

\begin{proof}
  First, let us prove the first statement from the thesis of this
  theorem. For the sake of contradiction let us assume that there
  exist integers $i, j \in [k]$, $j>i$, and $p \in [m]$, $j+1 \leq p <
  m - (k-i)$, such that it holds that:
  \begin{align*}
    \gamma_i(p) - \gamma_i(p+1) < \gamma_j(p) - \gamma_j(p+1)
    \textrm{.}
  \end{align*}

  Let $E = (C,V)$ be an election with candidate set $C = \{c_1,
  \ldots, c_m\}$ and $m!$ voters $v_1, \ldots, v_{m!}$, one for each
  possible preference order. By symmetry, any size-$k$ subset $W$ of
  $C$ is a winning committee of $E$ under $\calR_f$. Further, consider
  an arbitrary vote $v$ from the election; let $C_1$ be the set of
  candidates that $v$ ranks at positions $(2, \ldots, j, p+1,
  m-(k-j)+1, \ldots, m)$ and let $C_2$ be the set of candidates that
  $v$ ranks at positions $(2, \ldots, i, p+1, m-(k-i)+1, \ldots,
  m)$. Let $E'$ be the election obtained by shifting one position up
  each candidate standing on a position from $\{2, \ldots, i, p+1\}$
  in $v$. In comparison with $E$, in $E'$ the score of $C_1$ increased
  by:
  \begin{align*}
    \sum_{t=1}^{i-1} \big(\gamma_t(t) - \gamma_t(t+1)\big) +
    \big(\gamma_j(p) - \gamma_j(p+1)\big) \textrm{.}
  \end{align*}
  The score of $C_2$ increased by:
  \begin{align*}
    \sum_{t=1}^{i-1} \big(\gamma_t(t) - \gamma_t(t+1)\big) +
    \big(\gamma_i(p) - \gamma_i(p+1)\big) \textrm{.}
  \end{align*}
  Since the score of $C_1$ increased more, $C_2$ cannot be the winner
  in $E'$. This, however, contradicts prefix-monotonicity.

  Now, let us prove that the second condition from the thesis of this
  theorem is sufficient for $\calR_f$ to be prefix-monotone. Consider
  elections where a committee $W$ is a winner. Let $j \in [k]$ and let
  $E'$ be elections obtained from $E$ by shifting forward by one
  position each of the first $j$ members of $W$ in some vote $v$. We
  will show that $W$ is a winning committee in $E'$. Let $(\ell_1,
  \ldots, \ell_k)$ be the increasing sequence of positions that the
  members of $W$ take in $v$. In comparison with $E$, in $E'$ the
  score of $W$ increased by:
  \begin{align*}
    \sum_{t=1}^j \big(\gamma_t(\ell_t - 1) - \gamma_t(\ell_t)\big)
    \textrm{.}
  \end{align*}
  Let us now assess by how much the score of some other committee,
  $W'$, increased. Let us fix $t$, $t\in[j]$. Let $c_t$ be the
  candidate standing on position $\ell_t$ in $v$. If $c_t \notin W'$,
  then shifting $c_t$ one position up has no positive effect on the
  score of $W'$. Consider the case when $c_t \in W'$. Let $x_t$ denote
  the position of $c_t$ within $W'$ according to $v$ (for instance, if
  $c_t$ is most preferred among members of $W'$ in $v$, then $x_t =
  1$). Now, we consider two cases:
  \begin{description}
  \item[$\ell_t = x_t$.] This means that the first $t$ members of $W'$
    stand on $t$ first positions in $v$. Thus, shifting $c_t$ one
    position up has no effect on the score of $W'$ (effectively, such
    a shift is an operation that swaps $c_t$ with some other member of
    $W'$).

  \item[$\ell_t \geq x_t + 1$.] We observe that $\ell_t < m - (k-t)$
    (there are $(k-t)$ members of $W$ standing on positions greater
    than $\ell_t$). If $x_t \geq t$, then from the second condition of
    the theorem, we get that $\big(\gamma_{x_t}(\ell_t - 1) -
    \gamma_{x_t}(\ell_t)\big)$ is not greater than
    $\big(\gamma_t(\ell_t - 1) - \gamma_t(\ell_t)\big)$. Thus, the
    increase of the score of $W'$ due to shifting $c_t$ one position
    up is not greater than the increase of the score of $W$ due to
    shifting $c_t$ one position up. If $x_t < t$, then we observe
    that:
    \begin{align*}
      \big(\gamma_{x_t}(\ell_t - 1) - \gamma_{x_t}(\ell_t)\big) \leq
      \big(\gamma_{x_t}(\ell_{x_t} - 1) -
      \gamma_{x_t}(\ell_{x_t})\big)
    \end{align*}
    Thus, the increase of the score of $W'$ due to shifting $c_t$ one
    position up is not greater than the increase of the score of $W$
    due to shifting the candidate on position $\ell_{x_t}$ one
    position up.
  \end{description}
  From the above reasoning we see that for each $t\in[j]$ the increase
  of the score of $W'$ due to shifting $c_t$ one position up is no
  greater than the increase of the score of $W$ due to shifting some
  other candidate $c_{t'}$ ($t' \leq t$) one position up. Further,
  from the construction in the two cases above, we observe that if we
  assign to each $c_t \in W'$ a candidate $c_{t'}$ as done before,
  then each $c_{t'}$ will be assigned to a single candidate
  only. Thus, we get that the score of $W'$ increases in $E'$ by a
  value that is not greater than the increase of the score of
  $W$. Since $W'$ was chosen arbitrarily, we get that $W$ is a winner
  in $E'$, which completes the proof.
\end{proof}
}

\section{Related Work}
\label{sec:related}

Over the last few years, multiwinner voting has attracted significant
interest within the computational social choice literature, but it has
also been studied for much longer within social choice theory and
within economics. Below we briefly review this literature (for a more
detailed review, we point the readers to the overview of Faliszewski
et al.~\cite{fal-sko-sli-tal:b:multiwinner-trends}; we have also
mentioned many related papers in the context of respective results).

Axiomatic studies of voting rules were initiated by
Arrow~\cite{arr:b:polsci:social-choice}, and in a somehow more narrow
framework, by May~\cite{mayAxiomatic1952}.  Single-winner scoring
rules are perhaps the best understood among single-winner election
systems.  Axiomatic characterizations of this class were provided,
e.g., by G{\"a}rdenfors~\cite{gardenfors73:scoring-rules},
Smith~\cite{smi:j:scoring-rules}, and
Young~\cite{you:j:scoring-functions}, and in a more general setting,
by Myerson~\cite{Myerson1995} and Pivato~\cite{pivato2013variable}.
More specific axiomatic characterizations of single-winner scoring
rules include those of the Borda rule~\cite{youngBorda, hansson76,
  fishburnBorda, smi:j:scoring-rules}, of the Plurality
rule~\cite{RichelsonPlurality, Ching1996298}, and of the Antiplurality
rule~\cite{Barbera198249} (see also the overviews of Chebotarev and
Shamis~\cite{che-sha:j:scoring-rules} and of
Merlin~\cite{merlinAxiomatic}).  Classic works on axiomatic properties
of multiwinner rules include those of Dummett~\cite{dum:b:voting},
Gehrlein~\cite{geh:j:condorcet-committee}, Felsenthal and
Maoz~\cite{fel-mao:j:norms}, Debord~\cite{deb:j:k-borda},
Ratliff~\cite{ratliff2003some}, and Barber\`a and
Coelho~\cite{bar-coe:j:non-controversial-k-names}.

Our work mostly builds on that of Elkind et
al.~\cite{elk-fal-sko-sli:j:multiwinner-properties} where the authors
introduced the class of committee scoring rules and many of the
notions on which we rely, such as candidate monotonicity, non-crossing
monotonicity and committee enlargement-monotonicity (regarding the
latter one, see also the work of Barber\`a and
Coelho~\cite{bar-coe:j:non-controversial-k-names}).  In particular,
Elkind et al.~\cite{elk-fal-sko-sli:j:multiwinner-properties}
identified the classes of (weakly) separable and
representation-focused rules and provided some of their basic
features.  OWA-based rules were introduced by
Skowron et al.~\cite{sko-fal-lan:c:collective}, who analyzed their
computational properties (and who, in fact, studied a somewhat more
general model).
Faliszewski et
al.~\cite{fal-sko-sli-tal:c:top-k-counting,fal-sko-sli-tal:c:paths}
introduced the class of top-$k$-counting rules and the $\ell_p$-Borda
and $q$-HarmonicBorda rules.

Recently Skowron et~al.~\cite{sko-fal-sli:t:axiomatic-committee}
characterized the class of committee scoring rules using the axioms of
consistency, symmetry, continuity, and weak efficiency.  Our paper can
be seen as complementary to theirs: They study committee scoring rules
as opposed to all the other multiwinner rules, whereas we focus on the
internal structure of the class.

Aziz et
al.~\cite{azi-bri-con-elk-fre-wal:j:justified-representation,azi-gas-gud-mac-mat-wal:c:approval-multiwinner}
studied a class of approval-based rules that is very similar to the
class of committee scoring rules (the class was first introduced by
Thiele~\cite{Thie95a} in the 19th century, but was forgotten for some
time; some of Thiele's rules were recalled by
Kilgour~\cite{kil-handbook} and then by Aziz et al.).  Lackner and
Skowron~\cite{lac-sko:t:approval-thiele} studied axiomatic
properties of these rules and highlighted their axiomatic similarity
to committee scoring rules.
Recently, monotonicity notions similar to those studied in this paper
were also considered in the context of approval-based multiwinner
rules~\cite{san-fis:t:approval-monotonicity,lac-sko:t:approval-thiele}.  For more general
discussions of the properties of approval-based rules we point the
reader to the work of Kilgour and
Marshall~\cite{kil-mar:j:minimax-approval}.

The study of computational properties of committee scoring rules in
general, and of specific rules, such as Chamberlin--Courant and
Proportional Approval Voting, has attracted significant
attention. This line of work has started with the paper of Procaccia
et al.~\cite{pro-ros-zoh:j:proportional-representation}, who have
shown that an approval-based variant of Chamberlin--Courant is
$\np$-hard to compute. The same result for the classic, Borda-based
variant was shown by Lu and
Boutilier~\cite{bou-lu:c:chamberlin-courant}. Betzler et
al.~\cite{bet-sli-uhl:j:mon-cc} considered parameterized complexity of
the rule, whereas the study of approximation algorithms was initiated
by Lu and Boutilier~\cite{bou-lu:c:chamberlin-courant}, who have given
a polynomial-time $(1-\frac{1}{e})$-approximation algorithm (this
algorithm, based on the greedy procedure of Nemhauser et
al.~\cite{nem-wol-fis:j:submodular} for submodular functions, has
since then been adapted to other committee scoring rules as well). Skowron et
al.~\cite{sko-fal-sli:j:multiwinner} improved this result by providing
a polynomial-time approximation scheme for the Borda-based variant;
Skowron and Faliszewski~\cite{sko-fal:j:maxcover} gave an FPT
approximation scheme for the approval-based variant (and argued why
the $(1-\frac{1}{e})$-approximation algorithm is the best we can hope
for among polynomial-time algorithms).
The complexity of Chamberlin--Courant was also studied in much depth
for various restricted domains, including the single-peaked
domain~\cite{bet-sli-uhl:j:mon-cc,cor-gal-spa:c:sp-width,pet:t:total-unimodularity},
the single-crossing domain~\cite{sko-yu-fal-elk:j:mwsc}, and a number
of
others~\cite{yu-cha-elk:c:multiwinner-trees,lac-pet:c:spoc,elk-pet:c:nice-trees}.
Faliszewski et
al.~\cite{fal-sli-sta-tal:c:cc-mon-clustering,fal-lac-pet-tal:c:csr-heuristics}
considered a number of heuristic algorithms.

Proportional Approval Voting (PAV) received a bit less attention than
the Chamberlin--Courant rule, but due to the work of Aziz et
al.~\cite{azi-bri-con-elk-fre-wal:j:justified-representation} on
justified representation, it is now being studied with increasing
interest (briefly put, Aziz et al\@. have shown that PAV is remarkably
good at providing committees that represent the voters proportionally,
as also confirmed by Brill et al.~\cite{bri-las-sko:c:apportionment};
see the work of Brill et al.~\cite{aaai/BrillFJL17-phragmen} for
another rule with similar properties). The rule was shown to be
$\np$-hard to
compute~\cite{azi-gas-gud-mac-mat-wal:c:approval-multiwinner,sko-fal-lan:c:collective},
but the standard greedy $(1-\frac{1}{e})$-approximation algorithm
works for it. Very recently, Byrka et
al.~\cite{byr-sko-sor:t:pav-approx} have shown a different, apparently
much more powerful algorithm.  The rule was also considered in the
context of restricted domains~\cite{pet:t:total-unimodularity}. FPT approximation schemes
for PAV and other OWA-based rules were provided by Skowron~\cite{sko:j:submodular}.

More general computational results regarding committee scoring rules
were provided by Skowron et al.~\cite{sko-fal-lan:c:collective}, who
studied the complexity and approximability of OWA-based rules, by
Faliszewski et al.~\cite{fal-sko-sli-tal:c:top-k-counting}, for
top-$k$-counting rules, by Peters~\cite{pet:t:total-unimodularity},
for OWA-based rules in the single-peaked domain, and by Faliszewski et
al.~\cite{fal-lac-pet-tal:c:csr-heuristics}, who introduced several
general-purpose heuristic algorithms.

Naturally, there exist many interesting multiwinner rules beyond the
class of committee scoring rules.  These include, for example, Single
Transferable Vote (see, e.g., the work of Tideman and
Richardson~\cite{tid-ric:j:stv}), a number of rules based on the
Condorcet
principle~\cite{AEFLS17:multiwinner-condorcet,bar-coe:j:non-controversial-k-names,dar:j:condorcet-hard,journals/scw/ElkindLS15,fis:j:condorcet-committee,fis:j:majority-committees,geh:j:condorcet-committee,ratliff2003some,sek-sik-xia:c:condorcet-bundling},
Monroe's rule~\cite{mon:j:monroe}, and different variants of the rule
invented by
Phragm\'en~\cite{Phra94a,Phra95a,Phra96a,Janson16arxiv,aaai/BrillFJL17-phragmen}.
For an overview of electoral systems used to select committees of
representatives in practice, we refer the reader to the book of
Lijphart and Grofman~\cite{grofmanChoosingElectoral}.

\section{Conclusion}
\label{sec:concl}

We have studied the class of committee scoring rules and explored the
interesting hierarchy formed by its subclasses studied to date
(including the class of decomposable rules introduced in this paper).
We have highlighted several fundamental properties of committee
scoring rules, ranging from the nonimposition property (i.e., that for
every committee and every nontrivial committee scoring rule, there is
an election where this committee wins uniquely under this rule), to
quite a varied landscape of monotonicity notions.  This allowed us to
partially match syntactic properties of such rules to their normative
properties.

There is a number of follow-up directions for this research. For
example, whole-committee monotonicity (where all the members of the
committee are shifted forward) is an interesting property. The
axiomatic characterization of OWA-based rules remains an open
problem. Further, it is interesting to see whether there exist other
properties, e.g., those which relate to proportionality of
representation, that can be used to characterize different (subclasses
of) committee scoring rules, or other multiwinner election systems
(the works of Aziz et
al.~\cite{azi-bri-con-elk-fre-wal:j:justified-representation} and
Lackner and Skowron~\cite{lac-sko:t:approval-thiele} made some headway
in this direction). A formal axiomatic study which would allow to
compare committee scoring rules to other multiwinner election systems
is an important, yet challenging question.

\subsubsection*{Acknowledgments.} 
Piotr Faliszewski was supported by the National Science Centre,
Poland, under project 2016/21/B/ST6/01509.  Arkadii Slinko was
supported by the Royal Society of NZ Marsden Fund UOA-254. Piotr
Skowron was supported by ERC-StG 639945 and by a Humboldt Research
Fellowship for Postdoctoral Researchers.  Nimrod Talmon was supported
by a postdoctoral fellowship from I-CORE ALGO.

\bibliographystyle{abbrv}
\bibliography{grypiotr2006}

\appendix

\section{Appendix}

In this section we prove two basic properties of committee scoring
rules.  The first one is nonimposition, which requires that for every
committee there is an election where this committee wins uniquely
(recall Definition~\ref{def:nonimposition} in
Section~\ref{sec:basic}).  We show that all non-degenerate committee
scoring rules have the nonimposition property.  In the proof we use
elections $\zeta(S)$ introduced in
Section~\ref{sec:com_mon_sep_rules}.

\newtheorem*{lemnon}{Lemma~\ref{lem:nonimposition}}

\begin{lemnon}
  Let $\calR_f$ be a committee scoring rule defined by a family of
  committee scoring functions $f=(f_{m,k})_{k\le m}$. The rule
  $\calR_f$ satisfies the nonimposition property if and only if every
  committee scoring function in $f$ is nontrivial.
\end{lemnon}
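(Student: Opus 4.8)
The plan is to prove the two directions of the biconditional separately. The ``only if'' direction I would do by contraposition: suppose some $f_{m,k}$ is constant, for some $m$ and some $k<m$. Pick any candidate set $C$ with $|C|=m$ and any $W\subseteq C$ with $|W|=k$. For every election $E=(C,V)$, every size-$k$ committee receives $f_{m,k}$-score equal to $|V|$ times that constant, so $\calR_f(E,k)$ is the set of all $\binom{m}{k}$ size-$k$ committees of $C$; since $1\le k<m$ this set has at least two members, so $\calR_f(E,k)\neq\{W\}$, contradicting Definition~\ref{def:nonimposition}. (For $k=m$ there is a unique size-$m$ committee, so nonimposition for it holds vacuously for any election regardless of $f_{m,m}$; this is why the statement should be read with the quantifier over $k<m$ --- note that $[m]_m$ is a singleton and $f_{m,m}$ is always constant on it.)

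For the ``if'' direction, I would assume that every $f_{m,k}$ with $k<m$ is nontrivial, fix $C$ with $|C|=m$ and $W\subseteq C$ with $|W|=k$, and exhibit an election. If $k=m$ then $W=C$ is the only size-$k$ committee and any election witnesses nonimposition, so assume $k<m$; then $f_{m,k}$ is non-constant, which is exactly the non-degeneracy hypothesis required by Lemma~\ref{lemma:profile_zeta2}. Take $E=\zeta(W)=\sum_{c\in W}\zeta(c)$, the election from Section~\ref{sec:com_mon_sep_rules}. Since $|W|=k$, the hypothesis $|S|\geq k$ of Lemma~\ref{lemma:profile_zeta2} is met with $S=W$, and the lemma gives that the winning size-$k$ committees of $\zeta(W)$ are exactly the committees $W'\subseteq W$; as $|W'|=k=|W|$, the only such $W'$ is $W$ itself. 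Hence $\calR_f(\zeta(W),k)=\{W\}$, which is precisely nonimposition for the pair $(C,W)$.

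I do not expect a genuinely hard step here: the real content sits in Lemma~\ref{lemma:profile_zeta2} (which in turn rests on Lemma~\ref{lemma:profile_zeta1}, i.e., on the symmetry of $\zeta(c)$ together with non-degeneracy). The only points deserving a line of care are the $k=m$ boundary case noted above and the bookkeeping that ``$f_{m,k}$ nontrivial'' is exactly the non-degeneracy hypothesis the cited lemmas need. Should one prefer an argument independent of the $\zeta$ constructions, the alternative would be to build a witnessing election by hand --- for instance, sufficiently many voters whose top $k$ positions are occupied by the members of $W$ in all $k!$ orders, plus a handful of tie-breaking voters favoring $W$ --- and then verify that $W$ strictly beats every other size-$k$ committee using the non-constancy of $f_{m,k}$; that verification is where the actual effort would lie, and it is exactly what invoking Lemma~\ref{lemma:profile_zeta2} lets us skip.
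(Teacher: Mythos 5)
Your proof is correct and takes essentially the same route as the paper's: the ``if'' direction is witnessed by the election $\zeta(W)$ via Lemma~\ref{lemma:profile_zeta2}, and the ``only if'' direction is the observation that a constant $f_{m,k}$ forces all size-$k$ committees to tie. Your extra care about the $k=m$ boundary case is a reasonable clarification that the paper's terse proof leaves implicit.
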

\begin{proof}
  The trivial rule does not satisfy the nonimposition
  property. If $\calR_f$ is nontrivial, then for each committee $W$,
  by Lemma~\ref{lemma:profile_zeta2}, election $\zeta(W)$ witnesses
  that $\calR_f$ satisfies nonimposition.
\end{proof}

The next observation will be useful a bit later.

\begin{observation}
  \label{obs:committeesW1W2}
  Consider two committees, $W_1$ and $W_2$, such that $|W_1 \cap W_2|
  = k-1$. In election $\zeta(W_1 \cup W_2) + \zeta(W_1 \cap W_2)$,
  committees $W_1$ and $W_2$ are the only winning ones. Indeed, by
  Lemma~\ref{lemma:profile_zeta2} we know that all $W$ with $W
  \subseteq W_1 \cup W_2$ are winning in $\zeta(W_1 \cup W_2)$ and
  that all $W$ with $W_1 \cap W_2 \subseteq W$ are winning in
  $\zeta(W_1 \cap W_2)$. The only two committees winning in both
  elections are $W_1$ and $W_2$. Since committee scoring rules satisfy
  consistency~\cite{sko-fal-sli:t:axiomatic-committee}, we conclude
  that $W_1$ and $W_2$ are the only winners in $\zeta(W_1 \cup W_2) +
  \zeta(W_1 \cap W_2)$.
\end{observation}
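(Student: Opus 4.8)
The plan is to derive the claim directly from Lemma~\ref{lemma:profile_zeta2} together with the consistency property of committee scoring rules recalled in Section~\ref{sec:basic}. First I would record the sizes of the two candidate sets involved: since $|W_1| = |W_2| = k$ and $|W_1 \cap W_2| = k-1$, we have $|W_1 \cup W_2| = k+1 \ge k$ and $|W_1 \cap W_2| = k-1 < k$. Writing $W_1 \setminus W_2 = \{c_1\}$ and $W_2 \setminus W_1 = \{c_2\}$, observe that $W_1 \cup W_2 = (W_1 \cap W_2) \cup \{c_1, c_2\}$, and that the only size-$k$ sets squeezed between $W_1 \cap W_2$ and $W_1 \cup W_2$ are $W_1$ and $W_2$ themselves.

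Next I would apply Lemma~\ref{lemma:profile_zeta2} twice. Taking $S = W_1 \cup W_2$ (so $|S| \ge k$), the lemma says that the winners of $\zeta(W_1 \cup W_2)$ are exactly the size-$k$ committees contained in $W_1 \cup W_2$. Taking $S = W_1 \cap W_2$ (so $|S| < k$), the lemma says that the winners of $\zeta(W_1 \cap W_2)$ are exactly the size-$k$ committees containing $W_1 \cap W_2$. A size-$k$ committee $W$ with $W_1 \cap W_2 \subseteq W \subseteq W_1 \cup W_2$ must equal $(W_1 \cap W_2) \cup \{c\}$ for some $c \in \{c_1, c_2\}$, i.e. $W \in \{W_1, W_2\}$; conversely $W_1$ and $W_2$ plainly satisfy both membership conditions. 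Hence $W_1$ and $W_2$ are precisely the committees that win in both $\zeta(W_1 \cup W_2)$ and $\zeta(W_1 \cap W_2)$.

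Finally, since committee scoring rules are consistent — if two elections over the same candidate set share some winning committees, then the winners of the merged election are exactly those shared committees — and since the two elections above do share the winners $W_1$ and $W_2$, I conclude that $W_1$ and $W_2$ are the only winning committees of $\zeta(W_1 \cup W_2) + \zeta(W_1 \cap W_2)$. This argument is short; the only point that needs care is the hypothesis of Lemma~\ref{lemma:profile_zeta2}, which requires the underlying rule to be non-degenerate for its description of the $\zeta(S)$-winners to be valid, and (as elsewhere in this section) this is the standing assumption on the rules in question, so no real obstacle arises.
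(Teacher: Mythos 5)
Your proof is correct and follows essentially the same route as the paper's: two applications of Lemma~\ref{lemma:profile_zeta2} to identify the winners of $\zeta(W_1 \cup W_2)$ and $\zeta(W_1 \cap W_2)$, the observation that $W_1$ and $W_2$ are the only size-$k$ sets sandwiched between the intersection and the union, and then consistency to conclude for the merged election. Your explicit note about the non-degeneracy hypothesis of Lemma~\ref{lemma:profile_zeta2} is a small but welcome addition of care that the paper leaves implicit.
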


Next we give a proof of Lemma~\ref{lem:unique}, by showing that two
committee scoring functions (for a given number of candidates $m$ and
size $k$ of the committees) define the same rule (for these $m$ and~$k$)
if and only if they are linearly related.

\newtheorem*{ulem}{Lemma~\ref{lem:unique}}

\begin{ulem}
  \lemuni
\end{ulem}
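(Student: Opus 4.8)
The plan is to fix a number of candidates $m$ and a committee size $k\le m$ and show that $f_{m,k}$ and $g_{m,k}$ are positive affine transformations of one another; since $m,k$ are arbitrary, this gives the lemma. First I would dispose of the degenerate sub-case: if $f_{m,k}$ is constant then $\calR_f$ outputs every size-$k$ committee on every $m$-candidate election, hence so does $\calR_g$, and applying this to the single-vote election $c_1\succ\cdots\succ c_m$ forces $g_{m,k}$ to be constant too, so $a_{m,k}:=1$ and $b_{m,k}:=f_{m,k}-g_{m,k}$ (a constant) work. Running this with the roles of $f$ and $g$ swapped shows $f_{m,k}$ is nonconstant iff $g_{m,k}$ is, so henceforth I assume both are nonconstant; in particular $m>k$, since for $m=k$ the set $[m]_k$ is a singleton and both functions are trivially constant. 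Throughout I rely on the elections $\zeta(c)$ and $\zeta(S)$ of Section~\ref{sec:com_mon_sep_rules} and, crucially, on Observation~\ref{obs:committeesW1W2}: for committees $W_1,W_2$ with $|W_1\cap W_2|=k-1$, in $E_0:=\zeta(W_1\cup W_2)+\zeta(W_1\cap W_2)$ exactly $W_1$ and $W_2$ win (under either rule). Let $\epsilon>0$ be the gap between the common $f_{m,k}$-score of $W_1,W_2$ in $E_0$ and the largest $f_{m,k}$-score of any other committee there.

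The first substantive step shows that $f_{m,k}$ and $g_{m,k}$ order Johnson-adjacent committee positions the same way (I call $I,J\in[m]_k$ Johnson-adjacent if, viewed as sets, $|I\cap J|=k-1$). Given such $I,J$, choose committees $W_1=D\cup\{c_1\}$, $W_2=D\cup\{c_2\}$ with $|D|=k-1$, and a single vote $v$ ranking $D$ on the $k-1$ positions common to $I$ and $J$, $c_1$ on the position unique to $I$, and $c_2$ on the position unique to $J$ (other candidates arbitrarily); then $\pos_v(W_1)=I$ and $\pos_v(W_2)=J$. In $\lambda E_0+v$, for $\lambda$ large the gap $\epsilon$ swamps the bounded contribution of $v$, so the only possible $f_{m,k}$-winners are $W_1,W_2$; hence $f_{m,k}(I)>f_{m,k}(J)$ gives $\calR_f(\lambda E_0+v,k)=\{W_1\}$, so $\calR_g(\lambda E_0+v,k)=\{W_1\}$, which (the $g_{m,k}$-scores of $W_1$ and $W_2$ in $E_0$ being equal) forces $g_{m,k}(I)>g_{m,k}(J)$. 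Symmetrically $g_{m,k}(I)>g_{m,k}(J)$ implies $f_{m,k}(I)>f_{m,k}(J)$, so in particular $f_{m,k}(I)=f_{m,k}(J)$ iff $g_{m,k}(I)=g_{m,k}(J)$.

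The heart of the argument, and the step I expect to be the main obstacle, is upgrading ``same order on edges'' to ``same ratio on edges'': I claim there is a constant $a_{m,k}>0$ with $f_{m,k}(I)-f_{m,k}(J)=a_{m,k}\big(g_{m,k}(I)-g_{m,k}(J)\big)$ for all Johnson-adjacent $I,J$. Take two edges $\{I,J\}$ and $\{I',J'\}$, each oriented (by the previous step) so that $f_{m,k}(I)-f_{m,k}(J)$, $g_{m,k}(I)-g_{m,k}(J)$, $f_{m,k}(I')-f_{m,k}(J')$ and $g_{m,k}(I')-g_{m,k}(J')$ are all positive, and assume for contradiction that $\frac{f_{m,k}(I)-f_{m,k}(J)}{f_{m,k}(I')-f_{m,k}(J')}\neq\frac{g_{m,k}(I)-g_{m,k}(J)}{g_{m,k}(I')-g_{m,k}(J')}$. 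Pick positive integers $x,y$ with $x/y$ strictly between these two ratios; then $y\big(f_{m,k}(I)-f_{m,k}(J)\big)-x\big(f_{m,k}(I')-f_{m,k}(J')\big)$ and $y\big(g_{m,k}(I)-g_{m,k}(J)\big)-x\big(g_{m,k}(I')-g_{m,k}(J')\big)$ have opposite signs. Now take one vote $v$ with $\pos_v(W_1)=I$, $\pos_v(W_2)=J$ and one vote $v'$ with $\pos_{v'}(W_1)=J'$, $\pos_{v'}(W_2)=I'$ (the same $W_1,W_2$; only each vote's internal arrangement changes), and let $E_1$ consist of $y$ copies of $v$ and $x$ copies of $v'$. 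In $E_1$ the difference of the $f_{m,k}$-scores of $W_1$ and $W_2$ equals the first quantity above and the difference of their $g_{m,k}$-scores equals the second, hence these point in opposite directions. In $\lambda E_0+E_1$ with $\lambda$ large, all committees but $W_1,W_2$ are again ruled out under $f_{m,k}$, so $\calR_f$ returns the singleton ($\{W_1\}$ or $\{W_2\}$) picked by the sign of the $f_{m,k}$-difference; but under $g_{m,k}$ the other committee has strictly larger score, contradicting $\calR_f=\calR_g$. So all edges share one ratio $a_{m,k}$; it is positive because nonconstancy of $f_{m,k}$ (plus connectivity, below) yields an edge with nonzero difference on which the signs agree, and for zero-difference edges the displayed identity holds with both sides zero.

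To finish I would telescope. The Johnson graph on $[m]_k$, with edges joining Johnson-adjacent positions, is connected (the fact used via Lemma~8 of Skowron et al.~\cite{sko-fal-sli:t:axiomatic-committee} in the proof of Theorem~\ref{thm:committee_monot_and_sep}). Fixing a reference position $R=(1,\ldots,k)$ and summing the per-edge identity along any path from an arbitrary $I$ to $R$ gives $f_{m,k}(I)-f_{m,k}(R)=a_{m,k}\big(g_{m,k}(I)-g_{m,k}(R)\big)$, that is, $f_{m,k}(I)=a_{m,k}\,g_{m,k}(I)+b_{m,k}$ with $b_{m,k}:=f_{m,k}(R)-a_{m,k}\,g_{m,k}(R)$. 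The only genuinely delicate points are in the ratio step: that a single pair $W_1,W_2$ (differing in one candidate) can occupy any prescribed pair of Johnson-adjacent positions in a chosen vote --- which holds because a committee's position in a vote depends only on the set of positions its members occupy, not on how members are matched to those positions --- and that $\lambda$ can always be taken large enough to neutralize the finitely many remaining committees.
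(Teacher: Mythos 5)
Your proposal is correct and follows essentially the same route as the paper's proof: both rely on the padding election $\zeta(W_1\cup W_2)+\zeta(W_1\cap W_2)$ to isolate two committees differing in one candidate, on realizing prescribed Johnson-adjacent committee positions in single votes, on a rational separation $x/y$ between two edge-difference ratios to force $\calR_f$ and $\calR_g$ to disagree, and on connectivity of the Johnson graph on $[m]_k$. The only difference is organizational --- you prove a uniform proportionality constant across all edges and telescope, whereas the paper normalizes $g$ at $I_{\max}$ and $I_{\min}$ and locates a sign change of $f-h$ along one monotone path --- but the underlying construction and contradiction are identical.
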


\begin{proof}
  Let us fix $m$ and $k$.  Let $I_{\max} = (1, 2, \ldots, k)$ and
  $I_{\min} = (m-k+1, m-k+2, \ldots, m)$ be two committee positions,
  the former consisting of top $k$ positions and the latter consisting
  of bottom~$k$ ones.
  The statement of the lemma clearly holds when $f_{m,k}(I_{\max}) =
  f_{m,k}(I_{\min})$ as then $\calR_f$ is trivial, and so $g$ must be
  constant. Thus from now on we assume that $f(I_{\max}) >
  f(I_{\min})$. Let $h_{m,k}$ be a linear transformation of $g_{m,k}$
  such that $f_{m,k}(I_{\max}) = h_{m,k}(I_{\max})$ and
  $f_{m,k}(I_{\min}) = h_{m,k}(I_{\min})$. It is apparent that $h$ and
  $g$ implement the same multiwinner rule. We will show that $f_{m,k}
  = h_{m,k}$, which is sufficient to complete the proof. For the sake
  of contradiction let us assume that this is not the case.

  Since $f_{m,k} \neq h_{m,k}$, there must exist $I^*$ such that
  $f_{m,k}(I^*) \neq h_{m,k}(I^*)$; let us assume that $f_{m,k}(I^*) >
  h_{m,k}(I^*)$. There exists a sequence $\calS$ of committee
  positions from $[m]_k$, starting with $I_{\max}$, containing $I^*$,
  and ending in $I_{\min}$, such that for each two consecutive
  elements, $I$ and $J$, in the sequence (i.e., when $J$ appears right
  after $I$ in the sequence) it holds that:
  \begin{enumerate}[(i)]
  \item $|I \cap J| = k-1$, and
  \item $I$ dominates $J$.
  \end{enumerate}
  For instance, for $m = 5$, $k=2$ and $I^* = (2, 4)$, the sequence
  $\calS$ could be $((1, 2), (1, 3), (1, 4), (2, 4), (2, 5), (3, 5),
  (4, 5))$ (note that this sequence does not need to contain all
  possible committee positions and, thus, it is easy to form it).

  Consider function $\psi = f_{m,k} - h_{m,k}$. Since $f(I_{\max}) =
  h(I_{\max})$ and $f(I_{\min}) = h(I_{\min})$, we have that
  $\psi(I_{\min}) = 0$ and $\psi(I_{\max}) = 0$. Additionally, we know
  that $\psi(I^*) > 0$. Thus there exist committee positions $I, J,
  I', J' \in [m]_k$ such that $J$ is right after $I$ and $J'$ is right
  after $I'$ in the sequence $\calS$, and such that $\psi(I) \leq 0$,
  $\psi(J) > 0$, $\psi(I') > 0$, and $\psi(J') \leq 0$ (it might be
  the case that $J = I'$). That is: 
  \begin{align*}
    f(I) \leq h(I) ,\quad f(J)> h(J) ,\quad f(I') > h(I'), \quad
    \text{ and}\quad f(J') \leq h(J').
  \end{align*}
  Combining these inequalities, and taking into account that $I$
  dominates $J$, and that $I'$ dominates $J'$, we get that:
  \begin{align*}
    0 \leq f(I) - f(J) < h(I) - h(J) \quad \text{and} \quad f(I') -
    f(J') > h(I') - h(J') \geq 0\text{.}
  \end{align*}
  This means that there exist two positive integers, $x, y \in
  \naturals$, such that:
  \begin{align*}
       \frac{f(I) - f(J)}{f(I') - f(J')} < \frac{y}{x} <   \frac{h(I) - h(J)}{h(I') - h(J')}
  \end{align*}
  and, in consequence:
  \begin{align*}
    x(f(I) - f(J)) < y(f(I') - f(J')) \quad \text{and} \quad x(h(I) -
    h(J)) > y(h(I') - h(J')) \text{.}
  \end{align*}
  Let us fix two committees, $W_1$ and $W_2$, with $|W_1 \cap W_2| =
  k-1$, and consider an election $E$ with $x + y$ voters, where in $x$
  votes $W_1$ stands on position $I$ and $W_2$ on position $J$, and in
  $y$ votes $W_1$ stands on position $J'$ and $W_2$ stands on position
  $I'$. We can add to $E$ a sufficient number of copies of election
  $\zeta(W_1 \cup W_2) + \zeta(W_1 \cap W_2)$ (recall
  Section~\ref{sec:com_mon_sep_rules} for the definition of
  $\zeta$). By Observation~\ref{obs:committeesW1W2}, we know that in
  election $\zeta(W_1 \cup W_2) + \zeta(W_1 \cap W_2)$ only committees
  $W_1$ and $W_2$ are winning. Consequently, if we add a sufficient
  number of copies of this election to $E$, we can ensure that in $E$
  only $W_1$, $W_2$, or both $W_1$ and $W_2$ can be winners. Since the
  elections which we added to $E$ are symmetric with respect to $W_1$
  and $W_2$, the outcome of the election (i.e., whether $W_1$ or $W_2$
  is winning) depends only on election $E$. However, according to $f$
  committee $W_1$ has lower score than $W_2$, so the latter should be
  winning. Yet, by looking at $h$ we come to the opposite
  conclusion. This gives a contradiction and proves that $f = h$. This
  completes the proof.
\end{proof}

\end{document}